\documentclass[review]{elsarticle}
\usepackage[utf8]{inputenc}
\usepackage{booktabs}
\usepackage{longtable}
\usepackage{tabularx,arydshln,hyperref,amssymb,amsmath}

\usepackage{todonotes}

\usepackage{todonotes}
\usepackage{paralist}
\usepackage{xspace}
\usepackage{times}
\usepackage{wrapfig}
\usepackage{subcaption}
\usepackage{enumitem}
\usepackage{pgfplots}














\newcommand{\set}[1]{\{#1\}}                      


\newcommand{\tup}[1]{\langle #1\rangle}            


\newcounter{dummy} 
\newcounter{dummy1} 
\newcounter{dummy2}
 
\newcounter{dummy4}
\newcounter{dummy5} 
\newcounter{dummy6}
\newcounter{dummy7}

\usepackage[thmmarks,amsmath]{ntheorem}


\usepackage{setspace}

\theoremseparator{.}
\theorembodyfont{\itshape}
\theoremsymbol{$\triangleleft$}
\qedsymbol{\ensuremath{\Box}}

\theorembodyfont{\itshape\doublespacing}

\newtheorem{theorem}[dummy]{Theorem}
\newtheorem{lemma}[dummy1]{Lemma}

\theorembodyfont{\normalfont\doublespacing}
\newtheorem{definition}[dummy2]{Definition}

\newtheorem{example}[dummy4]{Example}
\newtheorem{remark}[dummy5]{Remark}
\newtheorem{corollary}[dummy6]{Corollary}
\newtheorem{problem}[dummy7]{Problem}

\theoremstyle{nonumberplain}
\theoremheaderfont{\itshape}
\theorembodyfont{\normalfont\doublespacing}

\theoremseparator{.}
\theoremsymbol{\ensuremath{\dashv}}
\newtheorem{proof}[dummy6]{Proof}
\theorembodyfont{\normalfont\doublespacing}

\qedsymbol{\ensuremath{\Box}}


\newcommand{\bpwp}{bounded LSP\xspace}
\newcommand{\Bpwp}{Bounded LSP\xspace}
\newcommand{\spwp}{GSPNP\xspace}
\newcommand{\pwp}{LSP\xspace}

\newcommand{\gspn}{GSPN\xspace}

\newcommand{\pval}[1]{\ensuremath{\rho_{#1}}}
\newcommand{\sstate}[1]{\ensuremath{m_{#1}}}

\newcommand{\taskname}[1]{\textsf{#1}}


\newcommand{\LTLf}{\ensuremath{\text{LTL}_f}\xspace}
\newcommand{\LDLf}{\ensuremath{\text{LDL}_f}\xspace}

\newcommand{\declare}{Declare\xspace}
\newcommand{\pdeclare}{ProbDeclare\xspace}




\newcommand{\mathname}[1]{\ensuremath{\text{\textit{#1}}}}



\newcommand{\funsym}[1]{\mathtt{#1}}

\newcommand{\restr}[2]{{
  \left.\kern-\nulldelimiterspace 
  #1 
  \vphantom{\big|} 
  \right|_{#2} 
  }}

\definecolor{darkblue}{rgb}{0.2,0.2,0.6}
\definecolor{darkred}{rgb}{0.6,0.1,0.1}
\definecolor{darkgreen}{rgb}{0.2,0.6,0.2}

\definecolor{acsiorange}{RGB}{180,88,26}
\definecolor{acsired}{RGB}{150,0,0}
\definecolor{acsigreen}{RGB}{128,198,54}
\definecolor{acsiblue}{RGB}{43,80,150}

\definecolor{swimmyred}{RGB}{206,19,55}
\definecolor{marcofucsia1}{RGB}{203,41,123}
\definecolor{marcofucsia2}{RGB}{153,25,94}
\definecolor{marcofucsia3}{RGB}{102,29,70}
\definecolor{marcoblue1}{RGB}{17,183,225}
\definecolor{marcoblue2}{RGB}{16,163,201}
\definecolor{marcoblue3}{RGB}{5,132,165}
\definecolor{marcoblue4}{RGB}{4,105,131}
\definecolor{marcoblue5}{RGB}{0,77,128}
\definecolor{marcoorange}{RGB}{255,147,0}
\definecolor{marcogreen1}{RGB}{82,174,139}
\definecolor{marcogreen2}{RGB}{56,122,101}

\definecolor{ocolor}{HTML}{123C69}
\definecolor{pcolor}{HTML}{AC3B61}

\tikzset{connect with angle/.style = {
  to path= {
  let \p1=(\tikztostart), \p2=(\tikztotarget) in -- ++({\x2-\x1-(\y1-\y2)*tan(#1-90)},0) -- (\tikztotarget)}
  }
}

\usetikzlibrary{arrows,arrows.meta,shapes,automata,petri,positioning,calc}

\newcommand{\tasklabel}[1]{\phantom{[}\textsf{#1}\phantom{]}}
\newcommand{\wval}[1]{\ensuremath{w_{{#1}}}}
\newcommand{\rval}[1]{\ensuremath{\lambda_{{#1}}}}

\tikzstyle{tlabel}=[
  xshift=-1mm,
  yshift=-2.9mm
]

\tikzstyle{arc}=[
  very thick,
  -to,
  rounded corners = 5pt,
]

\tikzstyle{darc}=[
  very thick,
  to-to,
]

\tikzset{place/.style args={#1}{
    circle,
    very thick,
    fill=marcofucsia3!10,
    draw=marcofucsia3!100,
    font=\sf,
    minimum width=6mm,
    inner ysep=2pt,
    label=below:{#1},
  }
}

\tikzstyle{transition}=[
  rectangle,
  very thick,
  fill=marcoblue3!20,
  draw=marcoblue3!100,
  font=\sf,
  minimum width=8mm,
  minimum height=8mm,
]

\tikzstyle{silent}=[
  rectangle,
  very thick,
  text=white,
  fill=black,
  draw=black,
  font=\sf,
  label=center:\textcolor{white}{$\tau$},
  minimum width=6mm,
  minimum height=6mm,
  align=center,
]

\tikzset{tweight/.style = {
  label={[label distance=-1mm]above:{#1}}
  }
}

\tikzset{tsilent/.style={
    rectangle,
    fill=black,
    tweight={#1},
    label=below:$\tau$, 
  }
}

\tikzset{wtransition/.style 2 args={
    rectangle,
    tweight={#2},
    very thick,
    fill=marcoblue3!20,
    draw=marcoblue3!100,
    label={[yshift=.5mm,font=\sffamily]below:{#1}},
  }
}

\tikzset{wsilenttransition/.style={
    rectangle,
    tweight={#1},
    fill=black,
    draw=black,
    label={[yshift=.5mm,font=\sffamily]below:{$\tau$}},
    font=\color{white},
  }
}

\tikzstyle{immediate}=[
    minimum width = .5mm,
    inner sep = .1mm,
    minimum height = 9mm,
]

\tikzstyle{timed}=[
    minimum width=8mm,
    minimum height=8mm,
]

\newcommand{\tname}[1]{\ensuremath{t_{{#1}}}}
\newcommand{\pname}[1]{\ensuremath{q_{{#1}}}}

\tikzstyle{hcolor} = [ black ]

\tikzstyle{invisible} = [
  minimum width=10mm,
  minimum height=10mm,
]

\tikzstyle{state} = [
  circle,
  draw=black!100,
  very thick,
  fill=marcoorange!20,
  minimum height=10mm,
]

\tikzstyle{autstate} = [
  circle,
  draw=black!100,
  very thick,
  fill=pink,
  minimum height=8mm,
]

\tikzstyle{prstate} = [
  circle,
  draw=black!100,
  top color=orange!20,
  bottom color=pink,
  very thick,
  fill=white,
  minimum height=8mm,
]

\tikzset{
  lstate/.style 2 args={
    state,
    label={[yshift=2mm]center:\ensuremath{#1}},
    label={[yshift=-2mm,font=\footnotesize]center:\ensuremath{#2}}  
  }
}

\tikzset{
  lprstate/.style 2 args={
    prstate,
    label={[yshift=1mm]center:\ensuremath{#1},},
    label={[yshift=-2mm]center:\ensuremath{#2}}  
  }
}

\tikzstyle{downloop} = [
  arc,
  loop below,
  looseness=5,
  out=-60,
  in=-120,
]

\tikzstyle{uploop} = [
  arc,
  loop below,
  looseness=5,
  out=60,
  in=120,
]

\tikzstyle{rightloop} = [
  arc,
  loop below,
  looseness=5,
  out=30,
  in=-30,
]

\tikzstyle{stransition}=[
  transition,
  minimum width=6mm,
  minimum height=6mm
]

\tikzstyle{silent}=[
draw=magenta,
]

\tikzstyle{deadend}=[draw=gray,opacity=.5]

\sloppy
\begin{document}

    \title{Enjoy the Silence:\\ Analysis of Stochastic Petri Nets with Silent Transitions}

    \author[1]{Sander J.J. Leemans}
    \ead{s.leemans@bpm.rwth-aachen.de}
    \author[2]{Fabrizio Maria Maggi}
    \ead{maggi@inf.unibz.it}
    \author[2]{Marco Montali}
    \ead{montali@inf.unibz.it}

    \address[1]{RWTH Aachen, Germany}
    \address[2]{Free University of Bozen-Bolzano,
    Italy}
    %
    
    \begin{abstract}
Capturing stochastic behaviors in business and work processes is essential to quantitatively understand how nondeterminism is resolved when taking decisions within the process. This is of special interest in process mining, where event data tracking the actual execution of the process are related to process models, and can then provide insights on frequencies and probabilities. Variants of stochastic Petri nets provide a natural formal basis to represent stochastic behaviors and support different data-driven and model-driven analysis tasks in this spectrum. However, when capturing processes, such nets need to be labelled with (possibly duplicated) activities, and equipped with silent transitions that model internal, non-logged steps related to the orchestration of the process. At the same time, they have to be analyzed in a finite-trace semantics, matching the fact that each process execution consists of finitely many steps. These two aspects impede the direct application of existing techniques for stochastic Petri nets, calling for a novel characterization that incorporates labels and silent transitions in a finite-trace semantics. In this article, we provide such a characterization starting from generalized stochastic Petri nets and obtaining the framework of labelled stochastic processes (\pwp{s}). On top of this framework, we introduce different key analysis tasks on the traces of \pwp{s} and their probabilities. We show that all such analysis tasks can be solved analytically, in particular reducing them to a single method that combines automata-based techniques to single out the behaviors of interest within a \pwp, with techniques based on absorbing Markov chains to reason on their probabilities. Finally, we demonstrate the significance of how our approach in the context of stochastic conformance checking, illustrating practical feasibility through a proof-of-concept implementation and its application to different datasets.

\end{abstract}

    \begin{keyword}
        Stochastic Petri nets \sep stochastic process mining \sep silent transitions \sep qualitative verification \sep Markov chains
    \end{keyword}

    \maketitle

\section{Introduction}
    

The study of stochastic dynamic systems is a long-standing line of research, essential when one wants to quantitatively analyse how agents in AI and business/work processes in BPM resolve the nondeterminism intrinsically present when taking decisions. In the context of Petri nets, seminal works extending Petri nets with stochastic decision making date back to the early 80s \cite{natkin1980reseaux,molloy1982integration,molloy1982performance}, finally evolving into the framework of \emph{generalized stochastic Petri nets} (\gspn{s}) \cite{MarsanCB84,marsan1988stochastic,ABCD95}.

In BPM in particular, interest in stochastic processes and their analysis has been recently revived in the context of process mining, exploiting the fact that event data provide the basis for relating behaviors to frequencies and probabilities.  This is in turn key to extract useful information on such behaviors: a quality control process with 30\% failed checks is a considerably different process than a process with 2\% failed checks, even though they are expressed by the control flow.
    Explicitly enriching process models with a  stochastic dimension, which indicates how likely every model trace is, provides in this light the basis for extracting quantifiable insights, as well as increasing the quality of further process analysis and mining tasks such as simulation, prediction and recommendation. Not surprisingly, several stochastic process mining contributions have been provided in recent times, covering a variety of different tasks:
\begin{compactitem}[$\bullet$]
\item discovery of stochastic process models enriched with likelihood information on the different behaviors \cite{spdwe,RoggeSoltiAW13,DBLP:conf/icpm/BurkeLW20,DBLP:conf/apn/BurkeLW21};
\item process model repair~\cite{andreas-repair-otm2013};
\item stochastic conformance checking, either comparing the stochastic behavior implicitly represented by an event log with that induced by a stochastic process model \cite{DBLP:conf/icpm/PolyvyanyyK19,DBLP:journals/tosem/PolyvyanyySWCM20,DBLP:journals/is/AlkhammashPMG22,LABP21,StocasticCC}, or considering the likelihood of model traces when aligning the observed with the prescribed behavior \cite{BMMP21b}. 
\end{compactitem}
While all these works employ variants of Petri nets as control-flow backbone of the process, a parallel thread of research  infuses stochasticity in declarative process mining \cite{AMMP22}. 

Petri net-based approaches to stochastic process modeling typically come with explicit indications on the relative probability and timing of single steps in the process, which only implicitly yield the overall likelihood of its traces. \gspn{s} do so by supporting immediate and timed transitions. Immediate transitions have priority over the timed ones, and come with (relative) weights that are used to compute the probability of deciding which immediate transition to fire among the enabled ones. Each timed transition comes with the rate of a corresponding exponential distribution capturing the firing delay of the transition once it becomes enabled. Continuous and discrete Markov chains are traditionally used to analytically attack key analysis tasks such as calculating the expected time and probability of evolving a marking into another marking \cite{marsan1988stochastic,ABCD95}. Unfortunately, this well-understood, solid foundational basis cannot be readily applied to solve analysis tasks in BPM and process mining, due to three special requirements when modeling business processes that do not directly match with \gspn{s} and their execution semantics:
\begin{compactenum}
\item \emph{repeatable labels} -- transitions in the net must be labelled with corresponding activities in the process, possibly using the same label in multiple transitions;
\item \emph{silent transitions} -- there may be transitions that do not correspond to any visible activity in the process, but represent instead internal orchestration steps, such as those capturing gateways in the process;
\item \emph{finite-trace semantics} -- the interesting dynamics exhibited by the process are those representing the flow of cases/instances therein, starting from an initial state and reaching one among possibly multiple final states after having executed a finite (yet unbounded) amount of steps, for which only the visible activities are of interest.
\end{compactenum}
The interplay of such three requirements is far from trivial, in particular considering that a single trace of activities may correspond to infinitely many different finite-length runs in the process. This makes it impossible even to compute directly the probability of a model trace via enumeration of runs, and explains why stochastic process mining techniques have so far dealt with this problem through approximation \cite{LABP21,BMMP21b}. 

 To get an intuitive understanding of the problem, consider the labelled stochastic Petri net shown in 
    Figure~\ref{fig:lspn}(a), where silent transitions are shown in black (we will introduce these nets formally in Section~\ref{sec:lsp}). 
 This process has two model traces, but their likelihood may sound counterintuitive. For example, the probability that the process will generate the model trace consisting of $\taskname{a}$ followed by $\taskname{b}$ is $\frac{2}{3}$~\cite{LeeSA19}.
    The challenge here stems from the loop of silent transitions, which ``favours" $\taskname{b}$ over $\taskname{c}$. Technically, one can only compute that probability analytically by noticing that the probability of generating $\taskname{a},\taskname{b}$ corresponds to the sum of infinitely many probabilities, each obtained by executing a different number of iterations within the ``silent loop'' consisting of the two silent transitions. Specifically, the probability is $\frac{1}{2}+\frac{1}{2}\frac{1}{4}+\frac{1}{2}\left(\frac{1}{4}\right)^2+\ldots = \frac{1}{2}\sum_{i=0}^\infty \left(\frac{1}{4}\right)^i = \frac{\frac{1}{2}}{1-\frac{1}{4}}=\frac{2}{3}$.

  \begin{figure}[t]
        \centering
         \resizebox{\textwidth}{!}{
        \begin{tikzpicture}[node distance=1cm and .5cm]

            \node (source) [
              place={}
            ] {};
%
            \node (a) [
                wtransition={a}{1},
                timed, 
                right = of source
            ] {};
            \node (p1) [
                right = of a,  
                place
            ] {};
            \node (t1) [
                wsilenttransition={$1/2$},
                timed, 
                right = of p1
            ] {};
            \node (t2) [
                wsilenttransition={$1/2$}, 
                timed,
                above = of t1
            ] {};
            \node (b) [
                wtransition={b}{$1/2$},
                timed, 
                below = of t1] {};
            \node (p2) [
              right = of t1, 
              place] {};
            \node (c) [
                wtransition={c}{$1/2$},
                timed, 
                right = of p2] {};
            \node (sink) [
                place, 
                right = of c] {};

            \draw [arc] (source) to (a);
            \draw [arc] (a) to (p1);
            \draw [arc] (p1) to (t1);
            \draw [arc] (t1) to (p2);
            \draw [arc] (p2) to (c);
            \draw [arc] (c) to (sink);
            \draw [arc, rounded corners=5pt] (p1) |- (b);
            \draw [arc, rounded corners=5pt] (b) -| (sink);
            \draw [arc, rounded corners=5pt] (p2) |- (t2);
            \draw [arc, rounded corners=5pt] (t2) -| (p1);

            \node[below=5mm of b] (caption1) {(a) Stochastic net adapted from~\cite{LeeSA19}.};

            \def\dy{10mm}

			\node (source) [
                place,
                right = 10mm of sink] {};
			\node (a) [
                wtransition={a}{1},
                timed, 
                right = of source] {};
			\node (p1) [
                place, 
                right = of a, 
                yshift=\dy] {};
			\node (p2) [
                place, 
                right = of a, 
                yshift=-\dy] {};
			\node (b) [
                wsilenttransition={1},
                timed, 
                right = of p1] {};
			\node (p3) [
                place, 
                right = of b] {};
			
			\node (d) [
                wtransition={d}{1},
                timed,
                right = of p3] {};
			\node (p5) [place, right of=d] {};
			
            \node (c) [
                wtransition={c}{1}{},
                timed, 
            ] at (p2-|d) {};
			\node (p4) [place, right of=c] {};

			\draw [arc] (source) to (a);
			\draw [arc] (a) to (p1);
			\draw [arc] (a) to (p2);
			\draw [arc] (p1) to (b);
			\draw [arc] (b) to (p3);
			\draw [arc] (p3) to (d);
			\draw [arc] (d) to (p5);
			\draw [arc] (p2) to (c);
			\draw [arc] (c) to (p4);
			
			\draw [arc] (p2) to (d);
        
        \node at(caption1-|b) {(b) Stochastic net with confusion, adapted from~\cite{chiola1993generalized}.};

        \end{tikzpicture}
        }
        \caption{Two examples of labelled stochastic Petri nets.}
        \label{fig:lspn}
    \end{figure}


    Another example of potentially counterintuitive trace probability is shown in Figure ~\ref{fig:lspn}(b).
    In this net, the likelihood of $\taskname{a}$ followed by $\taskname{c}$ is $\frac{3}{4}$.
    The challenge in this example is again the silent transition, which is used here in a semi-concurrent context: the transition $\taskname{c}$ is mutually exclusive with transition $\taskname{d}$, but as $\taskname{c}$ is part of two runs (that is, executed before or after the silent transition), its probability is higher than one might expect~\cite[confusion]{chiola1993generalized}. In Section~\ref{sec:relatedwork}, we describe how existing techniques address or circumvent these challenges.

In this article, we start from \gspn{s} and propose the framework of \emph{(bounded) labelled stochastic processes} (\bpwp{s}) to capture processes that incorporate labels and silent transitions in a finite-trace semantics. On top of this framework, we define and analytically solve the following core problems:

\begin{compactenum}
\item \textbf{Outcome probability}: what is the probability that the \bpwp evolves from the initial marking to one (or a subset) of its final markings?  
\item \textbf{Trace probability}: what is the probability of a given trace of the \bpwp?
\item \textbf{Specification probability}: what is the probability that the \bpwp produces a trace that satisfies a given qualitative specification that captures desired behaviour?
\item \textbf{Stochastic compliance}: is the \bpwp of interest compatible, in behavioural and stochastic terms, to a probabilistic declarative specification \cite{AMMP22} indicating which temporal constraints are expected to hold, and with which probability?
\item \textbf{Stochastic conformance checking}: how can we employ the previous analysis questions, in particular trace probability, to improve the correctness and applicability of existing stochastic conformance checking techniques \cite{LeeSA19} relating a reference stochastic process model to a recorded log?
\end{compactenum}

Specifically, we show how outcome probability can be analytically solved by building on and suitably revising the connection~\cite{MarsanCB84,ABCD95} between \gspn{s} and discrete-time Markov chains, considering in particular the class of absorbing Markov chains \cite[Chapter 11]{GriS97}. We then demonstrate how the other analysis tasks can be reduced to outcome probability, relying on automata-based techniques inspired from those in qualitative verification of Markov chains against temporal properties~\cite[Ch.10]{BaiK08}, here adapted to handle finite traces. Such techniques are also used to elegantly deal with silent transitions. As a by-product, we hence provide a solution to an existing problem related to the removal of $\varepsilon$-transitions in stochastic finite-state automata, so far only solved through ad-hoc algorithms \cite{hanneforth2010epsilon}.

Finally, we demonstrate the significance of how our approach in the context of stochastic conformance checking, infusing an existing measure \cite{LeeSA19} with our analytical computation of trace probabilities. We illustrate practical feasibility through a proof-of-concept implementation and its application to different datasets.
 
This paper is a revised and largely extended version of~\cite{DBLP:conf/bpm/LeemansMM22}. In comparison with~\cite{DBLP:conf/bpm/LeemansMM22}:
\begin{compactitem}  
\item We here handle the full features of arbitrary \gspn{s}.
\item We extend all the foundational results, providing a complete characterisation of the considered analysis tasks, a more explicit connection with Markov chains, and a characterisation of \pwp{s} and their runs through stochastic languages, also examining the impact of livelocks.
\item We provide an implementation of the techniques in the ProM framework~\cite{DBLP:conf/apn/DongenMVWA05} and an evaluation of stochastic conformance checking techniques, made possible by our technique and its implementation.
\end{compactitem}

The article is organized as follows. Section~\ref{sec:relatedwork} discusses related work. In Section~\ref{sec:lsp}, we introduce the framework of \bpwp{s} based on \gspn{s} and its execution semantics. Section~\ref{sec:state-probabilities} provides a brief roadmap of the different analysis tasks and how they are inter-reduced to each other. In Section~\ref{sec:outcome} we attack the outcome probability problem showing how it can be analytically solved through a connection with absorbing Markov chains. In Section~\ref{sec:spec} we introduce the qualitative verification of \bpwp{s}, show how it can be used to compute the probability of model traces, and reduce it to outcome probability. Section~\ref{sec:delta} deals with stochastic compliance of \bpwp{s} against declarative probabilistic constraints, also in this case providing a reduction to the outcome probability problem. In Section~\ref{sec:evaluation} we describe the proof-of-concept implementation of our techniques, demonstrate their application to compute stochastic conformance checking metrics, and report on a corresponding evaluation. Conclusions follow.


  
  \section{Related Work}
\label{sec:relatedwork}
    
    Stochastic process-based models have been studied extensively in literature. In the context of this work, we are interested in formal, Petri net-based stochastic models that are at the basis of the recent series of approaches in stochastic process discovery \cite{spdwe,RoggeSoltiAW13,DBLP:conf/apn/BurkeLW21} and conformance checking \cite{LABP21,StocasticCC,Bergami21,DBLP:journals/is/AlkhammashPMG22,DBLP:conf/icpm/BurkeLWAH22}. Such approaches all refer to the model of (generalised) stochastic Petri nets, or fragments thereof. A first version of this model was proposed in~\cite{molloy1982performance}, extending Petri nets by assigning exponentially distributed firing \emph{rates} to transitions. 
    This was extended in~\cite{MarsanCB84} by distinguishing timed (as in~\cite{molloy1982performance}) and immediate transitions. 
    Immediate transitions have priority over timed ones, and have \emph{weights} to define their relative likelihood.
    As these two types of transitions, abstracting from time, behave homogeneously, we may capture the stochastic behaviour of the net through a discrete-time Markov chain \cite{MarsanCB84}.

    Several variants of stochastic Petri nets have been investigated starting from the seminal work in~\cite{MarsanCB84}. These variants differ from each other depending on the features they support (e.g., arbiters to resolve non-determinism, immediate vs timed transitions) and the way they express probabilities. Such nets may aid modellers in expressing certain constructs. An orthogonal, important dimension is to ensure that probabilities and concurrency interact properly. This can be achieved through good modelling principles~\cite{MarsanCB84,chiola1993generalized} or automated techniques~\cite{BrMM19}. 
    

Contrasting these formal models with recent works in stochastic process mining, key differences exist. Traditional stochastic nets do not support transition labels nor silent transitions, and put emphasis on recurring, infinite executions and the so-called steady-state analysis, focused on calculating the probability that an execution is currently placed in a given state. 
This is done by constructing a discrete-time Markov chain that characterises the stochastic behaviour of the net \cite{molloy1982performance,MarsanCB84}. 
Finding the probability of a finite-length trace in such nets is trivial, as every trace corresponds to a single path. However, no transition labels or silent steps are supported, which limits their usefulness for process mining due to the omnipresence of such transitions in process models. 
On the other hand, when these features are incorporated in stochastic Petri nets, which is precisely what we target in this paper, computing the probability of a trace cannot be approached directly anymore, as infinitely many paths would potentially need to be inspected. 
At the same time, in business processes we are interested in behaviour at the trace level rather than at the process level -- that is, we are not interested in the \emph{state} that a process can be in, but rather on the \emph{path} that a trace follows through the model -- thus the large body of work on steady-state-based analyses on Markov models does not apply for our purposes. 
This explains why reasoning on the stochastic behaviour of such extended nets has been conducted in an approximated way \cite{LABP21,StocasticCC}, or by imposing restrictions on the model \cite{Bergami21}. 

To bridge this gap, in this paper we take the most basic stochastic Petri nets: we do not consider time or priority, but we add (duplicate) labels and silent transitions. Importantly, \emph{our results seamlessly carry over to bounded, generalised stochastic Petri nets}, thanks to the fact that incorporating priorities in bounded nets is harmless, and that timed and immediate transitions are homogeneous from the stochastic point of view.
%
%
 %
 To the best of our knowledge, outside of recent work using stochastic Petri nets with silent transitions~\cite{LABP21,LeeSA19,Bergami21}, such nets have not been defined or studied before.

    


%
    
    While intuitively stochastic conformance checking techniques need to obtain the probability of a given trace in a stochastic process model (for instance, \cite{LeeSA19} explicitly obtains this probability to compute a distance measure between a log and a stochastic process model), some stochastic conformance checking techniques avoid computing the probability for a single trace, for instance by playing out the model to obtain a sample of executions~\cite{LABP21}, or by assuming that the model is deterministic~\cite{StocasticCC}.
    The results presented in this paper therefore enable the practical application of~\cite{LeeSA19}, and may enable further stochastic conformance checking techniques and, consequently, new types of analysis.
    
    Silent steps have been studied in the context of automata. 
    For instance, in~\cite{hanneforth2010epsilon} an ad-hoc method is described to iteratively remove all silent steps from a stochastic automaton.
    Due to concurrency and confusion (see for instance Figure~\ref{fig:lspn}(b)), such techniques are not directly applicable to stochastic Petri nets. A result of this paper is that silent steps can be handled directly, without the need for ad-hoc techniques.



  \newcommand{\runningwproc}{\ensuremath{\wproc_{\text{order}}}}

\newcommand*\xbar[1]{%
  \hbox{%
    \vbox{%
      \hrule height 0.5pt 
      \kern0.3ex
      \hbox{%
        \kern-0.2em
        \ensuremath{#1}%
        \kern-0.0em
      }%
    }%
  }%
}

\newcommand{\multisets}[1]{\mathbb{M}(#1)}

\newcommand{\const}[1]{\mathsf{#1}}

\newcommand{\hidden}{\ensuremath{\tau}}

\newcommand{\uswn}{SWN\xspace}
\newcommand{\tg}{\ensuremath{G}}
\newcommand{\closed}[1]{\overline{#1}}
\newcommand{\enaset}[2]{E_{#2}(#1)}
\newcommand{\fire}[4]{#1\xrightarrow{#2}_{#4}#3}
\newcommand{\prob}[2]{\mathbb{P}_{#2}(#1)}
\newcommand{\cprob}[3]{\prob{#1|#2}{#3}}

\newcommand{\rg}[1]{RG(#1)}
\newcommand{\srg}[1]{SRG(#1)}
\newcommand{\ind}[1]{\textnormal{\texttt{#1}}}
\newcommand{\run}{\eta}
\newcommand{\trace}{\sigma}
\newcommand{\traces}[1]{\mathit{traces}(#1)}
\newcommand{\ptraces}[2]{\mathit{ptraces}_{#2}(#1)}

\newcommand{\nreach}[3][]{#2 \overset{#1}{\rightsquigarrow} #3}
\newcommand{\runs}[2]{runs_{#2}(#1)}
\newcommand{\seqs}[2]{seqs_{#2}(#1)}
\newcommand{\transp}[1]{#1^\top}
\newcommand{\embed}{\phi}
\newcommand{\trembed}{{\embed^{\text{tr}}}}
\newcommand{\gorgembed}{{\embed^{g}}}

\newcommand{\pa}{\rho_{23}}
\newcommand{\pb}{\rho_{24}}
\newcommand{\pc}{\rho_{55}}
\newcommand{\pd}{\rho_{65}}
\newcommand{\pe}{\rho_{67}}
\newcommand{\pf}{\rho_{57}}

\newcommand{\logtrace}{\trace}
\newcommand{\nonlogtrace}{{\sigma}}

\newcommand{\marked}[2]{\tup{#1,#2}}
\newcommand{\markedcat}[3]{\tup{#1,#2,#3}}

\newcommand{\enabledcat}[3]{#1[#2\rangle_{#3}}

\newcommand{\firecat}[4]{\enabledcat{#1}{#2}{#4}#3}
\newcommand{\mtrans}[1]{\Gamma_{#1}}
\newcommand{\dtrans}[1]{\Lambda_{#1}}
\newcommand{\cardeq}{\approx}

\newcommand{\pre}[1]{{^\bullet{#1}}}
\newcommand{\post}[1]{{#1^\bullet}}
\newcommand{\tcontext}[1]{^\circ{#1}}
\newcommand{\pcontext}[1]{#1^\circ}
\newcommand{\guard}[1]{[\![#1]\!]}
\newcommand{\guardf}{\mathname{G}}
\newcommand{\mcardp}[2]{m(#1)(#2)}

\newcommand{\tsys}[1]{\Lambda_{#1}}
\newcommand{\dbfun}{\funsym{db}}
\newcommand{\mfun}{\funsym{mrk}}

\newcommand{\places}{Q}
\newcommand{\place}{q}
\newcommand{\transitions}{T}
\newcommand{\itransitions}{\transitions^{im}}
\newcommand{\ttransitions}{\transitions^{ti}}
\newcommand{\transition}{t}
\newcommand{\flow}{F}
\newcommand{\labeling}{\ell}
\newcommand{\net}{N}
\newcommand{\wproc}{\mathcal{N}}

\newcommand{\alphabet}{\Sigma}

\newcommand{\tasks}{\mathcal{A}}
\newcommand{\marking}{m}
\newcommand{\imarking}{m_0}
\newcommand{\fmarkings}{M_f}
\newcommand{\weightfun}{w}
\newcommand{\probfun}{p}

\newcommand{\states}{S}
\newcommand{\istate}{s_0}
\newcommand{\fstates}{S_f}
\newcommand{\state}{s}
\newcommand{\transrel}{\varrho}

\newcommand{\mynet}{\net_{\text{order}}}
\newcommand{\myproc}{\wproc_{\text{order}}}

\newcommand{\outgoing}[2]{\mathit{succ}_{#2}(#1)}

\newcommand{\lgspn}{LGSPN\xspace}
 
\newcommand{\labels}{\mathcal{L}}

\newcommand{\expo}[2]{\ensuremath{#1\sim\mathsf{Expo}(#2)}}

\section{Labelled Stochastic Processes based on Petri Nets}
\label{sec:lsp}
In this section, we introduce the class of stochastic process models that can be analysed with our techniques. This class essentially builds on (bounded) generalised stochastic Petri nets (\gspn{s} \cite{ABCD95}), where  transitions (representing atomic units of work) are \cite{Mars88}:
\begin{compactitem}[$\bullet$]
\item \emph{immediate} or \emph{timed};
\item \emph{weighted}, where weights are used to define firing probabilities for immediate transitions, or firing delays for timed ones.
\end{compactitem}
Our model introduces two additional distinguishing features:
\begin{compactenum} 
\item transitions are \emph{silent} or \emph{visible}, in the latter case denoting \emph{process tasks}, and correspondingly come with a \emph{label} defining the task name;
\item we focus on \emph{finite traces}, progressing process instances through the net from an initial to a final state.
\end{compactenum}
This requires to go through the different features of \gspn{s}, carefully readapting them to our context. 

Before entering into the technical exposition, we give some preliminary notions on multisets. A multiset $a$ over a set $U$ is a function $a: U \rightarrow \mathbb{N}$, where for $u \in U$, $a(u)$ indicates the multiplicity (i.e., the number of occurrences) of $u$. Set $U$ is called the \emph{support} of the multiset $a$.

Given two multisets $a$ and $b$ over $U$, we define:
\begin{compactitem}[$\bullet$]
\item the \emph{union} of $a$ and $b$, denoted by $a+b$, as the multiset that assigns to each $u \in U$ multiplicity $a(u)+b(u)$;
\item that $a$ is \emph{included in} $b$, denoted by $a \leq b$, if for every $u \in U$, we have $a(u) \leq b(u)$;
\item assuming $a \leq b$, the \emph{difference} of $b$ and $a$, denoted by $b-a$, as the multiset that assigns to each $u \in U$ multiplicity $b(u) - a(u)$.
\end{compactitem}

The set of all multisets over $U$ is denoted by $\multisets{U}$. Multiset $a$ is explicitly represented by, in between squared brackets $[\ldots]$, each element $u$ with non-zero multiplicity, using notation $u^{a(u)}$.

\subsection{Labelled Petri Nets}
We capture the control-flow backbone of a work process using \textbf{Petri nets} with \textbf{labelled transitions}. Labels are used as follows:
\begin{compactitem}
\item \textbf{(labels as tasks)} labels generally represent (atomic) tasks to be executed within the process; 
\item \textbf{(silent transitions)} a special label is used to indicate that a transition is silent, i.e., does not correspond to any task.
\item \textbf{(repeated labels)} the same task can label distinct transitions.
\end{compactitem} 
We refer to a \emph{visible} transition if it is not silent. The firing of a visible transition conceptually captures the execution of its task; this is explicitly recorded as an event in a trace in an event log. The firing of a silent transition instead indicates the execution of an internal step, not perceived by the external environment, and thus not explicitly recorded. This is useful to capture a number of modelling patterns. We mention two:
\begin{compactitem}
\item \textbf{(control-flow patterns)} silent transitions can be employed to model internal control-flow patterns used to properly orchestrate visible transitions. Examples are skips, loopbacks, split and join points for concurrent branches, and more in general \emph{gateways} in contemporary process modelling notations (such as BPMN).
\item \textbf{(non-loggable tasks)} a silent transition can replace a visible one when its corresponding task cannot be logged. This is typically the case of a manual activity that is not backed up by a corresponding user-interaction activity, whose purpose is to inform the supporting information system  that the activity has been performed.
\end{compactitem}
Silent transitions are also relevant in their full generality in the context of process mining, since many process discovery algorithms produce as output a Petri net with silent transitions~\cite{DBLP:conf/apn/LeemansFA13,DBLP:conf/cec/BuijsDA12,AugustoCDRP19}.

To capture transition labels, we thus assume a given finite set $\tasks$ of tasks, and a special label $\hidden \not\in \tasks$ to indicate a silent step. The whole alphabet of labels is denoted by $\alphabet = \tasks \cup \set{\hidden}$. 

We recall the standard definition of labelled Petri nets, where we assume unweighted arcs for simplicity of presentation.
 
\begin{definition}[Labelled Petri net]
\label{def:pn}
A \emph{(labelled Petri) net} $\net$ is a tuple $\tup{\places, \transitions, \flow, \labeling}$, where:
\begin{compactitem}[$\bullet$]
\item $\places$ is a finite set of \emph{places};
\item $\transitions$ is a finite set of \emph{transitions}, disjoint from $\places$ (i.e., $\places \cap \transitions = \emptyset$);
\item $\flow \subseteq (\places \times \transitions) \cup (\transitions \times \places)$ is a \emph{flow relation} connecting places to transitions and transitions to places;
\item $\labeling: \transitions \rightarrow \alphabet$ is a \emph{labelling function} mapping each transition $\transition \in \transitions$ to a corresponding label $\labeling(\transition)$ that is either a task name from $\tasks$ or the silent label $\hidden$. 
\end{compactitem}
 \end{definition}
We adopt a \emph{dot notation} to extract the component of interest from a tuple. For example, given a net $\net$, its places are denoted by $\net.\places$. We will adopt the same notational convention for the other definitions as well. Given a net $\net$ and an element $x \in \net.\places \cup \net.\transitions$, the \emph{preset} and \emph{post-set} of $x$ are respectively denoted by $\pre{x} = \set{y \mid \tup{y,x} \in \flow}$ and $\post{x} = \set{y \mid \tup{x,y} \in \flow}$. If $x$ is a transition, then its pre- and post-set respectively denote its \emph{input} and \emph{output places}. 

We now turn to the execution semantics of nets, with the goal of formally describing how they represent the executions of process instances. We start with the notion of state. It is worth noting that our definitions are built on standard notions in Petri nets, such as that of marking, firing, run, reachability graph, but with a key difference: process instances evolve through \emph{finite} sequences of steps (of unbounded lenght) in the process. Focusing on finite runs and traces differ both conceptually and technically from the typical, infinite-trace execution semantics of Petri nets. 

As usual in Petri nets, an execution state of a net is described by a marking. Mathematically, a marking is a multiset of places. Conceptually, it represents a distribution of tokens over places. Each token denotes an execution thread, whose current local state is the place to which the token belongs. All tokens belong to a process instance (also called case), whose current global state is collectively described by the local state of each execution thread.

The distribution of token in a marking determines which transitions are enabled and, in turn, can fire and update the state of the process instance. Specifically, a transition is enabled in a marking if its input places contain at least one token each. Firing an enabled transition produces a new marking where one token per input place is consumed, and each output place gets one token more. This is formally defined using multiset operations as follows.

\begin{definition}[Marking]
\label{def:marking}
A \emph{marking} $\marking$ of a net $\net$ is a multiset over the places of $\net$, mapping each place $\place \in \net.\places$ to the number $\marking(\place)$ of tokens in $\place$.
\end{definition}

\begin{definition}[Transition enablement, transition firing]
\label{def:enabled}
Given a marking $\marking$ of a net $\net$, a transition $\transition \in \net.\transitions$ is \emph{enabled} in $\marking$, written $\enabledcat{\marking}{\transition}{\net}$, if $\pre{\transition} \leq \marking$. We denote by
$\enaset{\marking}{\net}$ the set of enabled transitions in a marking $\marking$.

Assuming $\enabledcat{\marking}{\transition}{\net}$, if $\transition$ \emph{fires} in $\marking$ of $\net$, a new marking $\marking'$ of $\net$ is produced, written $\firecat{\marking}{\transition}{\marking'}{\net}$, if $\marking' = (\marking - \pre{\transition}) + \post{\transition}$.
\end{definition}

Transition firings can be chained in a sequence, obtaining an execution. Before discussing how this is done, we need to introduce a different aspect of transitions: their timing.

\subsection{Immediate and Timed Transitions}
\label{sec:immediate-timed}
Orthogonally to the classification of transitions as visible or silent, we consider a second, temporal dimension, which comes from GNPSs \cite{ABCD95}: the distinction between immediate and timed transitions, and the interpretation of weights attached to them.

We start by extending Definition~\ref{def:pn} to accommodate these aspects. We stress that our definition presents some minor differences with other definitions of GSPNs in the literature, such as those in \cite{ABCD95,Mars88}. First, we do not consider general transition priorities, but we explicitly deal with priorities in the execution semantics. Second, we do not consider inhibitor arcs. This is just for simplicity of presentation; in fact, since we will focus on bounded GSPNs, inhibitor arcs can be seamlessly inserted without any impact on the computational properties of the model. At the same time, differently from the literature, our definition explicitly accounts for transition labels and, in particular, silent transitions, which is a minor change in the definition, but has a major conceptual and technical impact when it comes to the execution semantics and analysis of the stochastic behaviour of these nets.

\begin{definition}[Labelled, generalised stochastic Petri net]
\label{def:stochastic-net}
A \emph{labelled, generalised stochastic Petri net (\lgspn)} $\net$ is a tuple $\tup{\places, \transitions, \flow, \labeling, \weightfun}$, where:
\begin{compactitem}[$\bullet$]
\item $\tup{\places, \transitions, \flow, \labeling}$ is a labelled Petri net;
\item $\transitions = \itransitions \cup \ttransitions$ with $\itransitions \cap \ttransitions = \emptyset$, where $\itransitions$ is the set of \emph{immediate transitions}, and $\ttransitions$ is the set of \emph{timed transitions};
\item  $\weightfun\colon \transitions \to \mathbb{R}^+$ is a \emph{weight function} that assigns a positive number to each transition in $\net$.
\end{compactitem}
\end{definition}
With some abuse of terminology, we sometimes refer to an \lgspn by implicitly referring to its labelled Petri net.  

An immediate transition that becomes enabled fires instantaneously without advancing time, while a timed transition that becomes enabled fires with some delay. The question is then what happens if multiple transitions are enabled.

First and foremost, immediate transitions (being instantaneous) have \emph{priority} over timed ones. This calls for redefining the notion of enablement in Definition~\ref{def:enabled} as follows.

\begin{definition}[Transition enablement in a \lgspn]
\label{def:enabled}
Given a marking $\marking$ of a \lgspn $\net$, a transition $\transition \in \net.\transitions$ is \emph{enabled} in $\marking$, written $\enabledcat{\marking}{\transition}{\net}$, if $\transition$ is enabled in the sense of Definition~\ref{def:enabled}, and either:
\begin{compactitem}[$\bullet$]
\item $\transition \in \net.\itransitions$ (i.e., it is immediate), or
\item $\transition \in \net.\ttransitions$ (i.e. it is timed), and $\net.\itransitions \cap \enaset{\net}{\marking} = \emptyset$ (i.e., no immediate transition is enabled in $\marking$). 
\end{compactitem}
\end{definition}

If multiple immediate transitions are enabled, one is selected, resolving nondeterminism stochastically using the (relative) weights associated to the enabled transitions. If multiple timed transitions are enabled, one is selected based on a stochastic choice on which transition samples the shortest delay.

More specifically, the weight $\weightfun(\transition)=\rval{}$ of a timed transition $\transition$ denotes the rate of an exponentially distributed random variable describing the probability that, once enabled, the transition samples a certain delay. 
Technically, consider an exponential random variable $X$ with rate $\rval{}$, written $\expo{X}{\rval{}}$, denoting the firing delay of an enabled transition. The probability that such transition fires between $a$ and $b$ time units after becoming enabled is captured by
$$
\prob{a \leq X \leq b}{} = \int_{a}^{b}f_X(x)dx
$$
where
$$
f_X(x) = \begin{cases}
\rval{}e^{-\rval{}x} & x \geq 0
\\
0 & x < 0
\end{cases}
$$
is the probability density function of $X$.

In Section~\ref{sec:stochastic} we will use weights and rates to define the stochastic semantics of our model, recasting directly the original approach in \cite{ABCD95,Mars88}.

\paragraph{Representation of Nets}
\begin{figure}[t]
\centering
\resizebox{.72\textwidth}{!}{
\begin{tikzpicture}[node distance=10mm and 10mm]

\node[
  place={$q$},
] (p) {};

\node[
  right = 5mm of p.east,
  anchor = west
] (p-label) {place with name $q$};

\node[ 
  wtransition={a}{$w$},
  immediate,
  below=of p,
  label={[tlabel]right:$t$}
] (wt-i) {};

\node[
  anchor = west
] (wt-i-label) at (p-label.west|-wt-i)
{(visible) immediate transition named $t$, with weight $w$ and label \taskname{a}};

\node[ 
  wsilenttransition={$w$},
  immediate,
  below=of wt-i,
  label={[tlabel]right:$t$},
] (wst-i) {};

\node[
  anchor = west
] (wst-i-label) at (p-label.west|-wst-i)
{silent immediate transition named $t$, with weight $w$};

\node[ 
  wtransition={a}{$\lambda$},
  timed,
  below=of wst-i,
] (wt-t) {$t$};

\node[
  anchor = west
] (wt-t-label) at (p-label.west|-wt-t)
{(visible) timed transition named $t$, with rate $\lambda$ and label \taskname{a}};

\node[ 
  wsilenttransition={$\lambda$},
  timed,
  below=of wt-t,
] (wst-t) {$t$};

\node[
  anchor = west
] (wst-t-label) at (p-label.west|-wst-t)
{silent timed transition named $t$, with rate $\lambda$};

\end{tikzpicture}
}
\caption{Notation for \spwp{s}, considering their three main dimensions: transition labels, silent vs visible transitions, timed vs immediate transitions}
\label{fig:legend}
\end{figure}

As customary in Petri net literature, we graphically represent \gspn{s} as bipartite directed graphs with place nodes depicted as circles, immediate transition nodes depicted as segments, and timed transition nodes depicted as squares. Figure~\ref{fig:legend} shows the main graphical conventions we adopt in decorating places and transitions with relevant information. The name of a place is indicated below the place circle, while that of a transition is shown inside the transition square, or at the bottom right of the transition segment. The weight (resp., rate) of an immediate (resp., timed) transition is shown on top of the transition icon, while its label is shown at the bottom. Silent transitions labelled by $\hidden$ are filled in black, while visible transitions labelled by a task from $\tasks$ are filled in light blue.

\begin{example}
\label{ex:sample-net}
Figure~\ref{fig:sample-net} shows a \lgspn for a simplified order-to-cash process. Each process instance refers to a distinct purchase order, evolved through visible transitions that capture tasks under the responsibility of the customer, and silent transitions denoting internal steps of the seller. Consistently with this distribution of responsibility, in a given state the order can be progressed exclusively by the customer or by the seller. We describe next intuitively how the process works.

A new process instance starts when the customer opens an order, and fills it with at least one item. Further items can be then added. Crucially, the seller determines if the customer is actually forced to add further items to progress the order, or whether instead adding further items is just an option. This is a free choice of the seller, which may be concretized in different ways (e.g., requiring certain items to be present together with other items). 
The first situation is determined when the seller executes the silent transition $\tname{s_1}$ (capturing a loopback), the second when the seller executes the silent transition $\tname{s_2}$. 

In the latter case, the customer can decide how to continue the process: 
\begin{inparaenum}[\itshape (i)]
\item by adding another item, in turn letting the seller choose whether a further item must or may be added;
\item by canceling the order, which leads to complete the process instance in the \emph{canceled} state (place $\pname{c}$);
\item by finalizing the order, signalling that the customer wants to proceed to the payment phase.
\end{inparaenum}
Whether a finalized order actually goes through the payment phase is determined by the seller, who decides whether the order is rejected or accepted. Both decisions have to be acknowledged by the customer. A rejection acknowledgement leads to conclude the process instance placing the order in the \emph{rejected} state (place $\pname{r}$). 

An acceptance acknowledgment gives control back to the customer, who can decide how to proceed:
\begin{inparaenum}[\itshape (i)]
\item by adding another item, which brings the order back to the previous phase, where it can be finalized, canceled, or filled with further items;
\item by canceling the order, which leads, as before, to complete the process instance in the \emph{canceled} state (place $\pname{c}$);
\item by paying the order (which we assume always succeeds for simplicity).
\end{inparaenum}
Upon payment, two tasks are executed by the seller concurrently (i.e., in no particular order): a receipt is emitted for the customer, and the order leaves the warehouse for delivery. Once both tasks are executed, the process instance successfully complete in the\emph{happy}, final state (place $\pname{h}$). Notice the usage of the two silent transitions $\tname{s_5}$ and $\tname{s_6}$ to respectively split and join the flow of control around the two concurrent tasks \taskname{emit receipt} and \taskname{ship}.

Throughout this example, transitions denoting tasks are all timed, whereas internal moves used for process orchestrations are captures as silent, immediate transitions. Weights and rates are represented symbolically. 
\end{example}

The \lgspn shown in Figure~\ref{fig:sample-net} only contains visible timed transitions and silent immediate ones. As shown next, this is not mandatory.

\begin{figure}[t]
\centering
\resizebox{\textwidth}{!}{
\begin{tikzpicture}[node distance=8mm and 5mm]

\node (qs) [
  place = {\pname{s}}
] {};
			
\node (to) [
  wtransition = {\taskname{open}}{\rval{o}},
  timed, 
  right = of qs,
] {\tname{o}};

\draw[arc] (qs) edge (to);

\node (q1) [
  place = {\pname{1}},
  right = of to,
] {};

\draw[arc] (to) edge (q1);
			
\node (ti1) [
  wtransition = {add item}{\rval{i_1}},
  timed, 
  right = of q1,
] {\tname{i_1}};

\draw[arc] (q1) edge (ti1);

\node (q2) [
  place = {\pname{2}},
  right = of ti1,
] {};

\draw[arc] (ti1) edge (q2);

\node (ts1) [
  wsilenttransition = {\wval{s_1}},
  immediate, 
  above = of ti1,
  label = {[tlabel]right:\tname{s_1}}
] {};

\draw[arc] (q2.north)++(-1mm,-.3mm) |- (ts1.east);
\draw[arc] (ts1.west) -| (q1);

\node (ts2) [
  wsilenttransition = {\wval{s_2}},
  immediate, 
  right = of q2,
  label = {[tlabel]right:\tname{s_2}}
] {};

\draw[arc] (q2) edge (ts2);

\node (q3) [
  place = {\pname{3}~~~~~~},
  right = of ts2,
] {};

\draw[arc] (ts2) edge (q3);

\node (ti2) [
  wtransition = {add item}{\rval{i_2}},
  timed, 
  above = of ts2,
] {\tname{i_2}};

\draw[arc] (q3.north)++(-1mm,-.3mm) |- (ti2.east);
\draw[arc] (ti2.west) -| ($(q2.north)+(1mm,-.3mm)$);

\node (tf) [
  wtransition = {finalize}{\rval{f}},
  timed, 
  right = of q3,
] {\tname{f}};

\draw[arc] (q3) edge (tf);

\node (q4) [
  place = {\pname{4}~~~~~~},
  right = of tf,
] {};

\draw[arc] (tf) edge (q4);

\node (ts3) [
  wsilenttransition = {\wval{s_3}},
  immediate, 
  right = 2mm of q4,
  yshift = -9mm,
  label = {[tlabel]right:\tname{s_3}}
]  {};

\draw[arc] (q4) |- (ts3);

\node (q5) [
  place = {\pname{5}},
  right = of ts3,
] {};

\draw[arc] (ts3) edge (q5);

\node (ts4) [
  wsilenttransition = {\wval{s_4}},
  immediate,
  right = 2mm of q4,
  yshift = 9mm,
  label = {[tlabel]right:\tname{s_4}}
] {};

\draw[arc] (q4) |- (ts4);

\node (q6) [
  place = {\pname{6}},
  right = of ts4,
] {};

\draw[arc] (ts4) edge (q6);			

\node (ta) [
  wtransition = {ack accept}{\rval{a}},
  timed, 
  right = of q6,
] {\tname{a}};

\draw[arc] (q6) edge (ta);

\node (q7) [
  place = {~~~~~~\pname{7}},
  right = of ta,
] {};

\draw[arc] (ta) edge (q7);

\node (ti3) [
  wtransition = {add item}{\rval{i_3}},
  timed, 
  above = of ta,
] {\tname{i_3}};

\draw[arc] (q7) |- (ti3);
\draw[arc] (ti3) -| ($(q3.north)+(1mm,-.3mm)$);

\node (tp) [
  wtransition = {pay}{\rval{p}},
  timed, 
  right = of q7,
] {\tname{p}};
	
\draw[arc] (q7) edge (tp);

\node (q8) [
  place = {\pname{8}~~~~~~},
  right = of tp,
] {};
	
\draw[arc] (tp) edge (q8);

\node (ts5) [
  wsilenttransition = {\wval{s_5}~~~~~~},
  immediate,
  right = of q8,
  label = {[tlabel,xshift=2.5mm]left:\tname{s_5}}
] {};
	
\draw[arc] (q8) edge (ts5);

\node (q9) [
  place = {\pname{9}},
  right = 2mm of ts5,
  yshift = 8mm, 
] {};

\draw[arc] (ts5) edge (q9);

\node (q10) [
  place = {\pname{10}},
  right = 2mm of ts5,
  yshift = -8mm, 
] {};

\draw[arc] (ts5) edge (q10);

\node (te) [
  wtransition = {emit receipt}{\rval{e}},
  timed, 
  right = of q9,
] {\tname{e}};

\draw[arc] (q9) edge (te);

\node (tl) [
  wtransition = {leave}{\rval{l}},
  timed, 
  right = of q10,
] {\tname{l}};

\draw[arc] (q10) edge (tl);

\node (q11) [
  place = {\pname{11}},
  right = of te,
] {};

\draw[arc] (te) edge (q11);

\node (q12) [
  place = {\pname{12}},
  right = of tl,
] {};

\draw[arc] (tl) edge (q12);

\node (ts6) [
  wsilenttransition = {\wval{s_6}},
  immediate,
  right = 2mm of q11,
  yshift = -8mm,
  label = {[tlabel]right:\tname{s_6}}
] {};

\draw[arc] (q11) edge (ts6);
\draw[arc] (q12) edge (ts6);

\node (qh) [
  place = {\pname{h}},
  right = of ts6,
] {};

\draw[arc] (ts6) edge (qh);

\node (tr) [
  wtransition = {ack reject}{\rval{r}},
  timed, 
  right = of q5,
] {\tname{r}};

\draw[arc] (q5) edge (tr);

\node (qr) [
  place = {\pname{r}},
] at (tr-|qh) {};

\draw[arc] (tr) edge (qr);

\node (tc1) [
  wtransition = {cancel}{\rval{c_1}},
  timed, 
  below = of tf,
  yshift = -12mm
] {\tname{c_1}};

\draw[arc] (q3) |- (tc1);

\node (qc) [
  place = {\pname{c}},
] at (tc1-|qh) {};

\draw[arc] (tc1) edge (qc);

\node (tc2) [
  wtransition = {cancel}{\rval{c_2}},
  timed, 
  below = of tp,
  yshift = -11mm
] {\tname{c_2}};

\draw[arc] (q7) |- (tc2);
\draw[arc] (tc2) -| (qc);

\end{tikzpicture}
}
\caption{Stochastic net of an order-to-cash process. Weights and rates are presented symbolically}
\label{fig:sample-net}
\end{figure}


\begin{figure}[t]
\centering
\resizebox{\textwidth}{!}{
\begin{tikzpicture}[node distance=8mm and 5mm]

\node (qs) [
  place = {\pname{s}}
] {};
			
\node (to) [
  wtransition = {\taskname{open}}{\rval{o}},
  timed, 
  right = of qs,
] {\tname{o}};

\draw[arc] (qs) edge (to);

\node (q1) [
  place = {\pname{1}},
  right = of to,
] {};

\draw[arc] (to) edge (q1);
			
\node (ti1) [
  wsilenttransition = {\rval{i_1}},
  timed, 
  right = of q1,
] {\tname{i_1}};

\draw[arc] (q1) edge (ti1);

\node (q2) [
  place = {\pname{2}},
  right = of ti1,
] {};

\draw[arc] (ti1) edge (q2);

\node (ts1) [
  wsilenttransition = {\wval{s_1}},
  immediate, 
  above = of ti1,
  label = {[tlabel]right:\tname{s_1}}
] {};

\draw[arc] (q2.north)++(-1mm,-.3mm) |- (ts1.east);
\draw[arc] (ts1.west) -| (q1);

\node (ts2) [
  wsilenttransition = {\wval{s_2}},
  immediate, 
  right = of q2,
  label = {[tlabel]right:\tname{s_2}}
] {};

\draw[arc] (q2) edge (ts2);

\node (q3) [
  place = {\pname{3}~~~~~~},
  right = of ts2,
] {};

\draw[arc] (ts2) edge (q3);

\node (ti2) [
  wsilenttransition = {\rval{i_2}},
  timed, 
  above = of ts2,
] {\tname{i_2}};

\draw[arc] (q3.north)++(-1mm,-.3mm) |- (ti2.east);
\draw[arc] (ti2.west) -| ($(q2.north)+(1mm,-.3mm)$);

\node (tf) [
  wtransition = {finalize}{\rval{f}},
  timed, 
  right = of q3,
] {\tname{f}};

\draw[arc] (q3) edge (tf);

\node (q4) [
  place = {\pname{4}~~~~~~},
  right = of tf,
] {};

\draw[arc] (tf) edge (q4);

\node (ts3) [
  wsilenttransition = {\wval{s_3}},
  immediate, 
  right = 2mm of q4,
  yshift = -9mm,
  label = {[tlabel]right:\tname{s_3}}
]  {};

\draw[arc] (q4) |- (ts3);

\node (q5) [
  place = {\pname{5}},
  right = of ts3,
] {};

\draw[arc] (ts3) edge (q5);

\node (ts4) [
  wsilenttransition = {\wval{s_4}},
  immediate,
  right = 2mm of q4,
  yshift = 9mm,
  label = {[tlabel]right:\tname{s_4}}
] {};

\draw[arc] (q4) |- (ts4);

\node (q6) [
  place = {\pname{6}},
  right = of ts4,
] {};

\draw[arc] (ts4) edge (q6);			

\node (ta) [
  wtransition = {ack accept}{\rval{a}},
  timed, 
  right = of q6,
] {\tname{a}};

\draw[arc] (q6) edge (ta);

\node (q7) [
  place = {~~~~~~\pname{7}},
  right = of ta,
] {};

\draw[arc] (ta) edge (q7);

\node (ti3) [
  wsilenttransition = {\rval{i_3}},
  timed, 
  above = of ta,
] {\tname{i_3}};

\draw[arc] (q7) |- (ti3);
\draw[arc] (ti3) -| ($(q3.north)+(1mm,-.3mm)$);

\node (tp) [
  wtransition = {pay}{\rval{p}},
  timed, 
  right = of q7,
] {\tname{p}};
	
\draw[arc] (q7) edge (tp);

\node (q8) [
  place = {\pname{8}~~~~~~},
  right = of tp,
] {};
	
\draw[arc] (tp) edge (q8);

\node (ts5) [
  wsilenttransition = {\wval{s_5}~~~~~~},
  immediate,
  right = of q8,
  label = {[tlabel,xshift=2.5mm]left:\tname{s_5}}
] {};
	
\draw[arc] (q8) edge (ts5);

\node (q9) [
  place = {\pname{9}},
  right = 2mm of ts5,
  yshift = 8mm, 
] {};

\draw[arc] (ts5) edge (q9);

\node (q10) [
  place = {\pname{10}},
  right = 2mm of ts5,
  yshift = -8mm, 
] {};

\draw[arc] (ts5) edge (q10);

\node (te) [
  wtransition = {emit receipt}{\rval{e}},
  timed, 
  right = of q9,
] {\tname{e}};

\draw[arc] (q9) edge (te);

\node (tl) [
  wtransition = {leave}{\rval{l}},
  timed, 
  right = of q10,
] {\tname{l}};

\draw[arc] (q10) edge (tl);

\node (q11) [
  place = {\pname{11}},
  right = of te,
] {};

\draw[arc] (te) edge (q11);

\node (q12) [
  place = {\pname{12}},
  right = of tl,
] {};

\draw[arc] (tl) edge (q12);

\node (ts6) [
  wsilenttransition = {\wval{s_6}},
  immediate,
  right = 2mm of q11,
  yshift = -8mm,
  label = {[tlabel]right:\tname{s_6}}
] {};

\draw[arc] (q11) edge (ts6);
\draw[arc] (q12) edge (ts6);

\node (qh) [
  place = {\pname{h}},
  right = of ts6,
] {};

\draw[arc] (ts6) edge (qh);

\node (tr) [
  wtransition = {ack reject}{\rval{r}},
  timed, 
  right = of q5,
] {\tname{r}};

\draw[arc] (q5) edge (tr);

\node (qr) [
  place = {\pname{r}},
] at (tr-|qh) {};

\draw[arc] (tr) edge (qr);

\node (tc1) [
  wsilenttransition = {\rval{c_1}},
  timed, 
  below = of tf,
  yshift = -12mm
] {\tname{c_1}};

\draw[arc] (q3) |- (tc1);

\node (qc) [
  place = {\pname{c}},
] at (tc1-|qh) {};

\draw[arc] (tc1) edge (qc);

\node (tc2) [
  wsilenttransition = {\rval{c_2}},
  timed, 
  below = of tp,
  yshift = -11mm
] {\tname{c_2}};

\draw[arc] (q7) |- (tc2);
\draw[arc] (tc2) -| (qc);

\end{tikzpicture}
}
\caption{Stochastic net obtained from that of Figure~\ref{fig:sample-net} in the case where the activities \taskname{add item} and \taskname{cancel} are not logged.}
\label{fig:sample-net-silence}
\end{figure}


\begin{example}
\label{ex:sample-net-silence}
Consider an information system supporting the execution of the net shown in Figure~\ref{fig:sample-net}, in a setting where the addition of items and the cancelation of orders cannot be logged (for example because they are manual tasks executed by the customer without interacting with the information system itself). To account for this key aspect, the two tasks \taskname{add item} and \taskname{cancel order} should be replaced by $\hidden$, turning the corresponding transitions into silent ones. This leads to the \lgspn shown in  Figure~\ref{fig:sample-net-silence}. Notice that the newly introduced silent transitions are all timed, as they denote the execution of non-logged tasks, each introducing some delay upon progressing the execution of a process instance.
\end{example}

\subsection{Labelled Stochastic Processes and their Finite-Trace Execution Semantics (Without Probabilities)}
\label{sec:lsp-semantics}

We now use \lgspn{s} as a basis for our model of \emph{labelled stochastic processes}, and describe their execution semantics based on finite traces. For the moment, we consider choices as purely nondeterministic, while in Section~\ref{sec:stochastic} we handle their stochastic behaviour.

Specifically, we are interested in using \lgspn{s} to describe the execution of process instances. This is done by fixing an initial marking describing the initial state of the process, and by executing the net starting from this initial marking and culminating in one of the final states of the process, which we define to be all deadlock markings.

\begin{definition}[Deadlock marking]
\label{def:deadlock}
A \emph{marking} $\marking$ of an \lgspn $\net$ is a \emph{deadlock marking} if no transition is enabled in it: $\enaset{\marking}{\net} = \emptyset$.
\end{definition}

\begin{definition}[Execution, supporting marking sequence]
\label{def:execution}
Let $\net$ be an \lgspn, and let $\marking_s$ and $\marking_f$ be two markings of $\net$. An \emph{execution} of $\net$ from $\marking_s$ to  $\marking_f$ is a (possibly empty) finite sequence $\transition_0,\ldots,\transition_n$ of transitions in $\net.\transitions$ such that there exists a corresponding sequence of markings $\marking_0,\ldots,\marking_{n+1}$ of $\net$ satisfying the following conditions:
\begin{inparaenum}[\itshape (i)]
\item $\marking_0 = \marking_s$,
\item $\marking_{n+1} = \marking_f$,
\item for every $i \in \set{0,\ldots,n}$, we have $\firecat{\marking_i}{\transition_i}{\marking_{i+1}}{\net}$.
\end{inparaenum}
We call the (unique) sequence $\marking_0,\ldots,\marking_{n+1}$ the \emph{supporting marking sequence} of $\transition_0,\ldots,\transition_n$.
\end{definition}

\begin{definition}[Labelled stochastic process]
A \emph{labelled stochastic process (\pwp)} is a triple $\tup{\net,\imarking,\fmarkings}$, where:
\begin{compactitem}[$\bullet$]
\item $\net$ is an \lgspn representing its \emph{supporting net};
\item $\imarking$ is a marking of $\net$ representing the \emph{initial marking};
\item $\fmarkings$ is a finite set of deadlock markings of $\net$ representing its \emph{final markings}.
\end{compactitem}
\end{definition}

\begin{example}
\label{ex:running-process}
From now on, we will consider, as a running example, the \pwp $\wproc_{\text{order}}$ defined as follows:
\begin{compactitem}
\item its supporting net $\wproc_{\text{order}}.\net$ is the \lgspn shown in Figure~\ref{fig:sample-net-silence};
\item its initial marking $\wproc_{\text{order}}.\imarking$ is the marking $[\place_s]$;
\item its final markings $\wproc_{\text{order}}.\fmarkings$ are the three markings $[\place_h]$, $[\place_r]$, and $[\place_c]$, representing the happy completion of the process where the order is paid and shipped, and the unsuccessful completions where the order is either rejected or canceled.
\end{compactitem}
\end{example}

We will interchangeably assign markings to nets and to \pwp{s}, simply meaning the following: a marking of \pwp $\wproc$ is a marking of $\wproc.\net$.

Runs are defined as \pwp executions linking the initial state to one of the final states.

\begin{definition}[Run of an \pwp]
\label{def:run}
A run of an \pwp $\wproc$ is an execution (as per Definition~\ref{def:execution}) that starts in the initial marking $\wproc.\imarking$ of $\wproc$ and ends in one of its final markings $\wproc.\fmarkings$.
\end{definition}

\begin{example}
\label{ex:runs}
Consider the \pwp $\runningwproc$ from Example~\ref{ex:running-process}. The transition sequence
$$
\run_{iair} = \tname{o},\tname{i_1},\tname{s_2},\tname{f},\tname{s_4},\tname{a},\tname{i_3},\tname{f},\tname{s_3},\tname{r}
$$
is a run of $\runningwproc$, leading from the initial marking $[\pname{s}]$ to the final, rejection marking $[\pname{r}]$. It does so by adding an item to the order, passing once through acceptance, then adding an other item, finally being rejected.

The transition sequence:
$$
\run_{iiair} = \tname{o},\tname{i_1},\tname{s_2},\tname{i_2},\tname{s_2},\tname{f},\tname{s_4},\tname{a},\tname{i_3},\tname{f},\tname{s_3},\tname{r}
$$ 
is another run of $\runningwproc$, which resembles $\run_{iair}$ with the only difference that two items are added before the first acceptance (later turned in a rejection).
\end{example}

When logging runs, silent transitions disappear, while visible ones leave their corresponding task as a footprint. We capture this by lifting runs to traces: each trace $\trace = e_0,\ldots,e_n \in \tasks^*$ is a sequence of \emph{events} over $\tasks$ where, for simplicity, each event $e_i$ simply tracks the task that has been executed, together with its qualitative position in the trace with respect to the other events. A trace is a \emph{model trace} for an \pwp if it is produced by one of its runs, stripping off the silent transitions and considering the labels of the visible ones, keeping their relative positioning.

\begin{definition}[Model trace, induced trace]
\label{def:model-trace}
A trace $\trace$ is a \emph{model trace} of \pwp $\wproc$ if there exists a run $\run = \transition_0,\ldots,\transition_m$ of $\wproc.\net$ whose corresponding sequence of labels $\wproc.\net.\labeling(\run) = \wproc.\net.\labeling(\transition_0),\ldots,\wproc.\net.\labeling(\transition_m)$ coincides with $\trace$ once all $\hidden$ elements are removed. In this case, we say that $\run$ \emph{induces} $\trace$. 
\end{definition}

The following remark highlights one of the main phenomena arising when considering silent transitions. This will become a key challenge when reasoning over \pwp{s} and their stochastic behaviour.

\begin{remark}
\label{rem:runs-traces}
A model trace $\trace$ of a (bounded) \pwp $\wproc$ can in general be induced by multiple, possibly infinitely many, runs of $\wproc$.
\end{remark}

We denote the (possibly infinite) set of runs of \pwp $\wproc$ that induce trace $\trace$ by $\runs{\trace}{\wproc}$. The following example substantiates Remark~\ref{rem:runs-traces}. 

\begin{example}
\label{ex:traces}
Consider the two runs $\run_{iair}$ and $\run_{iiair}$ from Example~\ref{ex:runs}. They both induce the following trace of $\runningwproc$:
$$
\trace_{ar} = \taskname{open},\taskname{finalize},\taskname{ack accept},\taskname{finalize},\taskname{ack reject}
$$
In fact, the addition of items is always handled via silent transitions, hence being completely invisible at the trace level.

More in general, trace $\trace_{ar}$ is induced by the infinitely many runs captured through the following regular expression:
$$
\runs{\trace_{ar}}{\runningwproc} = 
\left\{
\run 
\mid 
\run \text{ matches }
\left(
\begin{array}{@{}l@{}}
\taskname{open}
;
(\taskname{add item})^+
;
\taskname{finalize}
;\\
(
\taskname{ack accept};
(\taskname{add item})^+;
\taskname{finalize}
)^*
;\\
\taskname{ack reject}
\end{array}
\right)
\right\}
$$
where 
\begin{inparaenum}[\it (i)]
\item the atomic regular expression is a task name from $\tasks$;
\item expression ``$e_1;e_2$'' is the concatenation of expression $e_1$ followed by expression $e_2$;
\item ``$e^*$'' is the Kleene star-closure of expression $e$, capturing the set of regular expressions where $e$ is repeated zero or more times; 
\item ``$e^+$'' captures the set of expressions where $e$ is repeated at least once. 
\end{inparaenum}
\end{example}

The execution semantics of an \pwp is described by a (possibly infinite-state) labelled transition system typically named a \emph{reachability graph}. 
In our setting, labelled transition systems are deterministic, in the sense that given a state and a label, there can be at most one transition having that state as source and that label attached. This will be essential when turning to the stochastic behaviour of \pwp{s}.

\begin{definition}[Labelled transition system, run]
\label{def:lts}
A \emph{(deterministic) labelled transition system} $\tsys{}$ is a tuple $\tup{\labels,\states,\istate,\fstates,\transrel}$ where:
\begin{compactitem}[$\bullet$]
\item $\labels$ is a finite set of \emph{labels};
\item $\states$ is a (possibly infinite) set of \emph{states};
\item $\istate \in \states$ is the \emph{initial state};
\item $\fstates \subseteq \states$ is the set of \emph{accepting states};
\item $\transrel: (\states \times \labels) \rightarrow \states$ is a  \emph{transition function}, that is, a partial function that, given a state $\state \in \states$ and a label $l \in \labels$, is either undefined or returns a single successor state $\state' = \transrel(\state,l)$;
\item $\transrel$ is undefined for states in $\fstates$ (that is, final states to not have successors).
\end{compactitem}
A \emph{run} of $\tsys{}$ is a finite (possibly empty) sequence $l_0 \ldots l_n$ such that there exists a corresponding sequence $r_0,\ldots,r_{n+1}$ of states in $\states$ satisfying the following properties: 
\begin{inparaenum}[\itshape (i)]
\item $r_0 = \istate$;
\item $r_{n+1} \in \fstates$;
\item for every $i \in \set{0,\ldots,n}$, we have that $\transrel(r_i,l_i)$ is defined, and $\transrel(r_i,l_i) = r_{i+1}$.
\end{inparaenum}
\end{definition}

When defining the execution semantics of an \pwp, the resulting reachability graph is a labelled transition system with the following charachteristics:
\begin{inparaenum}[\itshape (i)] 
\item labels are built from the \pwp transitions;
\item states correspond to markings of the \pwp that are reachable from the initial state of the \pwp;
\item initial and accepting states correspond to initial and final markings of the \pwp;
\item transitions match transition firings of the \pwp, keeping track of the transition name for provenance.
\end{inparaenum}
The distinction between labels of the \pwp (task names or $\hidden$), and those of the reachability graph (transition names paired with their label) is key to guaranteeing determinism of the reachability graph itself. In fact, labelling the transition system with the labels of the \pwp would introduce nondeterminism.

\begin{definition}[Reachability graph]
\label{def:rg}
The \emph{reachability graph} $\rg{\wproc}$ of an \pwp $\wproc$ is a labelled transition system $\tup{\labels,\states,\istate,\fstates,\transrel}$ with $\labels = \wproc.\net.\transitions$ and whose other components are defined by mutual induction as the minimal sets satisfying the following conditions:
\begin{compactenum}
\item $\istate = \wproc.\imarking$, $\istate \in \states$;
\item for every state $\marking \in \states$, every transition $\transition \in \wproc.\net.\transitions$ with label $l = \wproc.\net.\labeling(t)$, and every marking $\marking' \in \multisets{\wproc.\net.\places}$, if $\firecat{\marking}{\transition}{\marking'}{\wproc.\net}$ we have that
\begin{inparaenum}[(a)]
\item $\marking' \in \states$;
\item if $\marking' \in \wproc.\fmarkings$, then $\marking' \in \fstates$;
\item $\transrel(\marking,\transition) = \marking'$.
\end{inparaenum}
\end{compactenum}
\end{definition}

It will be useful later to refer to outgoing transitions from a given state $\state$. We do so by defining: 
$\outgoing{\state}{\rg{\wproc}} = \set{\state' \mid \rg{\wproc}.\transrel(\state,(t,l))=\state'   \text{ for some } t,l}.
$

Given an \pwp $\wproc$, the runs of $\rg{\wproc}$ (in the sense of Definition~\ref{def:lts}) match with all and only the runs of $\wproc$ (in the sense of Definition~\ref{def:run}). In addition, due to our requirement that all final markings are deadlock markings, accepting states of $\rg{\wproc}$ have no outgoing transitions either.

In general, $\rg{\wproc}$ may contain infinitely many states, if it is possible to unconditionally fire transitions to create bigger and bigger markings. When capturing work processes, this is undesired, as the control flow characterising the possible evolutions of each single process instance is expected to be finite-state. In Petri net terms, this means that work processes should be represented by \textbf{bounded} nets, as defined next.

\begin{definition}[Bounded \pwp]
\label{def:bounded}
An \pwp $\wproc$ is \emph{bounded} if there exists a number $k$ such that, for every reachable marking $\marking \in \rg{\wproc}.\states$ and every place $\place \in \wproc.\net.\places$, we have $\marking(\place) \leq k$.
\end{definition}
As desired, a \bpwp induces a reachability graph that has finitely many states. Boundedness is a standard property assumed when capturing work processes. Verifying boundedness is decidable for Petri nets~\cite{Desel1998} and well-known techniques exist. It is instead undecidable for \gspn{s}, due to the implicit priority of immediate transitions over timed ones, but becomes decidable if these two types of transitions are separated, in the following sense: considering transition enablement only based on the presence of enough tokens in the transition input places, in every reachable marking it is never the case that both some timed and some immediate transition are enabled. The \lgspn{s} of Figures~\ref{fig:sample-net} and~\ref{fig:sample-net-silence} indeed satisfy this separation. 

Since a \bpwp reaches finitely many distinct markings, it also reaches finitely many distinct deadlock ones. We then single out the class of complete \bpwp{s}, which exhaustively consider all their deadlock markings as final ones.

\begin{definition}
\label{def:complete}
A \bpwp $\wproc$ is \emph{complete} if the set $\wproc.\fstates$ of its final markings coincide with the set of deadlock markings in $\rg{\wproc}$.
\end{definition}

In the remainder of this paper, \emph{we focus on \bpwp{s}, always implicitly assuming that they are complete}. This is just for convenience, and we remark that all the technical results presented hereafter seamlessly carry over the whole class of \bpwp{s}. 

\begin{figure}[t]
\centering
\resizebox{\textwidth}{!}{
\begin{tikzpicture}[node distance=5mm and 11mm]

\node [
  lstate={\sstate{0}}{[\pname{s}]}
] (s0) {};
\draw
  ($(s0.north west)+(-3mm,3mm)$)
  edge[arc]
  (s0.north west);

\node [
  lstate={\sstate{1}}{[\pname{1}]},
  below=of s0,
] (s1) {};

\draw[arc]
  (s0)
  edge
  node[right] {(\tname{o},\taskname{open})}
  node[left] {1}  
  (s1);

\node [
  lstate={\sstate{2}}{[\pname{2}]},
  right=of s1,
] (s2)  {};

\draw[arc]
  (s1)
  edge[silent]
  node[above] {(\tname{i_1},\hidden)}
  node[below] {1}  
  (s2);

\draw[arc,out=-100,in=-80]
  (s2)
  edge[silent]
  node[above] {(\tname{s_1},\hidden)}
  node[below] {$\pval{s_1} = \frac{\wval{s_1}}{\wval{s_1}+\wval{s_2}}$}  
  (s1);

\node [
  lstate={\sstate{3}}{[\pname{3}]},
  right=20mm of s2,
] (s3)  {};

\draw[arc]
  (s2)
  edge[silent]
  node[above] {(\tname{s_2},\hidden)}
  node[below] {$\pval{s_2} = \frac{\wval{s_2}}{\wval{s_1}+\wval{s_2}}$}  
  (s3);

\draw[arc,out=120,in=60]
  (s3)
  edge[silent]
  node[below] {(\tname{i_2},\hidden)}
  node[above] {$\pval{i_2} = \frac{\rval{i_2}}{\rval{i_2}+\rval{f}+\rval{c_1}}$}  
  (s2);

\node [
  lstate={\sstate{4}}{[\pname{4}]},
  right=23mm of s3,
] (s4)  {};

\draw[arc]
  (s3)
  edge
  node[above] {(\tname{f},\taskname{fin})}
  node[below,yshift=.5mm] {$\pval{f} = \frac{\rval{f}}{\rval{i_2}+\rval{f}+\rval{c_1}}$}  
  (s4);

\node[invisible,right=of s4] (xor) {};

\node [
  lstate={\sstate{5}}{[\pname{5}]},
  below= of xor,
] (s5)  {};

\draw[arc,out=-90,in=180]
  (s4)
  edge[silent]
  node[above,anchor=south,xshift=3mm] {(\tname{s_3},\hidden)}
  node[left,xshift=2mm,yshift=-2mm] {$\pval{s_3} = \frac{\wval{s_3}}{\wval{s_3}+\wval{s_4}}$}  
  (s5);

\node [
  lstate={\sstate{r}}{[\pname{r}]},
  double,
  below=20mm of s5,
] (sr)  {};

\draw[arc]
  (s5)
  edge
  node[near end,left] {(\tname{r},\taskname{rej})}
  node[near end,right] {$1$}  
  (sr);

\node [
  lstate={\sstate{6}}{[\pname{6}]},
  above= of xor,
] (s6)  {};

\draw[arc,out=90,in=180]
  (s4)
  edge[silent]
  node[below,anchor=north,xshift=3mm] {(\tname{s_4},\hidden)}
  node[left,xshift=2mm,yshift=3mm] {$\pval{s_4} = \frac{\wval{s_4}}{\wval{s_3}+\wval{s_4}}$}  
  (s6);

\node [
  lstate={\sstate{7}}{[\pname{7}]},
  right=of s6,
] (s7)  {};

\draw[arc]
  (s6)
  edge
  node[above] {(\tname{a},\taskname{acc})}
  node[below] {$1$}  
  (s7);

\node [
  lstate={\sstate{c}}{[\pname{c}]},
  double,
] at(sr-|s7) (sc)  {};

\draw[arc,out=-90,in=110]
  (s3)
  edge[silent]
  node[above] {(\tname{c_1},\hidden)}
  node[below] {$\pval{c_1} = \frac{\rval{c_1}}{\rval{i_2}+\rval{f}+\rval{c_1}}$}  
  (sc);

\draw[arc]
  (s7)
  edge[silent]
  node[anchor=west,yshift=16mm] {(\tname{c_2},\hidden)}
  node[anchor=east,yshift=16mm] {$\pval{c_2} = \frac{\rval{c_2}}{\rval{i_3}+\rval{p}+\rval{c_2}}$}  
  (sc);

\draw[arc,out=140,in=60]
  (s7)
  edge[silent]
  node[below] {(\tname{i_3},\hidden)}
  node[above] {$\pval{i_3} = \frac{\rval{i_3}}{\rval{i_3}+\rval{p}+\rval{c_2}}$}  
  (s3);

\node [
  lstate={\sstate{8}}{[\pname{8}]},
  right=25mm of s7,
] (s8)  {};

\draw[arc]
  (s7)
  edge
  node[above] {(\tname{p},\taskname{pay})}
  node[below] {$\pval{p} = \frac{\rval{p}}{\rval{i_3}+\rval{p}+\rval{c_2}}$}   
  (s8);

\node [
  lstate={\sstate{9}}{\big[\begin{array}{@{}l@{}}\pname{9},\\[-3pt]\pname{10}\end{array}\big]},
  below= of s8,
] (s9)  {};

\draw[arc]
  (s8)
  edge[silent]
  node[right] {(\tname{s_5},\hidden)}
  node[left] {$1$}  
  (s9);

\node[invisible,below=of s9] (split) {};

\node [
  lstate={\sstate{11}}{\big[\begin{array}{@{}l@{}}\pname{9},\\[-3pt]\pname{12}\end{array}\big]},
  left= 15mm of split,
] (s11)  {};

\draw[arc]
  (s9)
  edge
  node[below right,xshift=-3mm] {(\tname{l},\taskname{leave})}
  node[above,xshift=-7mm] {$\pval{l} = \frac{\rval{l}}{\rval{l}+\rval{e}}$}  
  (s11);

\node [
  lstate={\sstate{10}}{\big[\begin{array}{@{}l@{}}\pname{10},\\[-3pt]\pname{11}\end{array}\big]},
  right= 15mm of split,
] (s10)  {};

\draw[arc]
  (s9)
  edge
  node[below left,xshift=3mm] {(\tname{e},\taskname{emit})}
  node[above,xshift=7mm] {$\pval{e} = \frac{\rval{e}}{\rval{l}+\rval{e}}$}  
  (s10);

\node [
  lstate={\sstate{12}}{\big[\begin{array}{@{}l@{}}\pname{11},\\[-3pt]\pname{12}\end{array}\big]},
  below= of split,
] (s12)  {};

\draw[arc]
  (s11)
  edge
  node[above right,xshift=-3mm] {(\tname{e},\taskname{emit})}
  node[below left] {1}  
  (s12);

\draw[arc]
  (s10)
  edge
  node[above left,xshift=3mm] {(\tname{l},\taskname{leave})}
  node[below right] {1}  
  (s12);

\node [
  lstate={\sstate{h}}{[\pname{h}]},
  double,
  below= of s12,
] (sh)  {};

\draw[arc]
  (s12)
  edge[silent]
  node[right] {(\tname{s_6},\hidden)}
  node[left] {1}  
  (sh);

\end{tikzpicture}
}
\caption{Stochastic reachability graph of the order-to-cash \pwp. States are named. The initial state is shown with a small incoming edge. Final states have a double contour. To ease readability, in addition to transition names we also include the corresponding task names (or $\hidden$). Some task names are shortened, and we depict transitions that correspond to the firing of silent net transitions in magenta.}
\label{fig:sample-rg}
\end{figure}


\begin{example}
Without considering, for the moment, $\rho$-attachments on edges, Figure~\ref{fig:sample-rg} shows the finite reachability graph of the \pwp $\runningwproc$, which is 1-bounded. It also witnesses that $\runningwproc$ is a complete \bpwp.
\end{example}

We close this section by identifying undesired reachable markings that cannot be progressed to reach one of the final markings of the \pwp of interest. They will be of special interest when reasoning over \bpwp{s} and their stochastic behaviour. 
%
%
%
%
Such markings resemble the well-known notion of \emph{livelock}.

\begin{definition}[Livelock marking]
\label{def:livelock}
A marking $\state \in \rg{\wproc}.\states$ of a \bpwp $\wproc$ is a \emph{livelock marking} if it is not a deadlock marking, and there is no execution that starts from $\state$ and leads to some deadlock marking.
\end{definition}

Since \bpwp{s} have finitely many reachable markings, the only possibility of creating livelocks is to recur over the same set of markings, without ever having the possibility of exiting from the loop to reach a deadlock marking. 

The following example illustrates these notions.

\begin{figure}[t]
\centering
\begin{subfigure}[b]{0.49\linewidth}
    \centering
    \resizebox{\textwidth}{!}{
    \begin{tikzpicture}[node distance=8mm and 6mm]
    
    \node[
      place={\pname{0}}
    ] (p0) {};
    
    \node[
      wtransition = {\tasklabel{b}}{\rval{b}},
      timed, 
      right=of p0,
    ] (b) {\tname{02}};
    
    \node[
      wtransition = {\tasklabel{a}}{\rval{a}},
      timed, 
      above=of b,
    ] (a) {\tname{01}};
    
    \draw[arc] (p0) edge (a);
    
    \node[
      place={\pname{1}},
      right=of a,
    ] (p1) {};
    
    \draw[arc] (a) edge (p1);

    \draw[arc] (p0) edge (b);
    
    \node[
      place={\pname{2}},
      right=of b,
    ] (p2) {};
    
    \draw[arc] (b) edge (p2);
    
  \node[
      wtransition = {\tasklabel{c}}{\rval{c}},
      timed, 
      right=of p2,
    ] (c) {\tname{23}};

    \draw[arc] (p2) edge (c);
    
    \node[
      place={\pname{3}},
      right=of c,
    ] (p3) {};
    
    \draw[arc] (c) edge (p3);
    
    \node[
      wtransition = {\tasklabel{d}}{\rval{d}},
      timed, 
      right=of p1,
    ] (d) {\tname{32}};

    \draw[arc] (p3) edge (d);
    \draw[arc] (d) edge (p2);

    \node[
      wtransition = {\tasklabel{e}}{\rval{e}},
      timed, 
      right=of p3,
    ] (e) {\tname{33}};
    
    \draw[darc] (p3) edge (e);

 \node[
      wtransition = {\tasklabel{f}}{\rval{f}},
      timed, 
      below=of b,
    ] (f) {\tname{04}};
    
    \draw[arc] (p0) edge (f);
    
    \node[
      place={\pname{4}},
      right=of f,
    ] (p4) {};
    
    \draw[arc] (f) edge (p4);

 \node[
      wtransition = {\tasklabel{g}}{\rval{g}},
      timed, 
      right=of p4,
    ] (g) {\tname{45}};
    
    \draw[arc] (p4) edge (g);
    
    \node[
      place={\pname{5}},
      right=of g,
    ] (p5) {};
    
    \draw[arc] (g) edge (p5);

    \end{tikzpicture}
    }
    \caption{Supporting net}
    
    \end{subfigure}%
\hfill
    \begin{subfigure}[b]{0.49\linewidth}
        \centering
        \resizebox{\textwidth}{!}{
        \begin{tikzpicture}[node distance=14mm and 6mm]


\node [
  lstate={\sstate{0}}{[\pname{0}]}
] (s0) {};
\draw
  ($(s0.north west)+(-3mm,3mm)$)
  edge[arc]
  (s0.north west);

\node [
  lstate={\sstate{2}}{[\pname{2}]},
  right=25mm of s0,
  deadend
] (s2) {};

\node [
  lstate={\sstate{1}}{[\pname{1}]},
  above=of s2,
  double
] (s1) {};

\node [
  lstate={\sstate{3}}{[\pname{3}]},
  right= 10mm of s2,
  deadend
] (s3) {};

\draw
  (s0)
  edge[arc,sloped] 
  node[below] {\tname{01}}
  node[above] {$\pval{a} = \frac{\rval{a}}{\rval{a}+\rval{b}+\rval{f}}$}
  (s1);

\draw
  (s0)
  edge[arc,sloped,deadend] 
  node[above] {\tname{02}}
  node[below] {~~~~~~$\pval{b} = \frac{\rval{b}}{\rval{a}+\rval{b}+\rval{f}}$}
  (s2);

\draw 
  (s2)
  edge[arc,deadend] 
  node[above] {\tname{23}}
  node[below] {1}
  (s3);

\draw 
  (s3)
  edge[arc,deadend,out=90,in=90] 
  node[below] {\tname{32}}
  node[above,yshift=-1mm] {$\pval{d} = \frac{\rval{d}}{\rval{d}+\rval{e}}$}
  (s2);

\draw 
  (s3)
  edge[arc,loop right,out=30,in=-30,looseness=5,deadend]
  node[left] {\tname{33}}
  node[below,yshift=-3mm] {$\pval{e} = \frac{\rval{e}}{\rval{e}+\rval{d}}$}
  (s3);

\node[
  lstate={\sstate{4}}{[\pname{4}]},
  below= of s2,
] (s4) {};

\draw 
  (s0)
  edge[arc]
  node[above,sloped] {\tname{04}}
  node[below,sloped] {$\pval{f} = \frac{\rval{f}}{\rval{a}+\rval{b}+\rval{f}}$}
  (s4);

\node[
  lstate={\sstate{5}}{[\pname{5}]},
  below= of s3,
  double
] (s5) {};

\draw[arc] 
  (s4)
  edge[arc]
  node[above] {\tname{45}}
  node[below] {1}
  (s5);

\end{tikzpicture}
}
\caption{Reachability graph}
\end{subfigure}
\caption{Reachability graph (b) of an \pwp with supporting net shown in (a), initial marking $[\place_0]$ and final markings $[\place_1]$ and $[\place_5]$. Markings $\state_2$ and $\state_3$ are livelock markings. The portion of the transition system that cannot reach a final marking is greyed out.}
\label{fig:livelock-deadend}
\end{figure}


\begin{example}
\label{ex:livelock-deadend}
Consider the \pwp $\wproc_{\text{live}}$  of Figure~\ref{fig:livelock-deadend}. The \pwp contains an initial choice. If one of the two transitions $\tname{01}$ and $\tname{04}$ is selected, the process instance completes in the only final marking. If $\tname{02}$ is selected, the process instance enters into a part of the process where it is forced to progress forever - a livelock situation. By looking at the reachability graph, it is in fact clear that the \pwp contains two livelock markings - when a token is assigned either to $\pname{2}$ or $\pname{3}$.
\end{example}

\subsection{Labelled Stochastic Processes and their Stochastic Behaviour}
\label{sec:stochastic}

We finally extend \pwp{s} with stochastic behaviour, by incorporating \textbf{stochastic decision making to determine which enabled transition to fire}. Technically, given a marking, we want to obtain a probability distribution over the transitions that are enabled therein, and use it to choose which one is fired next. This is where transition weights come into play \cite{Mars88}. 

The resulting execution semantics is then defined through a stochastic transition system.

\begin{definition}[Stochastic transition system]
\label{def:stochastic-ts}
A \emph{stochastic transition system} is a tuple $\tup{\labels,\states,\istate,\fstates,\transrel,\probfun}$ where $\tup{\labels,\states,\istate,\fstates,\transrel}$ is a labelled transition system (in the sense of Definition~\ref{def:lts}), while $\probfun: \transrel \rightarrow [0,1]$ is a \emph{transition probability function} mapping each transition in $\transrel$ to a corresponding probability value in $[0,1]$, such that for every non-deadlock marking $\state \in \states$, $\sum_{\xi=\tup{\state_1,\transition,\state_2} \in \transrel \text{~s.t.~}\state_1=\state} \probfun(\xi) = 1$.  
\end{definition}

To define the stochastic execution semantics of \pwp, we then need to indicate how to enrich its reachability graph with a transition probability function. For immediate transitions, weights are directly used to define the relative likelihood that one among the enabled transitions will fire: each firing probability is in fact simply obtained as the ratio between the weight of the corresponding transition and the sum of weights of all enabled transitions. 

An essential aspect of \gspn{s} is that exactly the same probability distribution is obtained also in the case of timed transitions. We briefly recall why this is the case.
Consider two exponential random variables $\expo{X}{\lambda}$ and $\expo{Y}{\mu}$.
We are interested in computing the probability that the delay sampled by $X$ is shorter than that sampled by $Y$, which in turn means that the transition associated by $X$ will \emph{fire first}. By recalling Section~\ref{sec:immediate-timed}, noticing that the two exponential distributions are independent from each other, and that they are 0 for negative values, we get $\prob{X\leq Y}{} =$
\begin{align*}
&\phantom{={}} \int_{0}^{\infty}\int_{x}^{\infty} f_X(x)f_Y(y) dydx 
= \int_{0}^{\infty}\int_{x}^{\infty} \lambda e^{-\lambda x} \mu e^{-\mu y} dydx =
\\
&= \int_{0}^{\infty}\lambda e^{-\lambda x} \int_{x}^{\infty} \mu e^{-\mu y} dydx
= \int_{0}^{\infty}\lambda e^{-\lambda x} \Big[ -e^{-\mu y} \Big]_x^\infty dx =
\\
&= \int_{0}^{\infty}\lambda e^{-\lambda x}e^{-\mu x}dx
= \int_{0}^{\infty}\lambda e^{-(\lambda + \mu)x}dx
= \left[ - \frac{\lambda}{\lambda + \mu}e^{-(\lambda+\mu)x}\right]_0^\infty = \frac{\lambda}{\lambda + \mu}\\ 
\end{align*}
Consequently, we get the following approach to compute the firing probability, which homogeneously applies to immediate and timed transitions.

\begin{definition}[Firing probability]
Let $\wproc$ be an \pwp and $\marking$ be a marking of $\wproc$.
The \emph{firing probability} of a transition $t \in \wproc.\net.\transitions$ in $\marking$, written $\prob{\transition \mid \marking}{\wproc}$, is defined as:
$$\cprob{\transition}{\marking}{\wproc} = 
\begin{cases}
\frac{\weightfun(\transition)}{\sum_{\transition' \in \enaset{\marking}{\wproc.\net}}\weightfun(t')}
 & \transition \in \enaset{\marking}{\wproc.\net}\\
0 & \text{otherwise}
\end{cases} 
$$
\end{definition}

The execution semantics of an \pwp is then defined by extending its reachability graph with a transition probability function based on firing probabilities.

\begin{definition}[Stochastic reachability graph]
\label{def:rg}
The \emph{stochastic reachability graph} $\srg{\wproc}$ of an \pwp $\wproc$ is a stochastic labelled transition system $\tup{\labels,\states,\istate,\fstates,\transrel,\probfun}$ where $\tup{\labels,\states,\istate,\fstates,\transrel} = \rg{\wproc}$, and $\probfun$ is defined as follows: for every transition $\xi = \tup{\marking,\transition,\marking'}\in \transrel$, we have that $\probfun(\xi) = \prob{\transition \mid \marking}{\wproc}$.
\end{definition}

\begin{example}
Consider the \pwp $\runningwproc$ from Example~\ref{ex:running-process}, whose supporting net is shown in Figure~\ref{fig:sample-net-silence}. Figure~\ref{fig:sample-rg} captures the stochastic reachability graph $\srg{\runningwproc}$ (using the symbolic weights/rates shown in Figure~\ref{fig:sample-net-silence}). If we fix the weight $\weightfun(\tname{s_4})$ of transition $\tname{s_4}$ to 80, and the weight $\weightfun(\tname{s_3})$ of transition $\tname{s_3}$ to 20, we are representing that, whenever a decision has to be taken on whether to accept or reject an order, then there is $0.2$ chance that the decision is of rejection, and $0.8$ chance that the decision is of acceptance.
\end{example}

A final step in the definition of the execution semantics of \pwp{s}, is to lift firing probabilities to the level of runs and traces. As customary in \gspn{s}, since every firing contained in a run corresponds to an independent choice, we simply need to compute the product of the firing probabilities.

\begin{definition}[Run probability]
The probability that a \pwp $\wproc$ produces a run $\run = t_0,\ldots,t_n$ of $\wproc$, called the \emph{run probability of $\run$ according to $\wproc$}, is: 
$$\prob{\run}{\wproc} = \prod_{i=1}^{n}\cprob{t_i}{\marking_{i}}{\wproc}$$
where $m_0,\ldots,m_{n+1}$ is the supporting marking sequence for $\run$ (cf.~Definition \ref{def:execution}). The same notion is equivalently defined given as input a stochastic transition system.
\end{definition}

The probability of a trace is then simply obtained by summing up the probabilities of all the runs inducing that trace. 

\begin{definition}[Trace probability]
\label{def:trace-probability}
The probability that a \pwp $\wproc$ produces a model trace $\trace$ of $\wproc$, called the \emph{trace probability of $\run$ according to $\wproc$}, is: 
$$\prob{\trace}{\wproc} = \sum_{\run \in \runs{\trace}{\wproc}} \prob{\run}{\wproc}$$
\end{definition}
The central problem we are now facing, which will be tackled in the second part of the article, is that due to Remark~\ref{rem:runs-traces}, this is in general an infinite sum, even in the case of bounded \pwp{s}.

\begin{example}
\label{ex:infinite-sum}
Consider the model trace $\trace_{ar}$ of $\runningwproc$ from Example~\ref{ex:traces}. The trace probability $\prob{\trace_{ar}}{\runningwproc}$ cannot be directly computed, as it requires to sum the probabilities of the infinitely many runs $\runs{\trace_{ar}}{\runningwproc}$. 
\end{example}

\newcommand{\outcomeprob}{\textsc{outcome-prob}\xspace}
\newcommand{\absorptionprob}{\textsc{absorption-prob}\xspace}
\newcommand{\verifprob}{\textsc{spec-prob}\xspace}
\newcommand{\traceprob}{\textsc{trace-prob}\xspace}

\newcommand{\finalstates}{F}

\newcommand{\varstate}[1]{x_{#1}}
\newcommand{\optsys}[2]{\mathcal{O}_{#1}^{#2}}
\newcommand{\eqsys}[2]{\mathcal{E}_{#1}^{#2}}
\newcommand{\equsys}[3]{\mathcal{E}_{#1}(#2,#3)}

\newcommand{\lang}[1]{\mathcal{L}(#1)}

\section{Analysis of Bounded Stochastic Labelled Processes}
\label{sec:state-probabilities}

\newcommand{\markovstates}{V}
\newcommand{\markovastates}{\markovstates^a}
\newcommand{\markovtstates}{\markovstates^t}
\newcommand{\markovstate}{v}
\newcommand{\walk}{w}
\newcommand{\markovfinalstates}{A}

\newcommand{\chain}{\mathcal{M}}
\newcommand{\chainfun}{\mathit{Chain}}

Now that we have settled the formal model of \bpwp{s}, we are ready to tackle their analysis. We recall from the introduction the five fundamental analysis tasks we tackle.
\begin{compactenum}
\item \textbf{Outcome probability}: what is the probability that the \bpwp evolves from the initial marking to one (or a subset) of its final markings?  
\item \textbf{Trace probability}: what is the probability of a given trace of the \bpwp?
\item \textbf{Specification probability}: what is the probability that the \bpwp produces a trace that satisfies a given qualitative specification that captures desired behaviour?
\item \textbf{Stochastic compliance}: is the \bpwp of interest compatible, in behavioural and stochastic terms, to a probabilistic declarative specification \cite{AMMP22} indicating which temporal constraints are expected to hold, and with which probability?
\item \textbf{Stochastic conformance checking}: how can we employ the previous analysis questions, in particular trace probability, to improve the correctness and applicability of existing stochastic conformance checking techniques \cite{LeeSA19} relating a reference stochastic process model to a recorded log?
\end{compactenum}

We address such problems as follows:
\begin{compactenum}[(a)]
\item In Section~\ref{sec:outcome}, we formalise the problem of \textbf{outcome probability} and show how it can can be solved using \emph{state-of-the-art analytic techniques for the analysis of absorbing Markov chains \cite{GriS97}}.
\item In Section~\ref{sec:spec}, we formalise the problem of \textbf{specification probability} and show how it can be reduced to that of \textbf{outcome probability}; we do so by employing \emph{state-of-the-art qualitative model checking techniques for Markov chains} \cite{BaiK08}, infused with a \emph{technique to handle silent transitions} and suitably defined over finite traces. We then argue that traces can be encoded as specifications, hence reducing \textbf{trace probability} to \textbf{specification probability}. 
\item In Section~\ref{sec:delta}, we introduce the problem of \textbf{stochastic compliance}, and show how it can be solved via  \textbf{specification probability}.
\item In Section~\ref{sec:evaluation}, we describe a \emph{proof-of-concept implementation} to solve the specification probability problem following the technique explained in Section~\ref{sec:spec}, and put the implementation into practice by showing how \textbf{stochastic conformance checking} measures can be \emph{effectively computed on a number of real-life logs}. 
\end{compactenum}

\section{Computing Outcome Probabilities}
\label{sec:outcome}

We tackle the problem of computing outcome probabilities, that is, determining the probability that a process instance of the \bpwp of interest evolves from the initial to one (or a subset) of the final markings, representing the \emph{outcome(s) of interest}. For example, we may be interested in knowing the probability that the \bpwp $\runningwproc$ of our running example (Figure~\ref{fig:sample-net}) evolves an order from \emph{opening} (marking $[\place{s}]$) to \emph{happy completion} (marking $[\place{h}]$). 

Formally, we define the problem as follows:

\begin{problem}[$\outcomeprob(\wproc,\finalstates)$]
~
\begin{compactdesc}
\item[Input:] \bpwp $\wproc$, non-empty set $\finalstates \subseteq \wproc.\fmarkings$;
\item[Output:] probability value $\prob
{\finalstates}{\wproc} = \sum_{\run \text{ run of }\wproc \text{ ending in some }\marking \in \finalstates} \prob{\run}{\wproc}$.
\end{compactdesc}
\end{problem}
The same problem can also get, as input, a stochastic transition system instead of a \bpwp.

$\outcomeprob$ cannot be solved through an enumeration of runs, as there may be infinitely many runs reaching the desired marking(s). So far, approximated enumerations have been used, by fixing a maximum threshold either on the length of runs~\cite{LABP21}, or on their minimum probability \cite{BMMP21b}. 

\subsection{Absorbing Markov Chains}
\label{sec:absorbing-markov-chains}

To obtain an analytically exact answer, we build on the direct connection, known from literature \cite{molloy1982performance,MarsanCB84}, between \gspn{s} and discrete-time Markov chains \cite{Durr12,BaiK08}. This is possible since if one does not consider labels, stochastic transition systems (in the sense of Definition~\ref{def:stochastic-ts}) are in fact discrete-time Markov chains. In the literature, the Markov chain corresponding to the reachability graph of a bounded \gspn is called the \emph{embedded/jump} chain, which can be seen as the core of the continuous-time Markov chain capturing the timed execution semantics of the net, once delays are stripped away and one is only interested in the firing probabilities \cite{molloy1982performance,MarsanCB84}. Forgetting labels has another effect: it blurs the distinction between silent and visible transitions. Hence, the same approach used in the literature for bounded \gspn{s} can be employed, in spirit, to deal with \bpwp{s}. There is a main different, though: we have to suitably mirror the finite-trace semantics of a \bpwp into its corresponding Markov chains. We do so by establishing a connection between \bpwp{s} and the class of \emph{absorbing Markov chains} \cite[Chapter 11]{GriS97}. 

Instead of using the classical definition of Markov chains via probability transition matrixes, we define them via their graph representation, which provides a direct correspondence with stochastic transition systems as per Definition~\ref{def:stochastic-ts}.

\begin{definition}[Markov chain, walk, walk probability]
\label{def:markov-chain}
An \emph{absorbing Markov chain} is a tuple $\chain = \tup{\markovstates,\transrel,\probfun}$ where:
\begin{compactitem}[$\bullet$]
\item $\markovstates$ is a finite set of \emph{states};
\item $\transrel \subseteq \markovstates \times \markovstates$ is a \emph{transition relation}, such that $\tup{\markovstates,\transrel}$ is a \emph{connected graph};
\item $\probfun: \transrel \rightarrow [0,1]$ is a \emph{transition probability function} mapping each transition in $\transrel$ to a corresponding probability value in $[0,1]$, such that for every state $\markovstate \in \markovstates$, we have $\sum_{\xi=\tup{\markovstate_1,\markovstate_2} \in \transrel \text{~s.t.~}\markovstate_1=\markovstate} \probfun(\xi) = 1$.
\end{compactitem}
A \emph{walk} over $\chain$ from state $\markovstate_0 \in \markovstates$ to state $\markovstate_n \in \markovstates$ is a sequence $\walk = \markovstate_0 \cdots \markovstate_n$ of states, such that for every $i \in \set{0,n-1}$, we have $\tup{\markovstate_i,\markovstate_{i+1}} \in \transrel$. The probability of $\walk$ is $\prob{\walk}{\chain} = \prod_{i=0}^{n-1} \probfun(\tup{\markovstate_i,\markovstate_{i+1}})$.
\end{definition}

\begin{definition}[Absorbing Markov chain, proper walk]
\label{def:absorbing-markov-chain}
An \emph{absorbing Markov chain} is a Markov chain $\chain = \tup{\markovstates,\transrel,\probfun}$, where the set $\markovstates$ of states is partitioned into two subsets $\markovastates$ and $\markovtstates$:
\begin{compactitem}[$\bullet$]
\item $\markovastates$ is the set of \emph{absorbing states}, so that $\markovstate \in \markovastates$ if and only if the only transition having $\markovstate$ as source is the self-loop $\tup{\markovstate,\markovstate}$ (thus implying $\probfun(\tup{\markovstate,\markovstate}) = 1$);
\item $\markovtstates = \markovstates \setminus \markovastates$ is the set of \emph{transient states}, so that $\markovstate \in \markovtstates$ if and only if there is a walk from $\markovstate$ to some absorbing state in $\markovastates$. 
\end{compactitem} 
A \emph{proper walk} over $\chain$ is a walk from a transient state $\markovstate \in \markovtstates$ to an absorbing state $\markovstate_f \in \markovastates$ , which does not recur over $\markovstate_f$ (i.e., the previous-last state of the walk is different from $\markovstate_f$).
\end{definition}

The notion of proper walk helps establishing a direct correspondence with runs of an \pwp. Notice that having longer walks recurring over the same absorbing state would not change at all their overall probability. 

Within the wide spectrum of analysis tasks for Markov chains, the main problem we are interested in is that of calculating the \emph{absorption probability}. Given a transient state $\markovstate$ and a set $\markovfinalstates$ of one or more absorption states, the absorption probability of $\markovstate$ is the probability that a walk from $\markovstate$ ends up in one of the states $\markovfinalstates$. This corresponds to the overall probability of proper walks from $\markovstate$ to one of the states in $\markovfinalstates$.

\begin{problem}[$\absorptionprob(\chain,\markovstate,\markovfinalstates)$]
~
\begin{compactdesc}
\item[Input:] Absorbing Markov chain $\chain$, transient state $\markovstate \in \chain.\markovtstates$, non-empty set $\markovfinalstates \subseteq \chain.\markovastates$ of absorbing states;
\item[Output:] probability value $\cprob{\markovfinalstates}{\markovstate}{\wproc} = \sum_{\walk \text{ proper walk from }\markovstate \text{ to some }\markovstate_f \in \markovfinalstates} \prob{\walk}{\chain}$.
\end{compactdesc}
\end{problem}

This problem can be analytically tackled by solving a system of linear equations, which encode the step-wise semantics of the input absorbing Markov chain, and how probabilities are propagated therein. In particular, every state $\markovstate_i \in \chain.\markovstates$ gets a corresponding state variable $\varstate{\markovstate_i}$, capturing the probability the probability $\cprob{\markovfinalstates}{\markovstate_i}{\chain}$ of reaching one of the absorbing states in $\markovfinalstates$ from state $\markovstate_i $; this means that variable $\varstate{\markovstate}$ represents the solution of the problem. Then, each equation defines the value of a state variable $\varstate{\markovstate_i}$ as follows:
\begin{compactdesc}
\item[Base case (targeted absorbing state)] if $\markovstate_i \in \markovfinalstates$, that is, $\markovstate_i$ is one of the targeted absorbing states), then  $\varstate{\markovstate_i} = 1$;
\item[Base case (non-targeted absorbing state)] if $\markovstate_i \in \markovastates \setminus \markovfinalstates$, that is, $\markovstate_i$ is a non-targeted absorbing state, then $\varstate{\markovstate_i} = 0$;  
\item[Inductive case] if $\markovstate_i$ is a transient state, its variable is equal to sum of the state variables of its successor states, each weighted by the correspnding transition probability to move to that successor.
\end{compactdesc} 

Formally, $\absorptionprob(\chain,\markovstate,\markovfinalstates)$   leads to the following \emph{system $\equsys{\chain}{\markovstate}{\markovfinalstates}$ of linear equations}, defined as follows:
\begin{align}
 \varstate{\markovstate_i} 
    & = 1 
    && \text{for each } \markovstate_i \in \markovfinalstates 
  \label{eq:lin:markov-final}
\\
  \varstate{\markovstate_j} 
    & = 0 
    && \text{for each } \markovstate_j \in \chain.\markovastates \setminus \markovfinalstates   \label{eq:lin:markov-rej}
\\
  \varstate{\markovstate_k} 
    & = \sum_{\tup{\markovstate_k,\markovstate_k'} \in \chain.\transrel} \chain.\probfun(\tup{\markovstate_k,\markovstate_k'}) \cdot \varstate{\markovstate_k'}
    && \text{for each } \markovstate_k \in \chain.\markovtstates
  \label{eq:lin:markov-flow}
\end{align}
We will introduce examples of systems of this kind in the next section. To conclude this introductory discussion on absorbing Markov chains, we recall the following central result.

\begin{theorem}[From \cite{GriS97}]
\label{thm:absorption}
Given an absorbing Markov chain $\chain$, a transient state $\markovstate \in \chain.\markovtstates$, and a non-empty subset $\markovfinalstates$ of absorbing states from $\chain.\markovastates$, we have:
\begin{compactenum}
\item $\equsys{\chain}{\markovstate}{\markovfinalstates}$ has exactly one solution;
\item $\absorptionprob(\chain,\markovstate,\markovfinalstates)$ coincides with the value of state variable $\varstate{\markovstate}$ extracted from the unique solution of $\equsys{\chain}{\markovstate}{\markovfinalstates}$. 
\end{compactenum} 

\end{theorem}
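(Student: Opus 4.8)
The plan is to recast the linear system $\equsys{\chain}{\markovstate}{\markovfinalstates}$ in matrix form, prove that it has a unique solution by reducing unique solvability to the invertibility of a single matrix, and then exhibit the vector of absorption probabilities as a solution, so that uniqueness forces the two to coincide. Concretely, I would first fix an enumeration of the states and split them into the absorbing block $\markovastates$ and the transient block $\markovtstates$. Equations~\eqref{eq:lin:markov-final} and~\eqref{eq:lin:markov-rej} pin every absorbing variable to a constant ($1$ on $\markovfinalstates$, $0$ on $\chain.\markovastates \setminus \markovfinalstates$), so the only genuine unknowns are the transient variables, which I collect into a vector $x$. Writing $Q$ for the $\markovtstates \times \markovtstates$ sub-stochastic matrix of transition probabilities between transient states, and folding the now-constant absorbing contributions of~\eqref{eq:lin:markov-flow} into a vector $c$, the transient equations become $x = Qx + c$, i.e. $(I-Q)\,x = c$. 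Unique solvability of the whole system is therefore equivalent to $I - Q$ being nonsingular.

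For item~1, I would establish invertibility of $I-Q$ by showing that the spectral radius satisfies $\rho(Q) < 1$, equivalently $Q^n \to 0$. This is where the structural hypothesis of Definition~\ref{def:absorbing-markov-chain} is used: every transient state admits a walk to some absorbing state. Since $\chain$ is finite, this qualitative reachability can be promoted to a uniform quantitative bound, namely there exist a horizon $N_0$ and a constant $\varepsilon > 0$ such that from \emph{any} transient state the probability of reaching $\markovastates$ within $N_0$ steps is at least $\varepsilon$. Hence the probability of remaining among transient states for $mN_0$ steps is at most $(1-\varepsilon)^m \to 0$, which bounds the entries of $Q^{mN_0}$ and yields $\rho(Q) < 1$. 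Then $I-Q$ is invertible, with $(I-Q)^{-1} = \sum_{k \ge 0} Q^k$ (the fundamental matrix), giving existence and uniqueness of $x$ and hence of the solution of $\equsys{\chain}{\markovstate}{\markovfinalstates}$.

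For item~2, I would let $h_{\markovstate_i} = \cprob{\markovfinalstates}{\markovstate_i}{\chain}$ denote the absorption probability into $\markovfinalstates$, i.e. the total probability mass of proper walks from $\markovstate_i$ to $\markovfinalstates$, and verify that $h$ solves the same system. The base cases are immediate: from a state in $\markovfinalstates$ the trivial proper walk already succeeds, so $h = 1$; from a non-targeted absorbing state the self-loop traps the walk forever, so $h = 0$. For a transient state $\markovstate_k$, a first-step analysis groups the proper walks by their initial transition, and in every case the continuation from the successor $\markovstate_{k'}$ contributes $\probfun(\tup{\markovstate_k,\markovstate_{k'}}) \cdot h_{\markovstate_{k'}}$ — uniformly whether $\markovstate_{k'}$ is transient, targeted absorbing, or non-targeted absorbing — which reproduces exactly~\eqref{eq:lin:markov-flow}. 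To make this regrouping of an infinite sum rigorous I would introduce the truncations $h^{(n)}$ (absorption within $n$ steps), check the one-step recursion for them, and pass to the monotone limit $h^{(n)} \nearrow h$. Thus $h$ is a solution of $\equsys{\chain}{\markovstate}{\markovfinalstates}$; by item~1 it is the unique one, whence $\varstate{\markovstate} = h_{\markovstate} = \absorptionprob(\chain,\markovstate,\markovfinalstates)$.

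The main obstacle is the step $Q^n \to 0$: the entire argument hinges on converting the merely qualitative assumption that each transient state can reach an absorbing state into the uniform geometric bound above, after which invertibility and the Neumann series representation of $(I-Q)^{-1}$ follow mechanically. The only other delicate point is justifying the interchange of summation in the first-step analysis, but the truncation-and-limit device resolves this cleanly, since the quantities involved are monotone and bounded by $1$.
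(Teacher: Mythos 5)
Your proof is correct, and in fact the paper gives no proof of this theorem at all: it imports the result verbatim from the cited reference \cite{GriS97} (Chapter 11), and your argument — the block decomposition into $(I-Q)x = c$ with absorbing variables pinned to constants, invertibility of $I-Q$ via the uniform geometric bound forcing $Q^n \to 0$, and a first-step analysis showing the absorption probabilities solve the same system so that uniqueness yields the identification — is precisely the standard fundamental-matrix treatment found there. The two delicate points you flag (promoting qualitative reachability to the bound $Q^n \to 0$, and justifying the regrouping of the infinite sum via monotone truncations) are handled soundly, so nothing is missing.
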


\subsection{Connecting \pwp{s} to (Absorbing) Markov Chains}
\label{sec:lsp-to-markov-chains}

We now encode stochastic transition systems and \pwp{s} into corresponding Markov chains. The encoding is straightforward: transition labels are stripped off, and final states get a self-loop with probability 1, thus becoming absorbing states in the chain.

\begin{definition}
\label{def:chain-encoding}
The embedded Markov chain of a stochastic transition system $\tsys{} = \tup{\labels,\states,\istate,\fstates,\transrel,\probfun}$ is the Markov chain $\chainfun(\tsys{}) = \tup{\markovstates,\transrel',\probfun'}$, where:
\begin{compactitem}[$\bullet$]
\item $\markovstates = \states$;
\item $\transrel' = \transrel \cup \set{\tup{\state,\state} \mid \state \in \fstates}$;
\item $\probfun' = \set{\tup{\state_1,\state_2} \mapsto \probfun(\tup{\state_1,\ell,\state_2}) \mid \tup{\state_1,\ell,\state_2} \in \transrel} \cup \set{\tup{\state,\state} \mapsto 1 \mid \state \in \fstates}$.
\end{compactitem}
The embedded Markov chain of a \bpwp $\wproc$ is the Markov chain $\chainfun(\wproc) = \chainfun(\rg{\wproc})$.
\end{definition}

By inspecting the notion of reachability graph of a \bpwp, as well as the notion of livelock marking, we immediately get the following important result.

\begin{lemma}
\label{lem:lsp-chain}
For a livelock-free \bpwp $\wproc$, we have that $\chainfun(\wproc)$ is an absorbing Markov chain.
\end{lemma}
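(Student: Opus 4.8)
The plan is to verify directly that $\chainfun(\wproc)$ satisfies every defining condition of Definitions~\ref{def:markov-chain} and~\ref{def:absorbing-markov-chain}, treating the well-formedness of the underlying chain as routine and concentrating the real work on showing that the partition of states into absorbing and transient ones is \emph{exhaustive} — this is precisely the point at which livelock-freeness enters.

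First I would confirm that $\chainfun(\wproc) = \chainfun(\rg{\wproc})$ is a legitimate Markov chain. Its state set coincides with $\rg{\wproc}.\states$, which is finite because $\wproc$ is bounded (Definition~\ref{def:bounded}). The underlying graph is connected: by the inductive construction of the reachability graph every reachable marking lies on a directed path from $\istate = \wproc.\imarking$, so the underlying undirected graph is connected, and adding self-loops preserves this. For the stochasticity condition I would split into two cases. At a non-deadlock marking $\marking$ the outgoing edges of $\chainfun(\wproc)$ are exactly those of the stochastic reachability graph $\srg{\wproc}$, whose firing probabilities $\cprob{\transition}{\marking}{\wproc}$ already sum to $1$ by the firing-probability definition together with the stochastic-transition-system condition of Definition~\ref{def:stochastic-ts} (collapsing parallel edges with identical endpoints by summing their probabilities). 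At a final marking $\marking \in \wproc.\fmarkings$, which is a deadlock by completeness, the encoding discards the (empty) set of outgoing edges and adds the single self-loop $\tup{\marking,\marking}$ of probability $1$. Hence the outgoing probabilities sum to $1$ at every state.

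Next I would fix the candidate partition $\markovastates = \wproc.\fmarkings$ and $\markovtstates = \rg{\wproc}.\states \setminus \wproc.\fmarkings$ and check it meets Definition~\ref{def:absorbing-markov-chain}. Each final marking is absorbing: being a deadlock it has no original outgoing edge, so after encoding its only transition is the self-loop of probability $1$. Conversely, I must show that no non-final marking is absorbing and that each is transient. A non-final marking $\marking$ is not a deadlock (by completeness deadlocks are final), so $\enaset{\marking}{\wproc.\net} \neq \emptyset$; moreover, since $\wproc$ is livelock-free, $\marking$ is not a livelock marking (Definition~\ref{def:livelock}), so some execution starts in $\marking$ and reaches a deadlock marking $\marking_f$. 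That deadlock is final, hence absorbing, and the execution yields a walk from $\marking$ to $\marking_f$ in $\chainfun(\wproc)$, making $\marking$ transient. The same fact rules out $\marking$ being absorbing: were its only transition the self-loop, every walk from $\marking$ would stay at $\marking$, contradicting the existence of a walk to the distinct deadlock $\marking_f$.

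The main obstacle is ensuring that this partition is genuinely \emph{exhaustive} — that no reachable marking is left neither absorbing nor transient. This is exactly what livelock-freeness buys: a livelock marking (for instance one trapped in a bottom strongly connected component of the reachability graph containing no deadlock) would be non-absorbing yet would admit no walk to any absorbing state, so it would fall outside both $\markovastates$ and $\markovtstates$, and $\chainfun(\wproc)$ would fail to be an absorbing Markov chain. Definition~\ref{def:livelock} precisely excludes such states, so under the livelock-freeness hypothesis every non-final reachable marking is transient, the partition is complete, and the claim follows.
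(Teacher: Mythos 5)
Your proof is correct and takes essentially the same route as the paper's own argument: final markings (being deadlocks) become absorbing states via the added self-loops, and livelock-freedom guarantees that every other reachable marking admits an execution to a deadlock marking---final by the standing completeness assumption---and is therefore transient. The paper's proof is simply a terser version of yours; your extra verifications of finiteness, connectedness, and stochasticity (including summing the probabilities of parallel edges that collapse when labels are stripped) are routine well-formedness details the paper leaves implicit.
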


\begin{proof}
From Definition~\ref{def:chain-encoding}, we have that the final markings of $\rg{\wproc}$, which are deadlock markings and hence do not have any successor, get encoded by $\chainfun(\tsys{})$ into absorbing states. By Definition~\ref{def:livelock}, livelock-freedom implies that every other marking in $\rg{\wproc}$ reaches some of such final marking, thus getting encoded by $\chainfun(\tsys{})$ into a transient state.
\end{proof}

\subsection{Outcome Probability and Stochastic Languages for Livelock-free \bpwp{s}}
\label{sec:outcome-probability-livelock-free}

Lemma~\ref{lem:lsp-chain} provides the basis for showing that the outcome probability problem can be casted as an absorption probability problem, and hence analytically solved through the technique mentioned in Theorem~\ref{thm:absorption}.

\begin{theorem}
\label{thm:outcome-probability}
For every livelock-free \bpwp $\wproc$ and set $\finalstates \subseteq \wproc.\fmarkings$, we have that
$\outcomeprob(\wproc,\finalstates)$ has a unique solution, which coincides with that of $\absorptionprob(\chainfun(\wproc),\wproc.\istate,\finalstates)$.
\end{theorem}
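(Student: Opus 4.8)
The plan is to reduce the outcome-probability problem to the absorption problem $\absorptionprob(\chainfun(\wproc),\wproc.\imarking,\finalstates)$ and then invoke Theorem~\ref{thm:absorption}, which already supplies existence and uniqueness of the solution. Before the reduction I would check that the inputs are well-typed. By Lemma~\ref{lem:lsp-chain}, livelock-freedom guarantees that $\chainfun(\wproc)$ is an absorbing Markov chain. Since every marking in $\finalstates \subseteq \wproc.\fmarkings$ is a deadlock marking, Definition~\ref{def:chain-encoding} equips it with a probability-$1$ self-loop, so $\finalstates$ is a set of absorbing states. For the source I would argue that $\wproc.\imarking$ is transient: by livelock-freedom every reachable non-deadlock marking reaches a final one, so either $\wproc.\imarking$ is transient (the interesting case) or it is itself a deadlock marking, which I would dispatch separately, as then the only run is the empty one and both quantities equal $1$ or $0$ according to whether $\wproc.\imarking \in \finalstates$.

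The core is a probability-preserving correspondence between the runs of $\wproc$ that end in $\finalstates$ and the proper walks of $\chainfun(\wproc)$ from $\wproc.\imarking$ into $\finalstates$. I would route this through the reachability graph, using that the runs of $\wproc$ coincide with those of $\rg{\wproc}$. Given a run with supporting marking sequence $\marking_0,\ldots,\marking_{n+1}$ (Definition~\ref{def:execution}), I map it to the state sequence $\marking_0\cdots\marking_{n+1}$, which is a walk in $\chainfun(\wproc)$ because each firing $\firecat{\marking_i}{\transition_i}{\marking_{i+1}}{\wproc.\net}$ yields a chain edge $\tup{\marking_i,\marking_{i+1}}$. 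This walk starts at $\wproc.\imarking$, ends at the absorbing state $\marking_{n+1}\in\finalstates$, and is \emph{proper}: its previous-last state $\marking_n$ differs from $\marking_{n+1}$, since $\marking_{n+1}$ is a deadlock whereas $\transition_n$ is enabled in $\marking_n$. Probability preservation then follows term by term, because by Definition~\ref{def:chain-encoding} each chain edge $\tup{\marking_i,\marking_{i+1}}$ carries exactly the firing probability $\cprob{\transition_i}{\marking_i}{\wproc}$, so $\prob{\walk}{\chainfun(\wproc)}=\prob{\run}{\wproc}$.

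Summing over this correspondence gives $\prob{\finalstates}{\wproc}=\cprob{\finalstates}{\wproc.\imarking}{\chainfun(\wproc)}$, i.e.\ the two problems have the same value. Applying Theorem~\ref{thm:absorption} to $\chainfun(\wproc)$, $\wproc.\imarking$ and $\finalstates$ then shows that this common value is the unique solution of the linear system $\equsys{\chainfun(\wproc)}{\wproc.\imarking}{\finalstates}$, which is exactly what the statement asserts.

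I expect the delicate part to be making the correspondence airtight rather than the final appeal to Theorem~\ref{thm:absorption}. Two points demand care. First, the notion of \emph{proper} walk is precisely what blocks the probability-$1$ self-loops at absorbing states from spawning infinitely many walks of identical probability for a single run; I must verify that it keeps exactly one walk per run, so that the (in general infinite but convergent) sum over proper walks matches the sum over runs. Second, probability preservation through $\chainfun$ presupposes that each ordered pair of markings carries a single, well-defined probability in the chain. This is subtle when several net transitions realize the same marking change $\marking_i\to\marking_{i+1}$ and are collapsed by the encoding; I would confirm that in that case the corresponding firing probabilities aggregate consistently, so that no probability mass is lost or double-counted and the term-by-term argument still goes through.
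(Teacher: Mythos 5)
Your proposal is correct and takes essentially the same route as the paper's proof: invoke Lemma~\ref{lem:lsp-chain} to obtain an absorbing Markov chain, set up a probability-preserving correspondence between runs of $\wproc$ ending in $\finalstates$ and proper walks of $\chainfun(\wproc)$ from $\wproc.\imarking$, equate the two sums, and conclude existence and uniqueness via Theorem~\ref{thm:absorption}. The additional care you flag — the degenerate case where $\wproc.\imarking$ is itself a deadlock marking, and the aggregation of probabilities when distinct net transitions realize the same marking change and are collapsed by $\chainfun$ — actually goes beyond the paper, whose proof simply asserts a one-to-one run/proper-walk correspondence without addressing either point.
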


\begin{proof}
Lemma~\ref{lem:lsp-chain} guarantees that $\chainfun(\wproc)$ is an absorbing Markov chain. This implies, by Theorem~\ref{thm:absorption}, that $\absorptionprob(\chainfun(\wproc),\wproc.\istate,\finalstates)$ has a unique solution. By contrasting the definition of run over a \pwp as per Definition~\ref{def:run}, with the definition of proper run as per Definition~\ref{def:absorbing-markov-chain}, we have that every run over $\wproc$ corresponds one-to-one to a proper walk over $\chainfun(\wproc)$ from $\istate$. Hence:
\begin{align*}
 \outcomeprob(\wproc,\finalstates) &= 
\sum_{\run \text{ run of }\wproc \text{ ending in some }\marking \in \finalstates} \prob{\run}{\wproc}\\
&= \sum_{\walk \text{ proper walk of }\chainfun(\wproc)  \text{ from } \istate \text{ to some }\marking \in \finalstates} \prob{\walk}{\chainfun(\wproc)}\\
&= \absorptionprob(\chainfun(\wproc),\wproc.\istate,\finalstates)
\end{align*} 
\end{proof}

Considering Theorem~\ref{thm:outcome-probability}, the encoding of livelock-free \bpwp{s} into absorbing Markov chains, and the definition of the corresponding systems of equations, we can directly solve the outcome probability problem $\outcomeprob(\wproc,\finalstates)$ for a livelock-free \bpwp and a nonempty set $\finalstates \subseteq \wproc.\fmarkings$ by extracting the value of variable $\varstate{\wproc.\imarking}$ from the unique solution of the following system $\eqsys{\wproc}{\fmarkings}$ of equations:
\begin{align}
 \varstate{\marking_i} 
    & = 1 
    && \text{for each marking } \marking_i \in \finalstates 
  \label{eq:lin:final}
\\
  \varstate{\marking_j} 
    & = 0 
    && \text{for each marking } \marking_j \in \wproc.\fmarkings \setminus \finalstates   \label{eq:lin:rej}
\\
  \varstate{\marking_k} 
    & = \hspace*{-10mm}\sum_{\tup{\marking_k,l,\marking_k'} \in \outgoing{\marking_k}{\rg{\wproc}}} \hspace*{-14mm}\probfun(\tup{\marking_k,l,\marking_k'}) \cdot \varstate{\marking_k'}
    && \text{for each marking } \marking_k \in \rg{\wproc}.\states \setminus \fmarkings
  \label{eq:lin:flow}
\end{align}

\begin{example}
\label{ex:outcome}
Consider the \bpwp $\myproc$ (Figure~\ref{fig:sample-net}). We want to solve $\outcomeprob(\myproc,\marking_h)$ to compute the probability that a created order is eventually paid and shipped. To do so, we encode the reachability graph of Figure~\ref{fig:sample-rg} into the system $\eqsys{\myproc}{\marking_h}$. 
Specifically, noticing that the overall sum of all probability values appearing in the outgoing edges of a given marking is always 1, we get:
\begin{align*}
\varstate{\marking_h} &= 1\\
\varstate{\marking_c} &= \varstate{\marking_r} = \varstate{\marking_5} = 0\\
\varstate{\marking_{10}} &= \varstate{\marking_{11}} = 
\varstate{\marking_{12}} = \varstate{\marking_h} = 1\\
\varstate{\marking_{8}} &= \varstate{\marking_{9}} = \pval{e}\varstate{\marking_{10}} + \pval{l}\varstate{\marking_{11}} = \pval{e} + \pval{l} = 1\\
\varstate{\marking_{6}} &=   
\varstate{\marking_{7}} =
\pval{i_3}\varstate{\marking_{3}} +
\pval{p}\varstate{\marking_{8}} +
\pval{c_2}\varstate{\marking_{c}} = 
\pval{i_3}\varstate{\marking_{3}} +
\pval{p}\\
\varstate{\marking_{4}} &=   
\pval{s_3}\varstate{\marking_{5}}+
\pval{s_4}\varstate{\marking_{6}} = 
\pval{s_4}\pval{i_3}\varstate{\marking_{3}} +
\pval{s_4}\pval{p}\\
\varstate{\marking_{3}} &=   
\pval{i_2}\varstate{\marking_{2}} +
\pval{f}\varstate{\marking_4}+
\pval{c_1}\varstate{\marking_c} =
\pval{i_2}\varstate{\marking_{2}} +
\pval{f}\pval{s_4}\pval{i_3}\varstate{\marking_{3}} 
+\pval{f}\pval{s_4}\pval{p} = \\
&= \frac{\pval{i_2}}{1-\pval{f}\pval{s_4}\pval{i_3}}\varstate{\marking_2} + \frac{\pval{f}\pval{s_4}\pval{p}}{1-\pval{f}\pval{s_4}\pval{i_3}}\\
\varstate{\marking_{2}} &=   
\pval{s_1}\varstate{\marking_1} +
\pval{s_2}\varstate{\marking_3} =
\pval{s_1}\varstate{\marking_1} +
\frac{\pval{s_2}\pval{i_2}}{1-\pval{f}\pval{s_4}\pval{i_3}}\varstate{\marking_2} + \frac{\pval{s_2}\pval{f}\pval{s_4}\pval{p}}{1-\pval{f}\pval{s_4}\pval{i_3}} =
\\
&= 
\frac{\pval{s_1}-\pval{s_1}\pval{f}\pval{s_4}\pval{i_3}}{1-\pval{f}\pval{s_4}\pval{i_3}-\pval{s_2}\pval{i_2}}\varstate{\marking_1}
+ \frac{\pval{s_2}\pval{f}\pval{s_4}\pval{p}}{1-\pval{f}\pval{s_4}\pval{i_3}-\pval{s_2}\pval{i_2}}
\\
\varstate{\marking_{0}} &= \varstate{\marking_{1}} = \varstate{\marking_{2}} = 
\frac{\pval{s_1}-\pval{s_1}\pval{f}\pval{s_4}\pval{i_3}}{1-\pval{f}\pval{s_4}\pval{i_3}-\pval{s_2}\pval{i_2}}\varstate{\marking_0}
+ \frac{\pval{s_2}\pval{f}\pval{s_4}\pval{p}}{1-\pval{f}\pval{s_4}\pval{i_3}-\pval{s_2}\pval{i_2}}\\
&= \frac{\pval{s_2}\pval{f}\pval{s_4}\pval{p}}{1-\pval{s_1}-\pval{s_2}\pval{i_2}-\pval{f}\pval{s_4}\pval{i_3}+\pval{s_1}\pval{f}\pval{s_4}\pval{i_3}}
\\
&= \frac{\pval{s_2}\pval{f}\pval{s_4}\pval{p}}{1-\pval{s_1}-\pval{s_2}\pval{i_2}-\pval{f}\pval{s_4}\pval{i_3}(1-\pval{s_1})}
= \frac{\pval{s_2}\pval{f}\pval{s_4}\pval{p}}{1-\pval{s_1}-\pval{s_2}\pval{i_2}-\pval{s_2}\pval{f}\pval{s_4}\pval{i_3}}
\end{align*}
If we assume that the weights of $\myproc$ are all equal, the probability distributions for choosing the next transition are all uniform, so that $\pval{s_1} = \pval{s_2} = \frac{1}{2}$,
$\pval{i_2} = \pval{f} = \pval{c_1} = \frac{1}{3}$, $\pval{s_3} = \pval{s_4} = \frac{1}{2}$, $\pval{i_3} = \pval{p} = \pval{c_2} = \frac{1}{3}$, and $\pval{e} = \pval{l} = \frac{1}{2}$. Under this assumption, we thus get
 $\varstate{\imarking} = \frac{1/36}{1-1/2-1/6-1/36} = \frac{1/36}{11/36} = \frac{1}{11} \approx 0.09$.

With an analogous approach, the probability that an order gets cancelled is $\frac{4}{11}$, and the one that an order gets rejected is $\frac{6}{11}$. 
\end{example}

As the example shows, the total sum of probabilities associated to reaching the whole set of final markings is, as expected, 1: every order that finishes does so in one of the categories of completion (i.e., one of the final markings). This actually generalizes to any livelock-free \bpwp: the outcome probability calculated by considering the whole set of final markings corresponds to 1.

\begin{theorem}
\label{thm:total-absorption-one}
For every livelock-free \bpwp $\wproc$, we have that $\outcomeprob(\wproc,\wproc.\fmarkings) = 1$.
\end{theorem}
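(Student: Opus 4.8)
The plan is to exploit the reduction to absorbing Markov chains already in place, instead of arguing probabilistically from scratch. First I would instantiate Theorem~\ref{thm:outcome-probability} with $\finalstates = \wproc.\fmarkings$: this tells us that $\outcomeprob(\wproc,\wproc.\fmarkings)$ equals the value of the variable $\varstate{\wproc.\imarking}$ in the \emph{unique} solution of the linear system $\eqsys{\wproc}{\wproc.\fmarkings}$. The uniqueness here is not free: it rests on $\chainfun(\wproc)$ being an absorbing Markov chain, which is exactly what Lemma~\ref{lem:lsp-chain} grants under the livelock-free hypothesis. So the whole argument reduces to identifying that unique solution.

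The key step is to guess the solution and verify it. I claim that the constant assignment $\varstate{\marking} = 1$ for \emph{every} reachable marking $\marking \in \rg{\wproc}.\states$ satisfies $\eqsys{\wproc}{\wproc.\fmarkings}$, and I would check this against its three families of equations. Equation~\eqref{eq:lin:final} demands $\varstate{\marking_i} = 1$ on the targeted final markings, which holds trivially. Equation~\eqref{eq:lin:rej} ranges over $\wproc.\fmarkings \setminus \finalstates$, which is empty when $\finalstates = \wproc.\fmarkings$, so it is vacuously satisfied. For the inductive Equation~\eqref{eq:lin:flow}, at any transient (non-deadlock) marking $\marking_k$ the constant assignment yields $\sum_{\tup{\marking_k,l,\marking_k'}} \probfun(\tup{\marking_k,l,\marking_k'}) \cdot 1 = \sum_{\text{outgoing}} \probfun$, and this sum is $1$ precisely because Definition~\ref{def:stochastic-ts} requires the outgoing firing probabilities of every non-deadlock marking to sum to $1$. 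Hence the all-ones vector is a genuine solution of the system.

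With the all-ones vector exhibited as a solution, I would conclude by uniqueness: since Theorem~\ref{thm:absorption} guarantees that $\eqsys{\wproc}{\wproc.\fmarkings}$ has exactly one solution, it must coincide with the all-ones vector, and in particular $\varstate{\wproc.\imarking} = 1$. Chaining this with the first step gives $\outcomeprob(\wproc,\wproc.\fmarkings) = 1$, as claimed.

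The step I expect to require the most care is not the algebra, which is immediate, but making explicit \emph{why the livelock-free hypothesis is indispensable}. The verification of the all-ones assignment actually goes through for any bounded \pwp, livelocks or not. The subtlety is that \emph{without} livelock-freedom, $\chainfun(\wproc)$ is no longer absorbing (a livelock marking is a transient state from which no absorbing state is reachable), Theorem~\ref{thm:absorption} ceases to apply, and the system admits further solutions; the probabilistically correct one assigns $0$ to livelock states and can give an initial value strictly below $1$, since probability mass is trapped in the livelock and never reaches a final marking. I would therefore stress that uniqueness is the load-bearing ingredient that lets us equate the spurious-looking all-ones solution with the true answer, and I would point back to Example~\ref{ex:livelock-deadend} to show concretely that the total outcome probability can fall below $1$ once livelocks are present.
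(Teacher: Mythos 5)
Your proposal is correct, and it reaches the conclusion by a genuinely different route than the paper for the crucial step. Both proofs share the same opening move, namely the reduction of $\outcomeprob(\wproc,\wproc.\fmarkings)$ to an absorption problem on $\chainfun(\wproc)$ via Theorem~\ref{thm:outcome-probability} (which rests on Lemma~\ref{lem:lsp-chain}). But from there the paper simply cites the classical probabilistic fact that in an absorbing Markov chain absorption occurs with probability~1 (Theorem~11.3 of Grinstead--Snell), whereas you instead exploit the equational characterisation: you exhibit the all-ones vector as a solution of $\eqsys{\wproc}{\wproc.\fmarkings}$ --- using that the non-final (hence, by completeness, non-deadlock) markings have outgoing probabilities summing to~1 per Definition~\ref{def:stochastic-ts}, and that the ``non-targeted final marking'' equations are vacuous when all final markings are targeted --- and then invoke the uniqueness guaranteed by Theorem~\ref{thm:absorption} to conclude $\varstate{\wproc.\imarking}=1$. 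What the paper's proof buys is brevity: two sentences and one external citation. What yours buys is self-containedness relative to the machinery the paper has already imported (no appeal to a further result beyond Theorem~\ref{thm:absorption}), plus a transparent view of where livelock-freedom is load-bearing: without it the uniqueness of the solution collapses, exactly the indeterminacy the paper itself later exhibits in Example~\ref{ex:livelock-probabilities}. One terminological quibble in your closing discussion: under Definition~\ref{def:absorbing-markov-chain} a ``transient state from which no absorbing state is reachable'' is a contradiction in terms; the precise statement is that a livelock marking admits neither classification, so the required partition of states fails and $\chainfun(\wproc)$ is simply not an absorbing chain. Also, the concrete sub-unit probability computation you want to point at is Example~\ref{ex:livelock-probabilities} (and its fix in Example~\ref{ex:livelock-probabilities-fixed}) rather than Example~\ref{ex:livelock-deadend}, which only introduces the net. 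Neither remark affects the validity of your proof.
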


\begin{proof}
The claim is a direct consequence of Theorem~\ref{thm:outcome-probability}, and the fact that the probability of reaching an absorbing state in an absorbing Markov chain is 1 - as proved in \cite[Theorem 11.3]{GriS97}. 
\end{proof}

The effect of Theorem~\ref{thm:total-absorption-one} is that the trace language of a livelock-free \bpwp is indeed a stochastic language.

\begin{definition}[Stochastic language]
\label{def:stochastic-language}
A stochastic language over an alphabet $L$ of symbols is a total function $\Psi: L^* \rightarrow [0,1]$ mapping each sequence over $L$ onto a probability value, so that $\sum_{\xi \in L^*} \Psi(\xi) = 1$.
\end{definition}

\begin{definition}[Induced stochastic language]
\label{def:induced-stochastic-language}
The \emph{induced stochastic language} of a livelock-free \bpwp $\wproc$ is the function $\mathcal{P}_\wproc: \tasks^* \rightarrow [0,1]$ defined as follows:
for every $\sigma \in \tasks^*$, we have
$$
\mathcal{P}_{\wproc}(\sigma) = 
\begin{cases}
\prob{\sigma}{\wproc} & \text{ if } \sigma \text{ is a model trace of }\wproc\\
0 & \text{ otherwise}\\
\end{cases}
$$
\end{definition}

\begin{corollary}
\label{thm:lsp-stochastic-language}
Let $\wproc$ be a livelock-free \bpwp. Then $\mathcal{P}_\wproc$ is indeed a stochastic language in the sense of Definition~\ref{def:stochastic-language}.
\end{corollary}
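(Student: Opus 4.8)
The plan is to verify the two requirements of Definition~\ref{def:stochastic-language} in turn: that $\mathcal{P}_\wproc$ takes values in $[0,1]$, and that it sums to $1$ over all of $\tasks^*$. The normalization is the substantive part, and I would obtain it by reducing to Theorem~\ref{thm:total-absorption-one}, which already asserts that the outcome probability computed over the \emph{entire} set of final markings equals $1$. So the whole proof amounts to showing that summing $\mathcal{P}_\wproc$ over traces is the same quantity as $\outcomeprob(\wproc,\wproc.\fmarkings)$.

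First I would record the combinatorial fact underlying this reduction. By Definition~\ref{def:model-trace}, removing the $\hidden$-labels from the label sequence of a run $\run$ of $\wproc$ yields a uniquely determined element of $\tasks^*$; hence ``induces'' is a well-defined function from the set of all runs of $\wproc$ into $\tasks^*$, and $\runs{\trace}{\wproc}$ is precisely the preimage of $\trace$ under this function. Consequently the family $\{\runs{\trace}{\wproc}\}_{\trace \in \tasks^*}$ is a partition of the set of all runs of $\wproc$, in which every $\trace$ that is not a model trace contributes an empty block. With this in place the computation is:
\begin{align*}
\sum_{\trace \in \tasks^*} \mathcal{P}_\wproc(\trace)
  &= \sum_{\trace \text{ model trace of } \wproc} \prob{\trace}{\wproc}
   = \sum_{\trace \text{ model trace of } \wproc}\ \sum_{\run \in \runs{\trace}{\wproc}} \prob{\run}{\wproc}\\
  &= \sum_{\run \text{ run of } \wproc} \prob{\run}{\wproc}
   = \outcomeprob(\wproc,\wproc.\fmarkings)
   = 1,
\end{align*}
where the first step discards the $0$ summands of Definition~\ref{def:induced-stochastic-language}, the second unfolds Definition~\ref{def:trace-probability}, the third collapses the double sum over the partition above, the fourth is the definition of $\outcomeprob$ instantiated on $\finalstates = \wproc.\fmarkings$ (recall a run ends in \emph{some} final marking by Definition~\ref{def:run}), and the last is Theorem~\ref{thm:total-absorption-one}.

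For the range, every $\prob{\run}{\wproc}$ is a product of firing probabilities and thus lies in $[0,1]$, so each $\mathcal{P}_\wproc(\trace)$ is a sum of non-negative terms and hence non-negative; since the grand total is $1$ and all summands are non-negative, each individual value is also bounded above by $1$. The one point that genuinely needs care — and the main obstacle — is the third equality, which collapses a double sum ranging over possibly infinitely many runs (cf.\ Remark~\ref{rem:runs-traces}). This step is legitimate precisely because all summands are non-negative: the sum of a non-negative family over a disjoint union equals the iterated sum over the blocks, as a value in $[0,\infty]$, with no dependence on ordering. Thus no conditional-convergence subtlety arises, the rearrangement is unconditionally valid, and the finiteness of the result is guaranteed a posteriori by Theorem~\ref{thm:total-absorption-one}.
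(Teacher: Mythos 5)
Your proof is correct and takes essentially the same route the paper intends: the corollary is stated there without an explicit proof, as an immediate consequence of Theorem~\ref{thm:total-absorption-one}, which is precisely your reduction of $\sum_{\trace}\mathcal{P}_\wproc(\trace)$ to $\outcomeprob(\wproc,\wproc.\fmarkings)$. The details you supply---that runs are partitioned by their induced traces, that every run ends in some final marking (Definition~\ref{def:run}) so the collapsed sum is exactly the outcome probability, and that non-negativity of all summands licenses the rearrangement of the infinite double sum---are exactly the steps the paper leaves implicit.
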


\subsection{Dealing with Livelocks}
We now show how the technique introduced in Section~\ref{sec:outcome-probability-livelock-free} for livelock-free \bpwp{s} can be generalized to arbitrary \bpwp{s} with minor adaptations: the crux is to deal with livelocks by recognizing that they are not able to reach any final marking, and hence their states should be assigned to a probability value of 0. 

This problem is well-known in Markov chain analysis. In fact, the same approach used to solve the absorption probability problem can be adopted, in spirit, to deal with arbitrary Markov chains, in particular to compute the probability that a walk from a state $\markovstate_i$ culminates in another state $\markovstate_f$. However, as the next example shows, in case of livelocks the system of linear equations recalled before becomes indeterminate.

\begin{example}
\label{ex:livelock-probabilities}
Consider the \bpwp $\wproc_{\text{live}}$ from Example~\ref{ex:livelock-deadend} and Figure~\ref{fig:livelock-deadend}. If we try to compute the outcome probability of its two final markings $[\pname{1}]$ and $[\pname{5}]$ by applying verbatim the approach used so far, we get the following system of linear equations, directly derived from the reachability graph of Figure~\ref{fig:livelock-deadend}(b):
\begin{align*}
\varstate{\marking_5} &= 1 
& 
\varstate{\marking_2} &= \varstate{\marking_3}
\\
\varstate{\marking_4} &= \varstate{\marking_5} 
& 
\varstate{\marking_1} &= 1
\\
\varstate{\marking_3} &= \pval{d} \varstate{\marking_2} + \pval{e} \varstate{\marking_3} 
& 
\varstate{\marking_0} &= \pval{a} \varstate{\marking_1} + \pval{b} \varstate{\marking_2} + \pval{f} \varstate{\marking_4}
\end{align*}
By recalling that, by definition, $\pval{d} + \pval{e} = 1$, the system yields $\varstate{\marking_3} = \pval{d} \varstate{\marking_3} + \pval{e} \varstate{\marking_3} = (\pval{d}+\pval{e})\varstate{\marking_3} = \varstate{\marking_3}$, and so is indeterminate. 
This is caused by the livelock markings $[\pname{2}]$ and $[\pname{3}]$, which induce a cycle from which none of the target markings can be reached. The minimal non-negative solution of the system then assigns $\varstate{\marking_3} = \varstate{\marking_2} = 0$ and, in turn,  $\varstate{\marking_0} = \pval{a}$. 
\end{example}

The indeterminacy shown in Example~\ref{ex:livelock-probabilities} is solved in Markov chain analysis by turning the system of equalities into an optimization problem, where one is interested in extracting the minimum non-negative solution \cite{Durr12,BaiK08}.  This corresponds to assigning 0 to each state variable that is causing the indeterminacy. In the case of \bpwp{s}, such state variables correspond to livelock markings. To deal with arbitrary \bpwp{s} without resorting to an optimisation problem, we then proceed as follows. The outcome probability problem $\outcomeprob(\wproc,\finalstates)$ for an arbitrary \bpwp and a nonempty set $\finalstates \subseteq \wproc.\fmarkings$ is solved by extracting the value of variable $\varstate{\wproc.\imarking}$ from the unique solution of $\eqsys{\wproc}{\fmarkings}$, now generalized as follows:
\begin{align}
 \varstate{\marking_i} 
    & = 1 
    && \text{for each marking } \marking_i \in \finalstates 
  \label{eq:lin:final}
\\
  \varstate{\marking_j} 
    & = 0 
    && \text{for each marking } \marking_j \in \wproc.\fmarkings \setminus \finalstates   \label{eq:lin:rej}
\\
  \varstate{\marking_l} 
    & = 0 
    && \text{for each livelock marking } \marking_l \in \rg{\wproc}.\states   \label{eq:lin:rej}
\\
  \varstate{\marking_k} 
    & = \hspace*{-12mm}\sum_{\tup{\marking_k,l,\marking_k'} \in \outgoing{\marking_k}{\rg{\wproc}}} \hspace*{-15mm}\probfun(\tup{\marking_k,l,\marking_k'}) \cdot \varstate{\marking_k'}
    && \text{for each other  marking } \marking_k \in \rg{\wproc}.\states
  \label{eq:lin:flow}
\end{align}
Recall that detecting livelock markings can be done over the reachability graph of the input \bpwp via simple graph (non-)reachability checks. Such checks do not involve probabilities at all.

\begin{example}
\label{ex:livelock-probabilities-fixed}
We revise Example~\ref{ex:livelock-probabilities} by solving $\outcomeprob(\wproc_{\text{live}},\set{[\pname{1}],[\pname{5}]})$ as follows. The system $\eqsys{\wproc_{\text{live}}}{\set{[\pname{1}],[\pname{5}]}}$ corresponds to:
\begin{align*}
\varstate{\marking_5} &= \varstate{\marking_1} = 1 
& 
\varstate{\marking_2} &= \varstate{\marking_3} = 0
\\
\varstate{\marking_4} &= \varstate{\marking_5} 
& 
\varstate{\marking_0} &= \pval{a} \varstate{\marking_1} + \pval{b} \varstate{\marking_2} + \pval{f} \varstate{\marking_4}
\end{align*}
Hence, we get $\varstate{\marking_0} = \pval{a} + \pval{f}$.
\end{example}

As witnessed by Example~\ref{ex:livelock-probabilities-fixed}, in case of \bpwp{s} with livelocks, the outcome probability value $p$ computed over all final markings is not 1, as instead it is guaranteed for livelock-free \bpwp{s} (cf.~Theorem~\ref{thm:total-absorption-one}). This also means that every \bpwp with livelocks does not induce a stochastic language over its traces, but only a form of ``incomplete'' stochastic language, where the complement probability value $1-p$ indicates the probability that the process gets stuck in a livelock.

\begin{remark}
\label{rem:livelock-stochastic-language}
For a \bpwp $\wproc$ containing livelocks, we get the following:
\begin{compactitem}
\item $\outcomeprob(\wproc,\wproc.\fmarkings) < 1$;
\item $1-\outcomeprob(\wproc,\wproc.\fmarkings)$ is the probability that an execution starting from $\wproc.\imarking$ ends up into a livelock.
\end{compactitem}
\end{remark} 

\begin{example}
By continuing on Example~\ref{ex:livelock-probabilities-fixed}, we have that the probability that an execution of $\wproc_{\text{live}}$ starting from $\wproc_{\text{live}}.\imarking = [\pname{0}]$ ends up into a livelock is $1-\outcomeprob(\wproc_{\text{live}},\set{[\pname{1}],[\pname{5}]}) = 1 - \pval{a}-\pval{f} = \pval{b}$, which is in fact the probability of firing $\tname{02}$ from the initial marking. 
\end{example}

\section{Computing Specification and Trace Probabilities}
\label{sec:spec}

We now further leverage the connection between \bpwp{s} and discrete-time Markov chains (cf.~Definition~\ref{def:chain-encoding}), to deal with the computation of \textbf{specification probabilities}. A specification is a compact description of (possibly infinitely  many) traces of interest, and the problem is about measuring what is the probability that a trace produced by the \bpwp of interest indeed belongs to the specification. This problem is also known under the name of qualitative model checking \cite[Ch.~10]{BaiK08}, to stress that the specification is \emph{qualitative}, i.e., non-probabilistic.

To solve the problem, we borrow the main idea from \cite[Ch.~10]{BaiK08} that underlies the construction of the cross-product stochastic transition system that describes all and only the behaviour that belong to both the specification and the \bpwp under scrutiny. However, we have to adapt this construction to reflect two distinctive features of our setting:
\begin{compactitem}[$\bullet$]
\item traces have a finite length, and this has to be taken into account in the choice of formalism for describing the specification, as well as when constructing the cross-product;
\item the behaviour described by the specification are traces, while those of the \bpwp are runs over transitions that also include silent steps, a mismatch that has to be resolved before  constructing the cross product.
\end{compactitem}
We proceed by fixing and justifying the formalism we use to represent specifications. We then formalise the problem of specification probability, and describe how to solve it in three steps: 
\begin{inparaenum}[\it (i)]
\item infusing silent steps in the specification;
\item computing the cross-product of the altered specification and the reachability graph of the \bpwp; and
\item solving an outcome probability problem on the cross product.
\end{inparaenum}

Finally, we show that the solution of the specification probability problem yields, as a by-product, the solution of the \textbf{trace probability} problem (as described in Definition~\ref{def:trace-probability}).

\subsection{Automata-Based Specifications}
\label{sec:automata-specifications}
As already pointed out, a specification intentionally describes a (possibly infinite) set of desired finite-length traces. Such traces are defined over the task names in $\tasks$ (without $\hidden$). 
As a natural formalism to capture such specifications, we employ (deterministic) finite-state automata. To avoid confusion between the sequences accepted by automata and runs/traces of \pwp{s}, we use the standard term of \emph{words} for automata. 

\newcommand{\DFA}{DFA\xspace}
\newcommand{\NFA}{NFA\xspace}
\newcommand{\aut}{A}
\newcommand{\autstates}{\states}
\newcommand{\autstate}{\state}
\newcommand{\autistate}{\istate}
\newcommand{\autfstates}{\fstates}
\newcommand{\autfun}{\delta}
\newcommand{\silentaut}{\bar{\aut}}
\newcommand{\traceaut}[1]{\aut_{#1}}
\newcommand{\silenttraceaut}[1]{\bar{\aut}_{#1}}

\newcommand{\product}[2]{\Upsilon_{#1}^{#2}}

\begin{definition}[\DFA]
\label{def:automaton}
A \emph{deterministic finite-state automaton} (\DFA) over $L$ is a tuple $\aut = \tup{L,\autstates,\autistate,\autfstates,\autfun}$, where:
\begin{compactitem}[$\bullet$]
\item $L$ is a finite \emph{alphabet} of symbols;
\item $\autstates$ is a finite set of \emph{states}, with $\autistate \in \autstates$ the \emph{initial state} and $\autfstates \subseteq \autstates$ the set of \emph{final states};
\item $\autfun: \autstates \times L \rightarrow \autstates$ is a transition transition function that, given a state $\autstate \in \autstates$ and a label $l \in L$, returns the successor state $\autfun(\autstate,l)$.
\end{compactitem}
\end{definition}

\begin{definition}[Word, acceptance, language]
\label{def:automaton-language}
Given a finite alphabet $L$, a word over $L$ is a finite sequence of symbols from $L^*$. A \DFA $\aut$ over $L$ accepts a word $\trace = l_0,\ldots,l_n$ if there exists a sequence of states $\autstate_0,\ldots,\autstate_{n+1}$ such that:
\begin{compactitem}[$\bullet$]
\item $s_0 = \aut.\istate$;
\item $\autstate_{n+1} \in \autfstates$;
\item for every $i \in \set{0,\ldots,n}$, we have $\autstate_{i+1} = \autfun(\autstate_i,l_i)$.
\end{compactitem}
The \emph{language} $\lang{\aut}$ of $\aut$ is the set of all words accepted by $\aut$.
\end{definition}

\begin{remark}
\label{rem:dfa-properties}
Dealing with \DFA{s} in turn provides an operational solution for dealing with a wide range of formalisms to specify traces:
\begin{compactitem}[$\bullet$]
\item It expresses all non-deterministic automata (\NFA{s}), as each \NFA can be encoded into a corresponding \DFA \cite{HoMU07}.
\item For the same reason \cite{HoMU07}, it expresses all regular expressions.
\item It expresses all declarative specifications formalised in Linear Temporal Logic over finite traces (\LTLf) or the richer Linear Dynamic Logic over finite traces (\LDLf), as such logics have an automata-theoretic characterisation based on \NFA{s}/\DFA{s} \cite{DeVa13,DDMM22}.
\item In turn, it expresses all specifications captured in \declare \cite{PeSV07,MPVC10}, possibly extended with meta-constraints \cite{DDMM22}. 
\end{compactitem}

\end{remark}

\begin{figure}[t]
\centering
\resizebox{\textwidth}{!}{
\begin{tikzpicture}[node distance=5mm and 8mm]

\node[
  autstate,
  label=center:$\autstate_0$,
  double
] (s0) {};

\node[
  autstate,
  right=of s0,
  label=center:$\autstate_1$
] (s1) {};

\draw[arc] ($(s0.north west)+(-3mm,3mm)$) -- (s0);

\draw[
  downloop
] 
(s0) 
edge node[below] {$\tasks \setminus \set{\taskname{open}}$} 
(s0);

\draw[
  out=30,
  in=150,
  arc
] 
(s0)
edge node[above] {$\taskname{open}$}
(s1);

\draw[
  downloop
] 
(s1) 
edge node[below] {$\tasks \setminus \set{\taskname{ship}}$} 
(s1);

\draw[
  out=-150,
  in=-30,
  arc
] 
(s1)
edge node[below] {$\taskname{ship}$}
(s0);

\node[
  below=15mm of s0,
  xshift=10mm
] {(a) Every \taskname{open} is followed by \taskname{pay}};


\node[
  autstate,
  label=center:$\autstate_0$,
  right=34mm of s1
] (s0) {};

\node[
  autstate,
  right=of s0,
  label=center:$\autstate_1$
] (s1) {};

\node[
  autstate,
  below=of s1,
  label=center:$\autstate_2$,
  double
] (s2) {};

\draw[arc] ($(s0.north west)+(-3mm,3mm)$) -- (s0);

\draw[
  downloop
] 
(s0) 
edge node[below] {$\tasks \setminus \set{\taskname{can}}$} 
(s0);

\draw[
  arc
] 
(s0)
edge node[above] {$\taskname{rej}$}
(s1);

\draw[
  rightloop,
] 
(s1) 
edge node[right] {$\tasks \setminus \set{\taskname{pay}}$} 
(s1);

\draw[
  arc
] 
(s1)
edge node[left] {$\taskname{pay}$}
(s2);

\draw[
  rightloop
] 
(s2) 
edge node[right] {$\tasks$} 
(s2);

\node[
  below=15mm of s0,
  xshift=10mm
] {(b) At some point \taskname{rej} and then \taskname{pay}};


\node[
  autstate,
  label=center:$\autstate_0$,
  right=32mm of s1,
  double
] (s0) {};

\node[
  autstate,
  right=of s0,
  label=center:$\autstate_1$,
  double
] (s1) {};

\node[
  autstate,
  below=of s1,
  label=center:$\autstate_2$,
] (s2) {};

\draw[arc] ($(s0.north west)+(-3mm,3mm)$) -- (s0);

\draw[
  downloop
] 
(s0) 
edge node[below] {$\tasks \setminus \set{\taskname{fin},\taskname{acc}}$} 
(s0);

\draw[
  arc
] 
(s0)
edge node[above] {$\taskname{fin}$}
(s1);

\draw[
  rightloop,
] 
(s1) 
edge node[right] {$\tasks$} 
(s1);

\draw[
  arc
] 
(s0)
edge node[sloped,yshift=2mm] {$\taskname{acc}$}
(s2);

\draw[
  rightloop
] 
(s2) 
edge node[right] {$\tasks$} 
(s2);

\node[
  below=15mm of s0,
  xshift=10mm
] {(c) \taskname{Acc} only possible after \taskname{fin}};

\end{tikzpicture}
}
\caption{\DFA{s} of three properties for the order-to-cash example. A single edge labelled by a set $\tasks$ of task names compactly describes a set of edges, each labelled by a task name from $\tasks$.}
\label{fig:sample-automata}
\end{figure}


\begin{example}
\label{ex:properties}
Figure~\ref{fig:sample-automata} shows three specifications of interest for our running example. Specifically:
\begin{compactitem}[$\bullet$]
\item The DFA in Figure~\ref{fig:sample-automata}(a) captures those words where every time an order is opened, it is later shipped (possibly executing other tasks in between).
\item The DFA in Figure~\ref{fig:sample-automata}(b) captures those words where the other is eventually rejected and later paid.
\item The DFA in Figure~\ref{fig:sample-automata}(c) captures those words where if an order is accepted, then it has been previously finalised.
\end{compactitem}
\end{example}

\subsection{Defining and Solving the Specification Probability Problem}
\label{sec:specification-probability}

%
Noticing that words over $\tasks$ correspond to traces of \pwp{s}, we are now ready to define the \textbf{specification probability} problem.

\begin{problem}[$\verifprob(\wproc,\aut)$] 
~
\begin{compactdesc}
\item[Input:] \bpwp $\wproc$, \DFA $\aut$ over $\tasks$;
\item[Output:] probability $\prob{\aut}{\wproc}= \sum_{\trace \text{ model trace of } \wproc { s.t.~}\trace \in \lang{\aut}} \prob{\trace}{\wproc}$. 
\end{compactdesc}
\end{problem}

To solve the problem, we need to account for three different aspects: 
\begin{compactenum}
\item deal with the mismatch between runs over $\wproc$ and words accepted by $\aut$;
\item single out all and only those model traces of $\wproc$ that are also words accepted by $\aut$;
\item compute the collective probability of all such traces.
\end{compactenum}

We tackle these three aspects with three corresponding steps.

\medskip
\noindent
\textbf{Silenced \DFA.} Given a specification DFA describing desired words over $\tasks$, we transform it into a corresponding \DFA describing corresponding sequences of labels over $\alphabet$. The idea is to make the resulting \DFA continues to describe the same desired traces, but accepts padding their visible steps with an arbitrary number of silent steps in between.

\begin{definition}[Silenced \DFA]
\label{def:silenced-dfa}
Given a \DFA $\aut = \tup{\tasks,\autstates,\autistate,\autfstates,\autfun}$ over $\tasks$, its \emph{silenced \DFA} $\silentaut$ is the \DFA $\tup{\alphabet,\autstates',\autistate',\autfstates',\autfun'}$ over $\alphabet$ satisfying the following conditions:
\begin{compactitem}
\item states (including initial and final ones) are identical to $\aut$, that is, $\autstates' = \autstates$, $\autistate' = \autistate$, and $\finalstates' = \finalstates$;
\item $\autfun'$ expands $\autfun$ by introducing a $\hidden$-labelled self-loop in each state, that is, $\autfun' = \autfun \cup \set{\tup{\autstate,\hidden} \rightarrow \autstate \mid \autstate \in \autstates}$
\end{compactitem}
\end{definition}

\begin{figure}[t]
\centering
\resizebox{\textwidth}{!}{
\begin{tikzpicture}[node distance=5mm and 8mm]

\node[
  autstate,
  label=center:$\autstate_0$,
  double
] (s0) {};

\node[
  autstate,
  right=of s0,
  label=center:$\autstate_1$
] (s1) {};

\draw[arc] ($(s0.north west)+(-3mm,3mm)$) -- (s0);

\draw[
  downloop
] 
(s0) 
edge node[below] {$\tasks \setminus \set{\taskname{open}}$} 
(s0);

\draw[
  out=30,
  in=150,
  arc
] 
(s0)
edge node[above] {$\taskname{open}$}
(s1);

\draw[
  downloop
] 
(s1) 
edge node[below] {$\tasks \setminus \set{\taskname{ship}}$} 
(s1);

\draw[
  out=-150,
  in=-30,
  arc
] 
(s1)
edge node[below] {$\taskname{ship}$}
(s0);

\draw
(s0)
edge[uploop,silent] 
node[above] {$\hidden$}
(s0);

\draw
(s1)
edge[uploop,silent] 
node[above] {$\hidden$}
(s1);

\node[
  below=23mm of s0,
  xshift=10mm
] {(a) Every \taskname{open} is followed by \taskname{pay}};


\node[
  autstate,
  label=center:$\autstate_0$,
  right=34mm of s1
] (s0) {};

\node[
  autstate,
  right=of s0,
  label=center:$\autstate_1$
] (s1) {};

\node[
  autstate,
  below=of s1,
  label=center:$\autstate_2$,
  double
] (s2) {};

\draw[arc] ($(s0.north west)+(-3mm,3mm)$) -- (s0);

\draw[
  downloop
] 
(s0) 
edge node[below] {$\tasks \setminus \set{\taskname{can}}$} 
(s0);

\draw[
  arc
] 
(s0)
edge node[above] {$\taskname{rej}$}
(s1);

\draw[
  rightloop,
] 
(s1) 
edge node[right] {$\tasks \setminus \set{\taskname{pay}}$} 
(s1);

\draw[
  arc
] 
(s1)
edge node[left] {$\taskname{pay}$}
(s2);

\draw[
  rightloop
] 
(s2) 
edge node[right] {$\tasks$} 
(s2);

\draw
(s0)
edge[uploop,silent] 
node[above] {$\hidden$}
(s0);

\draw
(s1)
edge[uploop,silent] 
node[above] {$\hidden$}
(s1);

\draw
(s2)
edge[downloop,silent] 
node[below] {$\hidden$}
(s2);

\node[
  below=23mm of s0,
  xshift=10mm
] {(b) At some point \taskname{rej} and then \taskname{pay}};


\node[
  autstate,
  label=center:$\autstate_0$,
  right=32mm of s1,
  double
] (s0) {};

\node[
  autstate,
  right=of s0,
  label=center:$\autstate_1$,
  double
] (s1) {};

\node[
  autstate,
  below=of s1,
  label=center:$\autstate_2$,
] (s2) {};

\draw[arc] ($(s0.north west)+(-3mm,3mm)$) -- (s0);

\draw[
  downloop
] 
(s0) 
edge node[below] {$\tasks \setminus \set{\taskname{fin},\taskname{acc}}$} 
(s0);

\draw[
  arc
] 
(s0)
edge node[above] {$\taskname{fin}$}
(s1);

\draw[
  rightloop,
] 
(s1) 
edge node[right] {$\tasks$} 
(s1);

\draw[
  arc
] 
(s0)
edge node[sloped,yshift=2mm] {$\taskname{acc}$}
(s2);

\draw[
  rightloop
] 
(s2) 
edge node[right] {$\tasks$} 
(s2);

\draw
(s0)
edge[uploop,silent] 
node[above] {$\hidden$}
(s0);

\draw
(s1)
edge[uploop,silent] 
node[above] {$\hidden$}
(s1);

\draw
(s2)
edge[downloop,silent] 
node[below] {$\hidden$}
(s2);

\node[
  below=23mm of s0,
  xshift=10mm
] {(c) \taskname{Acc} only possible after \taskname{fin}};

\end{tikzpicture}
}
\caption{Silenced \DFA{s} corresponding to the three \DFA{s} in Figure~\ref{fig:sample-automata}.}
\label{fig:sample-silenced-automata}
\end{figure}


Figure~\ref{fig:sample-silenced-automata} shows the silenced \DFA{s} corresponding to the \DFA{s} in Figure~\ref{fig:sample-automata}.

We now establish a key correspondence between \DFA{s} and their silenced versions, when characterising traces of an \pwp.

\begin{lemma}
\label{lem:dfa-correspondence}
Let $\wproc$ be an \pwp, and $\aut$ be a \DFA over $\tasks$. For every model trace $\trace$ of $\wproc$ and for every run $\run$ of $\wproc$ inducing $\trace$, we have that $\trace$ is accepted by $\aut$ if and only if the sequence of labels $\wproc.\net.\labeling(\run)$ is accepted by $\silentaut$.
\end{lemma}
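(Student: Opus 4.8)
The plan is to prove this as a purely automata-theoretic fact relating $\aut$ to its silenced version $\silentaut$ under the operation of erasing $\hidden$. Write $w = \wproc.\net.\labeling(\run) = l_0 \cdots l_m \in \alphabet^*$ for the label sequence of the fixed run, and recall from Definition~\ref{def:model-trace} that $\trace$ is exactly $w$ with all occurrences of $\hidden$ deleted, i.e.\ $\trace$ is the projection of $w$ onto $\tasks$. The key structural observation is that, by Definition~\ref{def:silenced-dfa}, $\silentaut$ behaves on a task symbol $a \in \tasks$ exactly as $\aut$ does (the transition function $\autfun$ is inherited verbatim), while on the symbol $\hidden$ it merely stutters, staying in the current state via the added self-loop. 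Hence reading $w$ in $\silentaut$ traverses the same ``essential'' states as reading $\trace$ in $\aut$, interspersed with state-preserving $\hidden$-steps. First I would note that $\silentaut$ is a well-defined \DFA over $\alphabet$: since $\hidden \notin \tasks$, the new self-loops do not clash with any transition of $\aut$, and the resulting transition function is total, so the run of $\silentaut$ on any word is unique and ``acceptance'' is unambiguous.

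The heart of the argument is the following claim, proved by induction on the length $k$ of prefixes of $w$: letting $s_k$ be the state reached by $\silentaut$ from $\autistate$ after reading $l_0 \cdots l_{k-1}$, and $s_k'$ the state reached by $\aut$ from $\autistate$ after reading the $\tasks$-projection of $l_0 \cdots l_{k-1}$, we have $s_k = s_k'$. For the base case $k=0$ both runs sit in the common initial state $\autistate = \silentaut.\istate$. For the inductive step, if $l_k = \hidden$ then $\silentaut$ stays in $s_k$ while the projection is unchanged, so $\aut$ also stays in $s_k' = s_k$; if $l_k = a \in \tasks$ then $\silentaut$ moves to $\autfun(s_k, a)$ and the projection is extended by $a$, so $\aut$ moves to $\autfun(s_k', a) = \autfun(s_k, a)$ by the induction hypothesis $s_k = s_k'$. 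This closes the induction.

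Applying the claim with $k = m+1$ yields that the state reached by $\silentaut$ after reading the whole of $w$ equals the state reached by $\aut$ after reading all of $\trace$. Since Definition~\ref{def:silenced-dfa} keeps the state set, initial state, and final states identical ($\autstates' = \autstates$, $\autistate' = \autistate$, $\autfstates' = \autfstates$), this common terminal state is final in $\silentaut$ precisely when it is final in $\aut$. Therefore $w$ is accepted by $\silentaut$ if and only if $\trace$ is accepted by $\aut$, which is the desired equivalence; and because every run $\run$ inducing $\trace$ has a label sequence projecting to $\trace$, the equivalence holds for each such $\run$. I do not expect a genuine obstacle here: the argument is elementary bookkeeping, and the only points requiring care are confirming that the erasing homomorphism commutes with the two runs in the way the inductive claim asserts, and that determinism and totality of $\silentaut$ make the notion of acceptance of $w$ well-posed.
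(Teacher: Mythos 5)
Your proof is correct and takes essentially the same route as the paper's: the paper's proof is a one-line appeal to Definitions~\ref{def:model-trace} and~\ref{def:silenced-dfa} (the label sequence of a run is just the induced trace with $\hidden$ symbols interspersed, and the silenced \DFA merely stutters on $\hidden$), which your prefix induction simply spells out rigorously. The only slight inaccuracy is your claim that the transition function of $\silentaut$ is total --- the paper's \DFA{s} may have partial transition functions (e.g., the trace \DFA of Definition~\ref{def:trace-dfa}) --- but your induction carries over verbatim with ``defined and equal'' in place of ``equal'', so nothing breaks.
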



\begin{proof}
The claim directly follows from Definitions~\ref{def:model-trace} and~\ref{def:silenced-dfa}, noticing that the difference between a trace $\sigma$ induced by a run $\run$ and the sequence of labels $l = \wproc.\net.\labeling(\run)$ is that $l$ may insert arbitrarily many $\hidden$ symbols at the beginning and and of $\sigma$, as well as between every consecutive tasks appearing therein.
\end{proof}

\medskip
\noindent
\textbf{Product stochastic transition system.} Given a \bpwp $\wproc$ and a \DFA $B$ over $\alphabet$, we use $\rg{\wproc}$ and $B$ to construct a \emph{product} stochastic transition system generates all and only those runs of $\wproc$ whose label sequences are accepted by $B$, retaining the probabilities of those runs from $\wproc$. This can be done by the standard product automaton construction, with the only difference that we need to retain the stochastic information coming from $\wproc$. We then use this as a basis to solve the specification probability problem, relying on silenced-\DFA{s}.

\begin{definition}[Product system]
\label{def:product}
Let $\wproc$ be a \bpwp with $\rg{\wproc} = \tup{\states^1,\istate^1,\fstates^1,\transrel^1,\probfun^1}$, 
and $B = \tup{\alphabet^2,\autstates^2,\autistate^2,\autfstates^2,\autfun^2}$ a \DFA over $\alphabet$. 
The \emph{product system} $\product{\wproc}{B}$ of $\wproc$ and $B$ is a stochastic transition system $\tup{\states,\istate,\fstates,\transrel,\probfun}$ whose states  are pairs of states from $\states^1 \times \states^2$, and whose components are defined by mutual induction as the minimal sets satisfying the following conditions:
\begin{compactenum}
\item $\istate = \tup{\istate^1,\istate^2}$, $\istate \in \states$;
\item for every state $\tup{\state^1,\state^2} \in \states$ and every label $l \in \alphabet^2$ such that
\begin{compactenum}[\itshape (i)]
\item $\tup{\state^1,\transition^1,{\state^1}'} \in \transrel^1$ for some ${\state^1}' \in \states^1$ and transition $\transition^1 \in \wproc.\net.\transitions$ with $\wproc.\net.\labeling(\transition^1) = l $, and 
\item $\autfun^2(\state^2,l) = {\state^2}'$ for some ${\state^2}' \in \states^2$,
\end{compactenum}
 and taking $\state' = \tup{{\state^1}',{\state^2}'}$ we have: 
\begin{compactenum}
\item $\state' \in \states$,
\item $\tup{\state,\transition^1,\state'} \in \transrel$,
\item $\probfun(\tup{\state,\transition^1,\state'}) = \probfun_1(\tup{\state_1,\transition^1,\state_1'})$,
\item if ${\state^1}' \in \fstates^2$ and ${\state^2}' \in \autfstates^2$, then $\state' \in \fstates$.
\end{compactenum}
\end{compactenum}
\end{definition}

\begin{remark}
\label{rem:incomplete-product}
The product system as per Definition~\ref{def:product} is not a \emph{complete} stochastic transition system: there may be states whose successor probabilities do not add up to one. However, it can be made complete as follows:
\begin{compactitem}
\item introduce a fresh, non-final sink state $\state_{sink}$, and introduce a new net transition $\transition_{sink}$ that is not used in the input \pwp;
\item for every state $\state$ in the product system such that the sum of the probabilities of its outgoing edges correspond to $p < 1$, add an edge between $\state$ and $\state_{sink}$, labelling it with $\transition_{sink}$ and assigning probability $1-p$ to it.
\end{compactitem}
As we will see next, we are going to use the (Markov chain underlying the) product system for calculating probabilities. Hence, this completion can be avoided (as state $\state_{sink}$ would get assigned a probability of 0).
\end{remark}

\medskip
\noindent
\textbf{Specification probability as outcome probability.}
We are now ready to bring everything together, exploiting the notions of silenced \DFA{s} and product systems to show how the $\verifprob$ problem can be reduced to $\outcomeprob$, invoked on $\product{\wproc}{\silentaut}$, considering all its final states.

\begin{theorem}
\label{thm:correspondence}
For every \bpwp $\wproc$ and \DFA $\aut$ over $\tasks$, we have that 
$\verifprob(\wproc,\aut) = \outcomeprob(\product{\wproc}{\silentaut},\product{\wproc}{\silentaut}.\fstates)$.
\end{theorem}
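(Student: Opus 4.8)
The plan is to show that both sides equal one and the same quantity: the sum of $\prob{\run}{\wproc}$ taken over exactly those runs $\run$ of $\wproc$ whose label sequence is accepted by the silenced automaton $\silentaut$. First I would analyse the right-hand side. Unfolding Definition~\ref{def:product}, a run of $\product{\wproc}{\silentaut}$ from the initial state to a final state is a sequence of net transitions $\transition_0,\ldots,\transition_n$ whose supporting state sequence $\tup{\marking_0,\autstate_0},\ldots,\tup{\marking_{n+1},\autstate_{n+1}}$ projects, in its first component, to a run $\run=\transition_0,\ldots,\transition_n$ of $\wproc$ from $\wproc.\imarking$ to a final marking, and, in its second component, to an accepting computation of $\silentaut$ on the label sequence $\wproc.\net.\labeling(\run)$. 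Since $\silentaut$ is deterministic, this second component is uniquely determined by the label sequence, so the first projection is a \emph{bijection} between the runs of $\product{\wproc}{\silentaut}$ ending in $\product{\wproc}{\silentaut}.\fstates$ and the runs $\run$ of $\wproc$ with $\wproc.\net.\labeling(\run)\in\lang{\silentaut}$. The clause of Definition~\ref{def:product} that copies $\probfun_1$ verbatim into each product step makes this bijection probability-preserving. Hence
\begin{equation*}
\outcomeprob(\product{\wproc}{\silentaut},\product{\wproc}{\silentaut}.\fstates)
= \sum_{\substack{\run \text{ run of }\wproc \\ \wproc.\net.\labeling(\run)\in\lang{\silentaut}}} \prob{\run}{\wproc}.
\end{equation*}

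Next I would rewrite the left-hand side. Expanding $\verifprob(\wproc,\aut)$ with Definition~\ref{def:trace-probability} gives the double sum $\sum_{\trace \in \lang{\aut}} \sum_{\run \in \runs{\trace}{\wproc}} \prob{\run}{\wproc}$, ranging over model traces $\trace$. Because every run of $\wproc$ induces a unique model trace (obtained by deleting its $\hidden$ labels), the families $\runs{\trace}{\wproc}$ partition the set of all runs of $\wproc$ according to the trace they induce, so the double sum flattens to a single sum over those runs whose induced trace lies in $\lang{\aut}$. Applying Lemma~\ref{lem:dfa-correspondence} to each such run, the condition ``the induced trace is accepted by $\aut$'' is equivalent to ``$\wproc.\net.\labeling(\run)\in\lang{\silentaut}$''. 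Thus the index set of this single sum coincides exactly with the one obtained for the right-hand side, and comparing the two identical sums of $\prob{\run}{\wproc}$ yields the claimed equality.

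The main obstacle I anticipate is making the correspondence in the first step airtight: I must verify that the product construction gives a genuine bijection, not merely a surjection, and that it preserves run probabilities step by step. Injectivity here rests entirely on the determinism of $\silentaut$ (which prevents a single run of $\wproc$ from spawning several accepting product runs), while surjectivity rests on checking that the pairing of $\run$'s marking sequence with the unique accepting $\silentaut$-computation satisfies every synchronisation clause of Definition~\ref{def:product} and therefore survives the minimality construction. Once the two index sets are seen to match, the partition-of-runs regrouping and the invocation of Lemma~\ref{lem:dfa-correspondence} are routine bookkeeping; note also that the incompleteness of the product system flagged in Remark~\ref{rem:incomplete-product} is harmless, since the output of $\outcomeprob$ is defined directly as a sum over runs reaching final states.
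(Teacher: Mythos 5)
Your proposal is correct and takes essentially the same route as the paper's proof: both flatten the sum over accepted model traces into a sum over runs, invoke Lemma~\ref{lem:dfa-correspondence} to replace acceptance of the induced trace by $\aut$ with acceptance of the label sequence by $\silentaut$, and then identify that set of runs, with probabilities preserved, with the runs of the product system via Definition~\ref{def:product}, so that the total equals $\outcomeprob(\product{\wproc}{\silentaut},\product{\wproc}{\silentaut}.\fstates)$. Your write-up is somewhat more explicit than the paper's at the step the paper states in one line---you verify the run correspondence is a genuine probability-preserving bijection (injectivity from determinism of $\silentaut$, surjectivity from the synchronisation clauses) and you note that the incompleteness flagged in Remark~\ref{rem:incomplete-product} is harmless---but the decomposition and the key lemma are identical.
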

\begin{proof}
First of all, we recall the definition of $\verifprob$, considering that the probability of a model trace of $\wproc$ is the sum of the probabilities of the runs of $\wproc$ inducing that trace: $\verifprob(\wproc,\aut) = \sum_{\trace \text{ model trace of } \wproc { s.t.~}\trace \in \lang{\aut}} \prob{\trace}{\wproc} = \sum_{\run \text{ run of } \wproc \text{ inducing } \trace { s.t.~}\trace \in \lang{\aut}} \prob{\run}{\wproc}$. By Lemma~\ref{lem:dfa-correspondence}, we have that the acceptance of a trace by $\aut$ can be lifted to acceptance of the label sequence induced by its corresponding runs over $\silentaut$. This yields:
$ \verifprob(\wproc,\aut) = \sum_{\run \text{ run of } \wproc { s.t.~}\wproc.\net.\labeling(\run) \in \lang{\aut}} \prob{\run}{\wproc}$. By Definition~\ref{def:product}, the runs mentioned in this sum are precisely those of $\product{\wproc}{\silentaut}$. By the same definition, we also know that the probabilities of such runs as assigned by $\product{\wproc}{\silentaut}$ are the same as those assigned by $\wproc$. Hence: $ \verifprob(\wproc,\aut) = \sum_{\run \text{ run of } \product{\wproc}{\silentaut}} \prob{\run}{\product{\wproc}{\silentaut}} = \outcomeprob(\product{\wproc}{\silentaut},\product{\wproc}{\silentaut}.\fstates)
$.
\end{proof}



\subsection{Computing Trace Probabilities}
\label{sec:computing-trace-probability}
As already pointed out, a key problem in stochastic conformance checking \cite{LABP21,BMMP21b} is that of computing the probability of a trace in a stochastic setting. Starting from Definition~\ref{def:trace-probability}, we cast such a problem in the context of \bpwp{s} as follows:
\begin{problem}[$\traceprob(\wproc,\trace)$]
~
\begin{compactdesc}
\item[Input:] \Bpwp $\wproc$, trace $\trace$ over $\tasks$;
\item[Output:] Probability $\prob{\trace}{\wproc}$. 
\end{compactdesc}
\end{problem}

To solve this problem, we straightforwardly encode a trace $\trace$ into a corresponding \DFA over $\tasks$ that accepts exactly $\trace$. We call this the \emph{trace \DFA} of $\sigma$:
\begin{definition}[Trace \DFA]
\label{def:trace-dfa}
Given a trace $\trace=a_0,\ldots,a_n$ over $\tasks$, its \emph{trace \DFA} $\traceaut{\trace}$ is the \DFA $\tup{\tasks,\autstates,\autistate,\autfstates,\autfun}$ over $\tasks$ such that:  
\begin{compactenum}
\item $\autstates = \set{\autstate_0,\ldots,\autstate_{n+1}}$ contains $n+1$ states;
\item $\autfstates = \set{\autstate_{n+1}}$;
\item for every $i \in \set{0,\ldots,n}$, $\autfun(\autstate_i,a_i)=\autstate_{i+1}$ (and nothing else is in $\autfun$). 
\end{compactenum}
\end{definition}

This immediately yields the following key result:

\begin{theorem}
For every \bpwp $\wproc$ and every trace $\trace$ over $\tasks^*$, we have that $\traceprob(\wproc,\trace)=\verifprob(\wproc,\traceaut{\trace})$.
\end{theorem}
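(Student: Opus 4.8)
The plan is to unfold both sides of the claimed equality to their definitions and observe that the specification probability of the trace \DFA collapses to a single-trace sum. The crucial observation I would establish first is that the trace \DFA $\traceaut{\trace}$ accepts exactly the singleton language $\lang{\traceaut{\trace}} = \set{\trace}$. This follows immediately from Definition~\ref{def:trace-dfa}: the automaton is a simple chain $\autstate_0 \xrightarrow{a_0} \autstate_1 \xrightarrow{a_1} \cdots \xrightarrow{a_n} \autstate_{n+1}$, where $\autstate_0$ is initial, $\autstate_{n+1}$ is the unique final state, and $\autfun$ contains no transitions other than the ones along this chain. Hence the only accepting run reads precisely the symbols $a_0,\ldots,a_n = \trace$, and any other input word either triggers an undefined transition or fails to terminate in $\autstate_{n+1}$.

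Next I would substitute $\aut = \traceaut{\trace}$ into the definition of the $\verifprob$ problem, obtaining
$$
\verifprob(\wproc,\traceaut{\trace}) = \sum_{\trace' \text{ model trace of } \wproc \text{ s.t.~} \trace' \in \lang{\traceaut{\trace}}} \prob{\trace'}{\wproc}.
$$
Since $\lang{\traceaut{\trace}} = \set{\trace}$, the index set of this sum contains at most the single element $\trace$, and that element is present exactly when $\trace$ is itself a model trace of $\wproc$.

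Finally I would handle the two cases separately. If $\trace$ is a model trace of $\wproc$, the sum reduces to the single summand $\prob{\trace}{\wproc}$, which by Definition~\ref{def:trace-probability} is exactly the output $\traceprob(\wproc,\trace)$. If $\trace$ is not a model trace of $\wproc$, the sum is empty and thus evaluates to $0$; on the other side, $\runs{\trace}{\wproc} = \emptyset$, so by Definition~\ref{def:trace-probability} we likewise have $\prob{\trace}{\wproc} = 0$. In both cases the two quantities coincide, establishing $\traceprob(\wproc,\trace) = \verifprob(\wproc,\traceaut{\trace})$. I expect no substantial obstacle here; the only point requiring a moment of care is the vacuous case, where one must confirm that a trace not producible by $\wproc$ is consistently assigned probability $0$ on both sides, rather than being silently dropped on one side only.
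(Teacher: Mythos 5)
Your proposal is correct and follows exactly the paper's argument: the paper's proof is the one-line observation that the claim is ``direct from the definitions of the problems, noticing that $\lang{\traceaut{\trace}} = \set{\trace}$.'' Your write-up simply makes explicit the singleton-language observation and the vacuous case (where $\trace$ is not a model trace), both of which the paper leaves implicit.
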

\begin{proof}
Direct from the definitions of the problems, noticing that $\lang{\traceaut{\trace}} = \set{\trace}$.
\end{proof}
\vspace{-3mm}

\newcommand{\astate}[1]{\autstate_{#1}}


\begin{figure}[t]
\centering
\resizebox{\linewidth}{!}{
\begin{tikzpicture}[node distance=8mm and 8mm]

\node[
  autstate,
  label=center:$\autstate_0$,
] (s0) {};

\node[
  autstate,
  below=of s0,
  label=center:$\autstate_1$
] (s1) {};

\node[
  autstate,
  below=of s1,
  label=center:$\autstate_2$
] (s2) {};

\node[
  autstate,
  below=of s2,
  label=center:$\autstate_3$
] (s3) {};

\node[
  autstate,
  below=of s3,
  label=center:$\autstate_4$
] (s4) {};

\node[
  autstate,
  below=of s4,
  label=center:$\autstate_5$,
  double
] (s5) {};

\draw[arc] ($(s0.north west)+(-3mm,3mm)$) -- (s0);

\draw[
  arc
] 
(s0)
edge node[right] {$\taskname{open}$}
(s1);

\draw[
  arc
] 
(s1)
edge node[right] {$\taskname{fin}$}
(s2);

\draw[
  arc
] 
(s2)
edge node[right] {$\taskname{acc}$}
(s3);

\draw[
  arc
] 
(s3)
edge node[right] {$\taskname{fin}$}
(s4);

\draw[
  arc
] 
(s4)
edge node[right] {$\taskname{rej}$}
(s5);

\node[below = 3mm of s5] (captionaut)
{(a) $\traceaut{\trace_{ar}}$};


\node[
  autstate,
  right=of s0,
  label=center:$\autstate_0$,
] (s0) {};

\node[
  autstate,
  below=of s0,
  label=center:$\autstate_1$
] (s1) {};

\node[
  autstate,
  below=of s1,
  label=center:$\autstate_2$
] (s2) {};

\node[
  autstate,
  below=of s2,
  label=center:$\autstate_3$
] (s3) {};

\node[
  autstate,
  below=of s3,
  label=center:$\autstate_4$
] (s4) {};

\node[
  autstate,
  below=of s4,
  label=center:$\autstate_5$,
  double
] (s5) {};

\draw[arc] ($(s0.north west)+(-3mm,3mm)$) -- (s0);

\draw[
  rightloop,
  silent
] 
(s0) 
edge node[right] {$\hidden$} 
(s0);

\draw[
  rightloop,
  silent
] 
(s1) 
edge node[right] {$\hidden$} 
(s1);

\draw[
  rightloop,
  silent
] 
(s2) 
edge node[right] {$\hidden$} 
(s2);

\draw[
  rightloop,
  silent
] 
(s3) 
edge node[right] {$\hidden$} 
(s3);

\draw[
  rightloop,
  silent
] 
(s4) 
edge node[right] {$\hidden$} 
(s4);

\draw[
  rightloop,
  silent
] 
(s5) 
edge node[right] {$\hidden$} 
(s5);

\draw[
  arc
] 
(s0)
edge node[right] {$\taskname{open}$}
(s1);

\draw[
  arc
] 
(s1)
edge node[right] {$\taskname{fin}$}
(s2);

\draw[
  arc
] 
(s2)
edge node[right] {$\taskname{acc}$}
(s3);

\draw[
  arc
] 
(s3)
edge node[right] {$\taskname{fin}$}
(s4);

\draw[
  arc
] 
(s4)
edge node[right] {$\taskname{rej}$}
(s5);

\node[right=4mm of captionaut] (captionaut)
{(b) $\silenttraceaut{\trace_{ar}}$};

\node[
  lprstate={\sstate{s}}{\astate{0}},
  right= 15mm of s0,
] (00) {};

\draw[arc] ($(00.north west)+(-3mm,3mm)$) -- (00.north west);

\node[
  lprstate={\sstate{1}}{\astate{1}},
  below = of 00,
] (11) {};

\draw
  (00)
  edge[arc]
  node[right] {1}
  node[left] {\taskname{open}}
  (11);

\node[
  lprstate={\sstate{2}}{\astate{1}},
  right = of 11,
] (21) {};

\draw
  (11)
  edge[arc,silent]
  node[below] {1}
  node[above] {\hidden}
  (21);

\draw[out=-90,in=-90]
  (21)
  edge[arc,silent]
  node[below] {\pval{s_1}}
  node[above] {\hidden}
  (11);

\node[
  lprstate={\sstate{3}}{\astate{1}},
  right = of 21,
] (31) {};

\draw
  (21)
  edge[arc,silent]
  node[below] {\pval{s_2}}
  node[above] {\hidden}
  (31);

\draw[out=90,in=90]
  (31)
  edge[arc,silent]
  node[below] {\pval{i_2}}
  node[above] {\hidden}
  (21);

\node[
  lprstate={\sstate{4}}{\astate{2}},
  below = of 31,
] (42) {};

\draw
  (31)
  edge[arc]
  node[right] {\pval{f}}
  node[left] {\taskname{fin}}
  (42);

\node[
  lprstate={\sstate{6}}{\astate{2}},
  right = of 42,
] (62) {};

\draw
  (42)
  edge[arc,silent]
  node[below] {\pval{s_4}}
  node[above] {\hidden}
  (62);

\node[
  lprstate={\sstate{7}}{\astate{3}},
  below = of 62,
] (73) {};

\draw
  (62)
  edge[arc]
  node[right] {1}
  node[left] {\taskname{acc}}
  (73);

\node[
  lprstate={\sstate{3}}{\astate{3}},
  right = of 73,
] (33) {};

\draw
  (73)
  edge[arc,silent]
  node[below] {\pval{i_3}}
  node[above] {\hidden}
  (33);

\node[
  lprstate={\sstate{2}}{\astate{3}},
  right = of 33,
] (23) {};

\draw[out=90,in=90]
  (33)
  edge[arc,silent]
  node[below] {\pval{i_2}}
  node[above] {\hidden}
  (23);

\draw
  (23)
  edge[arc,silent]
  node[below] {\pval{s_2}}
  node[above] {\hidden}
  (33);

\node[
  lprstate={\sstate{1}}{\astate{3}},
  right = of 23,
] (13) {};

\draw[out=-90,in=-90]
  (23)
  edge[arc,silent]
  node[below] {\pval{s_1}}
  node[above] {\hidden}
  (13);

\draw
  (13)
  edge[arc,silent]
  node[below] {1}
  node[above] {\hidden}
  (23);

\node[
  lprstate={\sstate{4}}{\astate{4}},
  below = of 33,
] (44) {};

\draw
  (33)
  edge[arc]
  node[right] {\pval{f}}
  node[left] {\taskname{fin}}
  (44);

\node[
  lprstate={\sstate{5}}{\astate{4}},
  right = of 44,
] (54) {};

\draw
  (44)
  edge[arc,silent]
  node[below] {$\pval{s_3}$}
  node[above] {\hidden}
  (54);

\node[
  lprstate={\sstate{5}}{\astate{5}},
  double,
  below = of 54,
] (55) {};

\draw
  (54)
  edge[arc]
  node[right] {1}
  node[left] {\taskname{rej}}
  (55);


\node[
  lprstate={\sstate{c}}{\astate{1}},
  deadend,
] at (s0-|62) (c1) {};

\draw
  (31)
  edge[arc,deadend]
  node[below right] {\pval{c_1}}
  node[above left] {\hidden}
  (c1);

\node[
  lprstate={\sstate{5}}{\astate{2}},
  deadend,
] at (s0-|33) (52) {};

\draw
  (42)
  edge[arc,deadend]
  node[below right] {\pval{s_3}}
  node[above left] {\hidden}
  (52);

\node[
  lprstate={\sstate{c}}{\astate{3}},
  deadend,
] at (42|-55) (c3) {};

\draw
  (33)
  edge[arc,deadend]
  node[below right] {\pval{c_1}}
  node[above left] {\hidden}
  (c3);

\draw[out=-150,in=90]
  (73)
  edge[arc,deadend]
  node[right] {\pval{c_2}}
  node[left] {\hidden}
  (c3);

\node[
  lprstate={\sstate{5}}{\astate{6}},
  deadend,
] at (c3-|73) (56) {};

\draw
  (44)
  edge[arc,deadend]
  node[below right] {\pval{s_4}}
  node[above left] {\hidden}
  (56);


\node
at (captionaut-|62)
{(c) product system $\product{\rg{\myproc}}{\silenttraceaut{\trace_{ar}}}$};

\end{tikzpicture}
}
\caption{(a) Trace \DFA and (b) silenced trace \DFA for trace $\trace=\taskname{open},\taskname{fin},\taskname{acc},\taskname{fin},\taskname{rej}$, and (c) product system with the reachability graph of Figure~\ref{fig:sample-rg}; the portion of the transition system that cannot reach the final state is greyed out.}
\label{fig:product}
\end{figure}

\begin{example}
Consider the trace $\trace_{ar} = \taskname{open},\taskname{fin},\taskname{acc},\taskname{fin},\taskname{rej}$ from Example~\ref{ex:traces}. We now compute the probability that $\myproc$ generates trace $\trace_{ar}$, where an order is filled, finalised, accepted, then modified, finalised again, and this second time rejected. Following the described technique, we first transform $\trace_{ar}$ into its trace \DFA $\traceaut{\trace_{ar}}$, and then further into its silenced trace \DFA $\silenttraceaut{\trace_{ar}}$. This is shown in Figures~\ref{fig:product}(a) and (b). We then compute the product system $\product{\rg{\myproc}}{\silenttraceaut{\trace_{ar}}}$ of $\silenttraceaut{\trace_{ar}}$  and $\rg{\myproc}$ (from Figure~\ref{fig:sample-rg}), obtaining the (incomplete, cf.~Remark~\ref{rem:incomplete-product}) stochastic transition system of Figure~\ref{fig:product}(b) - notice how silent transitions unfold in this transition system. Finally, we construct $\eqsys{\product{\rg{\myproc}}{\silenttraceaut{\trace_{ar}}}}{\set{\tup{7,5}}}$. Recalling that by definition $\pval{s_1}+\pval{s_2}=1$, we then get:
\begin{align*}
x_{55} &=1\\
x_{44} &= \pval{s_3} x_{54} = \pval{s_3} x_{55} = \pval{s_3}\\
x_{23} &= \pval{s_1} x_{13} + \pval{s_2} x_{33} = \pval{s_1} x_{23} + \pval{s_2} x_{33} = \frac{\pval{s_2}}{1-\pval{s_1}}x_{33}\\ 
x_{33} &= \pval{f} x_{44} + \pval{i_2} x_{23} = \pval{f} \pval{s_3} + \frac{\pval{i_2}\pval{s_2}}{1-\pval{s_1}}x_{33}
= 
\frac{\pval{f} \pval{s_3}}{1-\pval{i_2}}\\
x_{42} &= \pval{s_4} x_{62} = \pval{s_4} x_{73} = \pval{s_4}\pval{i_3}x_{33} =\frac{\pval{s_4}\pval{i_3}\pval{f} \pval{s_3}}{1-\pval{i_2}}\\
x_{31} &= \pval{i_2}x_{21} + \pval{f}x_{42} = 
\pval{i_2}x_{21} + \frac{\pval{f}\pval{s_4}\pval{i_3}\pval{f} \pval{s_3}}{1-\pval{i_2}}\\
x_{21} &= \pval{s_1}x_{11} + \pval{s_2}x_{31} = 
\pval{s_1}x_{11} + \pval{s_2}\pval{i_2}x_{21} + \frac{\pval{s_2}\pval{f}\pval{s_4}\pval{i_3}\pval{f} \pval{s_3}}{1-\pval{i_2}} = \\
&= \frac{\pval{s_1}}{1-\pval{s_2}\pval{i_2}}x_{11} + \frac{\pval{s_2}\pval{f}\pval{s_4}\pval{i_3}\pval{f} \pval{s_3}}{(1-\pval{i_2})(1-\pval{s_2}\pval{i_2})}\\
x_{00} &= x_{11} = x_{21} = \frac{\pval{s_1}}{1-\pval{s_2}\pval{i_2}}x_{00} + \frac{\pval{s_2}\pval{f}\pval{s_4}\pval{i_3}\pval{f} \pval{s_3}}{(1-\pval{i_2})(1-\pval{s_2}\pval{i_2})}
\\
&= \frac{\pval{s_2}\pval{f}\pval{s_4}\pval{i_3}\pval{f} \pval{s_3}}{(1-\pval{i_2})(1-\pval{s_2}\pval{i_2})}
\frac{1-\pval{s_2}\pval{i_2}}{\pval{s_2}(1-\pval{i_2})}
= \frac{\pval{f}\pval{s_4}\pval{i_3}\pval{f} \pval{s_3}}{(1-\pval{i_2})^2}
\end{align*}
%
which yields the solution to the $\traceprob(\myproc,\trace)$ problem. If all transition weights have the same value (as in Example~\ref{ex:outcome}), we get as result probability $\frac{1/3\cdot 1/2 \cdot 1/3 \cdot 1/3 \cdot 1/2}{(1-1/3)^2} = \frac{1}{48} \sim 0.02$.
\end{example}

  \newcommand{\pdm}{\mathcal{D}}
\newcommand{\cset}{\mathcal{C}}
\newcommand{\sset}[1]{\mathbb{S}_{#1}}
\newcommand{\scenario}{\mathcal{S}}
\newcommand{\cformula}[1]{\Phi_{#1}}

\newcommand{\mypdm}{\pdm_{\text{order}}}

\newcommand{\scompliance}{\textsc{s-complies}}

\newcommand{\iprob}[2]{\mathbb{P}_{#1}^{#2}}

\section{Stochastic Compliance with Probabilistic Declare}
\label{sec:delta}
We now employ the verification machinery from Section~\ref{sec:specification-probability} to deal with probabilistic temporal properties. In particular, we introduce \emph{stochastic compliance}, where the stochastic behaviour induced by a \bpwp is contrasted with a set of probabilistic temporal constraints, each expressing a qualitative specification paired with a condition on the expected probability that the specification is indeed satisfied. Since multiple probabilistic constraints can be given, combined conditions on the probability of their satisfaction/violation must be obtained. To systematically handle this, we rely on the \pdeclare language from \cite{MMPA20}, which extends the well-known \declare language \cite{PeSV07,MPVC10} with constraint probabilities. Before defining and solving our stochastic conformance problem, we briefly recall \pdeclare.

\subsection{Probabilistic \declare}

\declare is a constraint-based declarative language for modelling flexible processes \cite{PeSV07,MPVC10}. The acceptable behaviours supported by a process are implicitly characterised using \LTLf constraints: a \declare specification consists of a set of \LTLf constraints, which must be all  respected in a complete process execution. This comes with a \emph{crisp} interpretation of constraints: a complete trace satisfies a \declare specification if it satisfies \emph{every} constraints contained therein. For an overall introduction of \declare and corresponding key analysis and process mining tasks, the interested reader is referred to \cite{DiCM22}.

By exploiting a probabilistic version of \LTLf as underlying temporal logic \cite{MaMP20}, the crisp semantics of \declare has been lifted to an uncertain one in \cite{MMPA20}, opening up a full range of new tasks for (probabilistic) declarative process mining \cite{AMMP22}. In the resulting \pdeclare framework, each constraint comes with a probability condition identifying a set of probabilities for which a trace generated by the process satisfies that constraint. More precisely, the semantics is based on stochastic languages, which assign a probability mass to each trace in a (possibly infinite) set, as recalled in Definition~\ref{def:stochastic-language}. Specifically, a  stochastic language satisfies a probabilistic constraint if, considering only those traces in the language that satisfy the constraint, their overall probability mass satisfies the probability condition attached to the constraint. This makes \pdeclare a natural candidate to express probabilistic temporal properties over livelock-free \bpwp{s}, whose model traces with their probabilities indeed capture stochastic languages (cf.~Corollary~\ref{thm:lsp-stochastic-language}). 

In \pdeclare, probabilities are assigned to traces, compatibly with the semantics of \bpwp{s}. On the other hand, as recalled in Remark~\ref{rem:dfa-properties}, the task of computing the probability that a \bpwp meets a specification provided with a \DFA can be seamlessly used to capture the case where the specification is given in \LTLf. Hence, we do not need to introduce \LTLf in the context of the current article - we simply rely on informal presentation of constraints, formalizing them directly using \DFA{s}. For a thorough introduction to \LTLf and its automata-theoretic characterisation, see \cite{DeVa13,DDMM22}.

The main question for \pdeclare is how to characterise which stochastic languages satisfy \emph{a set} of probabilistic constraints, considering how the temporal and probability components of each constraint interact with those of the others. We recall from \cite{AMMP22} the necessary definitions to obtain such a characterisation.

\begin{definition}[Probabilistic constraint]
\label{def:probabilistic-constraint}
  A \emph{probabilistic constraint} over $\tasks$ is a triple $\tup{\varphi,\bowtie,p}$, where:
  \begin{inparaenum}[\it (i)]
\item $\varphi$ is an \LTLf formula over $\tasks$ representing the \emph{constraint formula};
\item ${\bowtie}\in\{=,\neq,\le,\ge,<,>\}$ is the \emph{constraint probability operator};
\item  $p$ is a rational value in $[0,1]$ representing the \emph{constraint probability}.
\end{inparaenum}
\end{definition}

\begin{definition}[\pdeclare specification]
\label{def:probdeclare-specification}
A \emph{\pdeclare specification} is a set $\pdm$ of probabilistic constraints over $\tasks$.
\end{definition}

Differently from \cite{MMPA20,AMMP22}, we do not single out crisp constraints from genuinely probabilistic constraints. This is just for simplicity of presentation, recalling that a crisp constraint $\varphi$ can be represented as the corresponding probabilistic constraint $\tup{\varphi,=,1}$.

\begin{definition}[\pdeclare satisfaction relation]
\label{def:probdeclare-satisfaction}
A stochastic language $\Psi$ \emph{satisfies} a \pdeclare specification $\pdm$ if for every probabilistic constraint $\tup{\varphi,\bowtie,p} \in \pdm$, we have that:
$$
\sum_{\trace \in \tasks^* \text{ s.t. } \trace \in \lang{\aut_\varphi}} \Psi(\trace) \bowtie p
$$
where $\aut_\varphi$ is the (minimal) \DFA capturing the language of $\varphi$.
\end{definition}

\begin{example}
\label{ex:pdeclare}
Consider the order-to-cash process of our running example. We introduce a \pdeclare specification $\mypdm$ containing three probabilistic constraints:
\begin{compactitem}[$\bullet$]
\item $\Phi_{pr}$ is a crisp constraint expressing that the \emph{not coexistence} between \taskname{pay} and \taskname{ack reject} (namely that \taskname{pay} and \taskname{ack reject} cannot both occur in the same trace) must hold with probability 1;
\item $\Phi_{op}$ is a probabilistic constraint expressing that the \emph{response} from \taskname{open} to \taskname{pay} (namely that whenever \taskname{open} occurs then \taskname{pay} later occurs as well) must hold with a probability $\geq \frac{1}{20}$;
\item $\Phi_{or}$ is a probabilistic constraint expressing that the \emph{response} from \taskname{open} to \taskname{ack reject} (namely that whenever \taskname{open} occurs then \taskname{ack reject} later occurs as well) must hold with a probability $\leq \frac{1}{4}$.
\end{compactitem}
\end{example}

\subsection{Checking Stochastic Compliance}
To check whether a livelock-free \bpwp stochastically complies with a \pdeclare specification, we simply verify whether the stochastic language induced by the \pwp indeed satisfies all probabilistic constraints with their probability conditions.

\begin{problem}[$\scompliance(\wproc,\pdm)$]
~
\begin{compactdesc}
\item[Input:] livelock-free \bpwp $\wproc$, \pdeclare specification $\pdm$;
\item[Output:] \emph{yes} if the induced stochastic language of $\wproc$ (as per Definition~\ref{def:induced-stochastic-language}) satisfies $\pdm$ in the sense of Definition~\ref{def:probdeclare-satisfaction}, \emph{no} otherwise.
\end{compactdesc}
\end{problem}

We can solve this problem through iterated calls to the $\verifprob$ problem, using the \DFA of each probabilistic constraint, and checking whether the returned probability satisfies the probability condition associated to that constraint.

\begin{theorem}
\label{thm:semantic-compliance}
Consider a livelock-free \bpwp $\wproc$ and a \pdeclare specification $\pdm$. Then $\scompliance(\wproc,\pdm)$ returns yes if and only if for every probabilistic constraint $\tup{\varphi, \bowtie, p} \in \pdm$, we have that $\verifprob(\wproc,\aut_\varphi) \bowtie p$.
\end{theorem}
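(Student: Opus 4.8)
The plan is to unfold all the relevant definitions and observe that the two-sided equivalence decomposes into a per-constraint statement, at which point everything reduces to a single sum identity. By definition, $\scompliance(\wproc,\pdm)$ returns \emph{yes} exactly when the induced stochastic language $\mathcal{P}_{\wproc}$ of Definition~\ref{def:induced-stochastic-language} satisfies $\pdm$ in the sense of Definition~\ref{def:probdeclare-satisfaction}. Since $\wproc$ is livelock-free, Corollary~\ref{thm:lsp-stochastic-language} guarantees that $\mathcal{P}_{\wproc}$ is a genuine stochastic language, so the satisfaction relation is well-defined on it. Definition~\ref{def:probdeclare-satisfaction} is itself a conjunction ranging over the constraints of $\pdm$: $\mathcal{P}_{\wproc}$ satisfies $\pdm$ iff for every $\tup{\varphi,\bowtie,p} \in \pdm$ the quantity $\sum_{\trace \in \tasks^* \text{ s.t. } \trace \in \lang{\aut_\varphi}} \mathcal{P}_{\wproc}(\trace)$ stands in relation $\bowtie$ with $p$. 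Hence it suffices to prove, for each fixed constraint, that this sum coincides with $\verifprob(\wproc,\aut_\varphi)$.

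The heart of the argument is this sum identity. Recalling Definition~\ref{def:induced-stochastic-language}, $\mathcal{P}_{\wproc}(\trace)$ equals $\prob{\trace}{\wproc}$ when $\trace$ is a model trace of $\wproc$ and $0$ otherwise. Therefore every summand indexed by a word $\trace \in \lang{\aut_\varphi}$ that is \emph{not} a model trace of $\wproc$ contributes $0$, and the sum over all of $\tasks^*$ collapses to a sum ranging only over model traces of $\wproc$ lying in $\lang{\aut_\varphi}$, on which $\mathcal{P}_{\wproc}(\trace) = \prob{\trace}{\wproc}$. This is precisely the defining expression of $\verifprob(\wproc,\aut_\varphi)$, so the two sums are equal. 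Substituting back, $\mathcal{P}_{\wproc}$ satisfies $\tup{\varphi,\bowtie,p}$ iff $\verifprob(\wproc,\aut_\varphi) \bowtie p$.

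Conjoining over all constraints of $\pdm$ then yields the claimed equivalence. I expect no genuine obstacle in the calculation itself; the only point demanding care is the appeal to livelock-freedom, which is exactly what makes $\mathcal{P}_{\wproc}$ a total stochastic language (summing to $1$) and hence a legitimate argument to the satisfaction relation of Definition~\ref{def:probdeclare-satisfaction} --- without it the ``incomplete'' mass absorbed by livelocks (cf.~Remark~\ref{rem:livelock-stochastic-language}) would break the reduction. Operationally, one would also note that each constraint's $\verifprob(\wproc,\aut_\varphi)$ is computable via the reduction of Theorem~\ref{thm:correspondence}, so the equivalence additionally furnishes an effective decision procedure for $\scompliance$ through iterated specification-probability computations.
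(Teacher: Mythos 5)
Your proof is correct and follows essentially the same route as the paper's: both unfold the definitions of $\scompliance$ and of the induced stochastic language, observe that the sum in Definition~\ref{def:probdeclare-satisfaction} collapses onto the model traces of $\wproc$ lying in $\lang{\aut_\varphi}$ (since $\mathcal{P}_{\wproc}$ vanishes elsewhere), and recognize the resulting expression as the defining output of $\verifprob(\wproc,\aut_\varphi)$, then conjoin over the constraints. Your added remarks on livelock-freedom (via Corollary~\ref{thm:lsp-stochastic-language}) and on effectiveness via Theorem~\ref{thm:correspondence} are sound but not needed beyond what the paper states.
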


\begin{proof}
Fix a probabilistic constraint $\tup{\varphi, \bowtie, p} \in \pdm$. By Definitions \ref{def:probdeclare-satisfaction} and \ref{def:induced-stochastic-language}, we have that 
$$
\sum_{\trace \in \tasks^* \text{ s.t. } \trace \in \lang{\aut_\varphi}} \prob{\trace}{\wproc} = 
\sum_{\trace \text{ model trace of } \wproc \text{ s.t. } \trace \in \lang{\aut_\varphi}} \prob{\trace} {\wproc}
$$
Indeed, notice that the stochastic language induced by $\wproc$ is so that $\prob{\trace}{\wproc}>0$ only if $\trace$ is a model trace of $\wproc$. This sum is, in turn, equal to $\prob{\aut_\varphi}{\wproc}$, which, by definition, is the output of $\verifprob(\wproc,\aut_\varphi)$.
\end{proof}

\begin{example}
Consider $\runningwproc$ under the assumption that all weights are equal (i.e., the working hypothesis of Example~\ref{ex:outcome}). We now show that under this assumption $\runningwproc$ stochastically complies with the \pdeclare specification $\mypdm$ from Example~\ref{ex:pdeclare}. We can do so by simply applying Theorem~\ref{thm:semantic-compliance}. To witness semantic compliance, we notice the following:
\begin{compactitem}
\item Every model trace of $\runningwproc$  contains either \taskname{pay}, or \taskname{ack reject}, or none of the two, and thus satisfies the \emph{not coexistence} between \taskname{pay} and \taskname{ack reject}; by \ref{thm:total-absorption-one}, we know that the sum of the probability values of all model traces of \runningwproc is indeed 1 (as requested by the probabilistic constraint $\Phi_{pr} \in \mypdm$).
\item Considering the structure of $\rg{\runningwproc}$, we have that model traces satisfying the \emph{response} from \taskname{open} to \taskname{pay} are simply those that culminate in final marking $[\pname{h}]$. By Example~\ref{ex:outcome}, we know that the overall probability of such model traces is $\frac{1}{11}$, which is indeed $\geq \frac{1}{20}$ (as requested by the probabilistic constraint $\Phi_{op} \in \mypdm$); 
\item Considering the structure of $\rg{\runningwproc}$, we have that model traces satisfying the \emph{response} from \taskname{open} to \taskname{reject} are simply those that culminate in final marking $[\pname{r}]$. By Example~\ref{ex:outcome}, we know that the overall probability of such model traces is $\frac{6}{11}$, which is indeed $\leq \frac{1}{4}$ (as requested by the probabilistic constraint $\Phi_{or} \in \mypdm$). 
\end{compactitem}

\end{example}

In case $\scompliance$ returns a negative answer, the standard Earth Mover's Distance can be used to measure the deviation between the probability distribution by the livelock-free \bpwp under scrutiny, and what expected by the probabilistic constraint in the reference \pdeclare specification. To do so, one has to follow the approach in \cite{AMMP22}, computing the different \emph{scenarios} induced by the \pdeclare specification and their probabilities. This realises a form of \emph{stochastic delta analysis}.

    \section{A Practical Application \& its Evaluation}
\label{sec:evaluation}
    To illustrate the practical applicability and feasibility of the verification problem, we implemented it, and used this implementation in a previously incomplete stochastic conformance measure. 
    
    \subsection{Implementation}
    \label{sec:implementation}
        Our solutions to the trace and temporal properties probability have been implemented as plug-ins of the ProM framework~\cite{DBLP:conf/apn/DongenMVWA05}, as part of the \verb=StochasticLabelledPetriNets= package.
        This package provides stochastic PNPs as first-class objects in ProM, an import plug-in, an export plug-in and a visualiser.
        The package further contains an implementation of the trace probability approach (cf.~Section~\ref{sec:computing-trace-probability}): given a trace, this approach constructs an explicit product stochastic transition system, after which a linear problem is solved using Lpsolve~\cite{Berkelaar2004} to obtain the trace probability.
        Besides a trace, the implementation provides the machinery to accept any type of DFA.
        
        This code is intended for developers, however we used it to complete an existing method: Unit Earth Movers' Stochastic Conformance (uEMSC)~\cite{LeeSA19}\footnote{Notice that uEMSC is then complete for bounded models only.}.
        This stochastic conformance checking measure expresses the difference between a stochastic process model and an event log, which are for our purposes seen as two distributions of traces.
        Intuitively, the measure indicates the amount of earth (i.e. probability mass) that needs to be moved to transform one of the distributions into the other.
        From~\cite{LeeSA19}, uEMSC of a log $L$ and a stochastic process model $\wproc$ is $1 - \sum_{\trace \in L} \max(L(\trace) - \prob{\trace}{\wproc}, 0)$, in which $L(\trace)$ denotes the likelihood of trace $\trace$ in $L$.
        Before, approximations of $\prob{\trace}{\wproc}$ were necessary to compute this otherwise rather simple function \cite{LABP21,BMMP21b}.
        
    \subsection{Applicability and Feasibility}
        
\begin{figure}
            \centering
            \begin{tikzpicture}[font=\footnotesize]
                \def\dx{1.3cm}
                \def\dy{0.15cm}
                \node (log) {log};
                \node (discovery) [right=\dx of log, align=center, draw, rounded corners] {discovery};
                \node (stochdiscovery) [right=\dx of discovery, align=center, draw, rounded corners] {stochastic\\discovery};
                \node (measure) [right=\dx of stochdiscovery, draw, align=center, rounded corners] {measure};
                
                \draw [->] (log) to (discovery);
                \draw [->] (discovery) to node[above] {model\strut} (stochdiscovery);
                \draw [->] (log) to ++(0,-0.65cm) -| node[above, pos=0.25, yshift=-1mm] {resample} (stochdiscovery);
                \draw [->] (stochdiscovery) to node[above, align=center] {stochastic\strut} node[below] {model} (measure);
                \draw [->] (log) |- ++(1cm,0.65cm) -| (measure);
            \end{tikzpicture}
            \caption{Feasibility evaluation set-up.}
            \label{fig:eval:setup}
        \end{figure}
        
        \begin{table}
            \scriptsize
            \caption{Details of the set-up of the feasibility evaluation.}
            \label{tbl:eva}
            \begin{minipage}[t]{0.5\linewidth}
                \begin{subtable}[t]{\linewidth}
                    \caption{Logs.}
                    \begin{tabularx}{\linewidth}{@{}Xl}
                        \toprule
                        BPIC11~\cite{bpic2011} & BPIC11\\
                        BPIC12-approvals~\cite{bpic2012} & BPIC12-a\\
                        BPIC12-approvals and work~\cite{bpic2012} & BPIC12-aw\\
                        BPIC13-closed problems~\cite{bpic2013:closed} & BPIC13-cp\\
                        BPIC13-incidents~\cite{bpic2013:incidents} & BPIC13-i\\
                        BPIC13-open problems~\cite{bpic2013:open} & BPIC13-op\\
                        BPIC17-offers~\cite{bpic17:offer} & BPIC17-o\\
                        BPIC18-parcel document~\cite{bpic2018} & BPIC18-6\\                    
                        BPIC20-domestic declarations~\cite{bpic2020} & BPIC20-dd\\
                        BPIC20-international declarations~\cite{bpic2020} & BPIC20-id\\
                        BPIC20-prepaid travel costs~\cite{bpic2020} & BPIC20-pt\\
                        BPIC20-request for payment~\cite{bpic2020} & BPIC20-rf\\
                        MIMIC services~\cite{johnson2016mimic}\ref{app:logs} & mimic-serv\\
                        MIMIC transfers~\cite{johnson2016mimic}\ref{app:logs} & mimic-trans\\
                        Road fines~\cite{roadfines} & roadfines\\
                        Sepsis cases~\cite{sepsis} & sepsis\\
                        \bottomrule
                    \end{tabularx}
                \end{subtable}
            \end{minipage}%
            \begin{minipage}[t]{0.5\linewidth}
                \begin{subtable}[t]{\linewidth}
                    \caption{Discovery techniques for control flow.}
                    \begin{tabularx}{\linewidth}{Xl@{}}
                    \toprule
                        Inductive Miner - infrequent (0.8)~\cite{DBLP:conf/bpm/LeemansFA13} transformed to a Petri net & IMf\\
                        Directly Follows Model Miner (0.8)~\cite{DBLP:conf/icpm/LeemansPW19} transformed to a Petri net & DFM\\
                        Flower Model -- a model consisting of any behaviour of the observed alphabet & FM\\
                    \bottomrule
                    \end{tabularx}
                \end{subtable}\\
                \begin{subtable}[t]{\linewidth}
                    \caption{Stochastic discovery techniques.}
                    \begin{tabularx}{\linewidth}{Xl@{}}
                        \toprule
                        Frequency-based estimator~\cite{DBLP:conf/icpm/BurkeLW20} & FBE\\
                        Bill Clinton estimator~\cite{DBLP:conf/icpm/BurkeLW20} & BCE\\
                        Alignment-based estimator~\cite{DBLP:conf/icpm/BurkeLW20} & ABE\\
                        \bottomrule
                    \end{tabularx}
                \end{subtable}\\
                \begin{subtable}[t]{\linewidth}
                    \caption{Measures.}
                    \begin{tabularx}{\linewidth}{Xl@{}}
                    \toprule
                        Number of traces, events, activities in log & log\\
                        Number of transitions, silent transitions in control-flow model & transitions\\
                        Earth Movers' Conformance Checking~\cite{LABP21} model sample up to 50\,000 traces & EMSC\\
                        Unit Earth Movers' Conformance Checking (Sec.~\ref{sec:implementation}) & uEMSC\\
                    \bottomrule
                    \end{tabularx}
                \end{subtable}
            \end{minipage}
        \end{table}
        
        In this section, we empirically show the applicability of our solution of the verification problem.
        We do this by using our solution as a trace-probability predictor method in the uEMSC method, which was previously incomplete.
        Unit Earth Movers' Stochastic Conformance (uEMSC) takes an event log and a stochastic process model (in our cases, a labelled stochastic Petri net).
        uEMSC considers both the log and the stochastic model as a distribution over traces.
        Then, it considers the earth movers' distance of these distributions: how much effort does it take to transform one distribution into the other, in terms of moved probability mass and the distance over which it has to be moved.
        The uEMSC measure considers a unit distance between traces -- traces are either equivalent or different -- and thus is a measure of the agreement in stochastic languages of log and model.
        
        In uEMSC, our solution of the verification problem was included to assess the probability of a given trace in the model.
        In this section, we illustrate the practical applicability of this implementation, using an experiment, summarised in Figure~\ref{fig:eval:setup} and detailed in Table~\ref{tbl:eva}.
        We take a number of publicly available real-life event logs, and apply a control flow discovery technique to obtain a process model (guaranteed bounded and without stochastic perspective).
        To avoid measuring on training data, the event log is resampled 10 times.
        To each resampled log and the discovered control-flow model, we apply a stochastic process discovery technique to obtain a stochastic process model.
        Then, the quality of this model is assessed with respect to the original log using several measures.

        Figure~\ref{fig:eva} shows a summary of the results; Table~\ref{tbl:results} in~\ref{app:results} shows the results in detail.

        
\begin{figure}[t]
            \centering
            \begin{subfigure}[b]{0.5\linewidth}
                \begin{tikzpicture}
                    \begin{axis}[
                        xmin = 0, xmax = 1,
                        ymin = 0, ymax = 1,
                        xlabel = uEMSC,
                        ylabel = EMSC,
                        legend pos=north west,
                        width=\linewidth,
                        mark=o,
                        point meta=explicit symbolic,
                        scatter/classes={
                            IMf={blue},%
                            DFM={red},%
                            FM={black}}
                    ]
    
                        \addplot [scatter] table [x=uemsc, y=emsc, meta=algorithm, only marks] {results-measures.csv};

                        \addplot[mark=none, black, dashed] {x};
                        
                    \end{axis}
                \end{tikzpicture}
                \caption{Measured values.}
                \label{fig:eva:measures}
            \end{subfigure}%
            \begin{subfigure}[b]{0.5\linewidth}
                \begin{tikzpicture}
                    \begin{loglogaxis}[
                        xmin=5, xmax=20000,
                        ymin=5, ymax=100000000,
                        xlabel = uEMSC run time (ms),
                        ylabel = EMSC run time (ms),
                        legend pos=north west,
                        width=\linewidth,
                        mark=o,
                        point meta=explicit symbolic,
                        scatter/classes={
                            IMf={blue},%
                            DFM={red},%
                            FM={black}}
                    ]
    
                        \addplot [scatter] table [x=uemsc, y=emsc, meta=algorithm, only marks] {times-measures.csv};
                        
                        \addplot[mark=none, black, domain=5:20000, dashed] {x};
                        
                    \end{loglogaxis}
                \end{tikzpicture}
                \caption{Run time.}
                \label{fig:eva:times}
            \end{subfigure}\\
            \definecolor{colfreq}{RGB}{27,158,119}
            \definecolor{colbicl}{RGB}{217,95,2}
            \definecolor{colalig}{RGB}{117,112,179}
            \begin{subfigure}[b]{0.5\linewidth}
                \begin{tikzpicture}
                    \begin{axis}[
                        xmin = 0, xmax = 1,
                        ymin = 0, ymax = 1,
                        xlabel = uEMSC,
                        ylabel = EMSC,
                        legend pos=north west,
                        width=\linewidth,
                        mark=o,
                        point meta=explicit symbolic,
                        scatter/classes={
                            freq={colfreq},%
                            bicl={colbicl},%
                            alig={colalig}}
                    ]
    
                        \addplot [scatter] table [x=uemsc, y=emsc, meta=stochastic-algorithm, only marks] {results-measures.csv};

                        \addplot[mark=none, black, dashed] {x};
                        
                    \end{axis}
                \end{tikzpicture}
                \caption{Measured values.}
                \label{fig:eva:measures2}
            \end{subfigure}%
            \begin{subfigure}[b]{0.5\linewidth}
                \begin{tikzpicture}
                    \begin{loglogaxis}[
                        xmin=5, xmax=20000,
                        ymin=5, ymax=100000000,
                        xlabel = uEMSC run time (ms),
                        ylabel = EMSC run time (ms),
                        legend pos=north west,
                        width=\linewidth,
                        mark=o,
                        point meta=explicit symbolic,
                        scatter/classes={
                            freq={colfreq},%
                            bicl={colbicl},%
                            alig={colalig}}
                    ]
    
                        \addplot [scatter] table [x=uemsc, y=emsc, meta=stochastic-algorithm, only marks] {times-measures.csv};
                        
                        \addplot[mark=none, black, domain=5:20000, dashed] {x};
                        
                    \end{loglogaxis}
                \end{tikzpicture}
                \caption{Run time.}
                \label{fig:eva:times2}
            \end{subfigure}
            \caption{Results of our feasibility experiment (only complete logs included). Dashed lines denote equality. Colours indicate \textcolor{blue}{IMf}, \textcolor{red}{DFM} and FM; or \textcolor{colfreq}{FBE}, \textcolor{colbicl}{BCE} and \textcolor{colalig}{ABE}.}
            \label{fig:eva}
        \end{figure}

        Both uEMSC and EMSC were feasible for the vast majority of log-model combinations of this experiment: as can be seen in Figure~\ref{fig:eva:times}, for most logs uEMSC's run time was at most a few seconds, while EMSC required up to a day.
            
        There are exceptions: BPIC11, for which uEMSC required up to half an hour or did not return a result in our main memory available (40GB).
        In particular for BPIC11 IMf, there were 345 silent transitions, leading to a too-large state space to be explored in the product stochastic transition system.
        EMSC takes in general much longer, with BPIC12-AW IMf FBE and BCE running out of memory.
            
        A notable case where uEMSC ran longer than EMSC is mimic-serv IMf FBE and mimic-serv IMf BCE, where uEMSC takes 5-6 seconds, while EMSC only needs a few miliseconds. 
        Figure~\ref{fig:eva:mimic-serv} in~\ref{app:results} shows this model.
        In this model, there is firstly a choice between on the one hand executing the visible transition NB with weight 7819 and on the other hand a large block of parallel behaviour, which starts with a silent transition of weight 1.
        EMSC samples only one trace from this model, with probability mass $\frac{7819}{7820}$.
        This leads to an answer quickly, whereas uEMSC performs an analysis on the large product stochastic transition system due to the 55 silent transitions, which also appear in concurrent settings.
        
        In Figure~\ref{fig:eva:measures}, we observe the trace-based nature of uEMSC vs. the event-based nature of EMSC: for a too-low EMSC value, uEMSC stays close to 0, as only events can be mapped but the probabilistic overlap in traces is negligible.
            
        An exception, visible in Figure~\ref{fig:eva:measures}, are some points where EMSC is high (over 0.8) but uEMSC is still 0; these points are from
        BPIC2018-6 DFM FBE, BPIC2018-6 DFM BCE, BPIC20-id DFM FBE and BPIC20-id DFM BCE.
        All of these involve the FBE and BCE stochastic discovery techniques, while ABE does not exhibit this pattern. A reason can be found in that FBE puts a weight of 1 for each silent transition and BCE puts, in these logs, a value close to 1 on the silent transitions.
        In DFM models, silent transitions are only used for termination, such that these transitions with a weight close to 1 are competing against visible transitions with a weight close to their execution frequency (which is in the 10-thousands).
        Thus, on the one hand uEMSC assigns a low score to any trace in these models, which gets rounded to 0 eventually.
        EMSC on the other hand takes, is much less sensitive to this single weight as it takes a sample of behaviour of the model.
        
    
    \section{Conclusion}
\label{sec:conclusion}
We have introduced (bounded) labelled stochastic processes, an extension of generalized stochastic Petri nets with (possibly duplicated) labels and silent transitions, and defined their finite-trace execution semantics. We have shown how different analysis techniques combining behaviors and their probabilities can be solved through a combination of techniques from Markov chains and their qualitative verification, suitably adapted and refined to deal with this challenging setting. All techniques are implemented in the \verb=StochasticLabelledPetriNets= plug-in of ProM. 

Our approach lazily handles silent transitions when intersecting the set of runs characterised by a labelled stochastic process (including silent steps), with the traces of interest (predicating on visible activities only). To specify traces of interest, we employ deterministic finite-state automata. By recalling that temporal logics on finite traces such as \LTLf and \LDLf have an automata-theoretic characterization based on deterministic finite-state automata \cite{DeVa13,DDMM22}, our approach provides, for the first time, an technique for model checking stochastic processes against such finite-trace \LTLf/\LDLf specifications.

When computing the probability of model traces, this is in contrast with ad-hoc probability computation techniques developed in the context of stochastic automata with $\varepsilon$-moves in natural language processing \cite{hanneforth2010epsilon}. There, silent steps are filtered out in an eager way, using ad-hoc algorithms. A natural follow-up of this work is then to comparatively assess our lazy approach with such eager algorithm. On a similar note, we intend to conduct a more extensive experimental evaluation, with the aim of assessing the usage of well-established state-of-the-art probabilistic model checkers (such as Prism\footnote{\url{https://www.prismmodelchecker.org}} and Storm\footnote{\url{https://www.stormchecker.org}}) to deal with our analysis tasks.

Notably, our approach seamlessly generalise to the case where weights/rates are not exact but come with a confidence interval: one just needs to transform the system of equations used for computing outcome probabilities into a corresponding system of inequalities. The next step is then to consider this and other extensions of the model, in particular looking into richer nets with nondeterministic transition effects and their mapping to Markov decision processes, nets with non-Markovian behaviors, and nets equipped with different distributions on durations/delays. This calls for a fine-grained analysis of the same problems studied here in such richer settings, where often analytical methods are out of reach and have to be replaced by approximate/numerical techniques or simulation.




\section*{Acknowledgments}
We thank Werner Nutt for interesting discussions on exponential random variables and their interplay.
    
    \bibliographystyle{splncs04}
    \bibliography{references}  

    \appendix
    
    \section{Evaluation Results}
    \label{app:results}
    
        {
            \tiny
            \begin{longtable}{lllrrrrrrrrr}
\caption{Results of the evaluation.}
\label{tbl:results}
\\
\toprule
Log & discovery & sto.dis. & \multicolumn{3}{l}{log}& \multicolumn{2}{l}{transitions}& \multicolumn{2}{l}{EMSC}& \multicolumn{2}{l}{uEMSC}\\
&&&traces&events&activities&&silent&&time (ms)&&time (ms)\\
\midrule
\endhead
\bottomrule
\endfoot
BPIC11&IMf&FBE&1143&150291&624&580&345&-&-&-&-\\*&&BCE&1143&150291&624&580&345&-&-&-&-\\*&&ABE&-&-&-&-&-&-&-&-&-\\*[0.1cm]
&DFM&FBE&1143&150291&624&4295&35&0.22&307\,002&0.0000&69\,541\\*&&BCE&1143&150291&624&4295&35&0.22&292\,145&0.0000&69\,110\\*&&ABE&-&-&-&-&-&-&-&-&-\\*[0.1cm]
&FM&FBE&1143&150291&624&627&3&0.19&11\,724\,000&0.0032&1\,809\,636\\*&&BCE&1143&150291&624&627&3&0.19&12\,497\,634&0.0032&1\,762\,811\\*&&ABE&1143&150291&624&627&3&0.14&104\,537&0.0003&1\,801\,286\\[0.1cm]
BPIC12-a&IMf&FBE&13087&60849&10&15&5&0.65&53&0.1870&42\\*&&BCE&13087&60849&10&15&5&0.65&63&0.1870&42\\*&&ABE&13087&60849&10&15&5&0.86&62&0.4801&51\\*[0.1cm]
&DFM&FBE&13087&60849&10&16&3&0.85&53&0.7295&42\\*&&BCE&13087&60849&10&16&3&0.85&57&0.7295&41\\*&&ABE&13087&60849&10&16&3&0.96&56&0.8327&55\\*[0.1cm]
&FM&FBE&13087&60849&10&13&3&0.33&8\,135&0.0016&59\\*&&BCE&13087&60849&10&13&3&0.33&6\,131&0.0016&46\\*&&ABE&13087&60849&10&13&3&0.53&13\,871&0.0017&42\\[0.1cm]
BPIC12-aw&IMf&FBE&13087&204638&16&44&28&-&-&0.0000&1\,154\\*&&BCE&13087&204638&16&44&28&-&-&0.0000&1\,084\\*&&ABE&13087&204638&16&44&28&-&-&0.0025&1\,794\\*[0.1cm]
&DFM&FBE&13087&204638&16&48&6&0.84&24\,529&0.0000&723\\*&&BCE&13087&204638&16&48&6&0.84&24\,714&0.0000&685\\*&&ABE&13087&204638&16&48&6&0.79&3\,249\,424&0.1395&732\\*[0.1cm]
&FM&FBE&13087&204638&16&19&3&-&-&0.0000&2\,547\\*&&BCE&13087&204638&16&19&3&-&-&0.0000&2\,612\\*&&ABE&13087&204638&16&19&3&-&-&0.0000&2\,591\\[0.1cm]
BPIC13-cp&IMf&FBE&1487&6660&7&15&9&0.51&25&0.0143&28\\*&&BCE&1487&6660&7&15&9&0.55&42&0.0142&17\\*&&ABE&1487&6660&7&15&9&0.76&130&0.4175&21\\*[0.1cm]
&DFM&FBE&1487&6660&7&13&1&0.95&26\,983&0.0002&15\\*&&BCE&1487&6660&7&13&1&0.95&24\,674&0.0002&18\\*&&ABE&1487&6660&7&13&1&0.91&137&0.6547&15\\*[0.1cm]
&FM&FBE&1487&6660&7&10&3&0.49&708&0.0517&59\\*&&BCE&1487&6660&7&10&3&0.49&671&0.0517&66\\*&&ABE&1487&6660&7&10&3&0.63&1\,223\,077&0.0491&76\\[0.1cm]
BPIC13-i&IMf&FBE&7554&65533&13&31&20&0.39&5\,605\,563&-0.0000&1\,523\\*&&BCE&7554&65533&13&31&20&0.46&16\,199\,640&0.0000&1\,621\\*&&ABE&7554&65533&13&31&20&0.79&8\,709\,087&0.0527&1\,871\\*[0.1cm]
&DFM&FBE&7554&65533&13&24&2&0.88&3\,931\,804&0.0442&253\\*&&BCE&7554&65533&13&24&2&0.88&4\,931\,212&0.0442&264\\*&&ABE&7554&65533&13&24&2&0.80&6\,735\,050&0.1500&257\\*[0.1cm]
&FM&FBE&7554&65533&13&16&3&0.34&1\,139\,508&0.0048&843\\*&&BCE&7554&65533&13&16&3&0.34&1\,782\,487&0.0048&875\\*&&ABE&7554&65533&13&16&3&0.67&25\,740\,542&0.0050&895\\[0.1cm]
BPIC13-op&IMf&FBE&819&2351&5&16&11&0.39&24&0.0056&37\\*&&BCE&819&2351&5&16&11&0.52&53&0.0709&30\\*&&ABE&819&2351&5&16&11&0.75&845&0.4269&29\\*[0.1cm]
&DFM&FBE&819&2351&5&17&5&0.70&185\,299&0.2242&11\\*&&BCE&819&2351&5&17&5&0.71&169\,801&0.2242&13\\*&&ABE&819&2351&5&17&5&0.92&173&0.7456&10\\*[0.1cm]
&FM&FBE&819&2351&5&8&3&0.70&127&0.3383&24\\*&&BCE&819&2351&5&8&3&0.70&145&0.3383&26\\*&&ABE&819&2351&5&8&3&0.73&209\,141&0.3440&33\\[0.1cm]
BPIC17-o&IMf&FBE&42995&193849&8&9&1&0.90&186&0.5390&170\\*&&BCE&42995&193849&8&9&1&0.90&185&0.5390&162\\*&&ABE&42995&193849&8&9&1&0.91&192&0.5811&162\\*[0.1cm]
&DFM&FBE&42995&193849&8&10&3&0.96&183&0.8419&168\\*&&BCE&42995&193849&8&10&3&0.96&181&0.8419&171\\*&&ABE&42995&193849&8&10&3&0.96&186&0.8419&168\\*[0.1cm]
&FM&FBE&42995&193849&8&11&3&0.31&971&0.0023&162\\*&&BCE&42995&193849&8&11&3&0.31&1\,693&0.0023&150\\*&&ABE&42995&193849&8&11&3&0.50&14\,026&0.0026&172\\[0.1cm]
BPIC18-6&IMf&FBE&14750&132963&10&24&15&0.55&1\,225\,049&0.0003&2\,394\\*&&BCE&14750&132963&10&24&15&0.59&14\,419\,026&0.0003&2\,556\\*&&ABE&14750&132963&10&24&15&0.83&5\,879\,827&0.0684&2\,729\\*[0.1cm]
&DFM&FBE&14750&132963&10&16&1&0.88&20\,288&0.0000&410\\*&&BCE&14750&132963&10&16&1&0.89&18\,087&0.0000&402\\*&&ABE&14750&132963&10&16&1&0.85&3\,547\,940&0.2025&21\,579\\*[0.1cm]
&FM&FBE&14750&132963&10&13&3&0.28&4\,074\,955&0.0008&1\,186\\*&&BCE&14750&132963&10&13&3&0.28&3\,589\,177&0.0008&1\,181\\*&&ABE&14750&132963&10&13&3&0.64&85\,640\,171&0.0009&1\,237\\[0.1cm]
BPIC20-dd&IMf&FBE&10500&56437&17&33&17&0.40&25\,320&0.0001&41\\*&&BCE&10500&56437&17&33&17&0.49&72\,573&0.0001&46\\*&&ABE&10500&56437&17&33&17&0.93&53&0.8049&41\\*[0.1cm]
&DFM&FBE&10500&56437&17&9&1&0.84&50&0.5807&35\\*&&BCE&10500&56437&17&9&1&0.84&50&0.5807&37\\*&&ABE&10500&56437&17&9&1&0.92&61&0.8079&36\\*[0.1cm]
&FM&FBE&10500&56437&17&20&3&0.29&16\,330&0.0019&60\\*&&BCE&10500&56437&17&20&3&0.29&15\,637&0.0019&64\\*&&ABE&10500&56437&17&20&3&0.55&169\,561&0.0012&65\\[0.1cm]
BPIC20-id&IMf&FBE&6449&72151&34&66&36&0.51&14\,278\,269&-0.0000&99\\*&&BCE&6449&72151&34&66&36&0.53&8\,095\,583&-0.0000&80\\*&&ABE&6449&72151&34&66&36&0.78&1\,494&0.2690&108\\*[0.1cm]
&DFM&FBE&6449&72151&34&51&3&0.86&22\,502&0.0000&86\\*&&BCE&6449&72151&34&51&3&0.87&21\,782&0.0000&75\\*&&ABE&6449&72151&34&51&3&0.82&41\,291&0.1830&96\\*[0.1cm]
&FM&FBE&6449&72151&34&37&3&0.18&10\,800\,412&0.0000&622\\*&&BCE&6449&72151&34&37&3&0.18&11\,029\,039&0.0000&618\\*&&ABE&6449&72151&34&37&3&0.42&19\,241\,284&0.0000&665\\[0.1cm]
BPIC20-pt&IMf&FBE&2099&18246&29&63&37&0.46&2\,710\,449&0.0000&30\\*&&BCE&2099&18246&29&63&37&0.46&2\,975\,697&0.0000&27\\*&&ABE&2099&18246&29&63&37&0.80&2\,770&0.2248&44\\*[0.1cm]
&DFM&FBE&2099&18246&29&31&2&0.67&61&0.1624&13\\*&&BCE&2099&18246&29&31&2&0.67&63&0.1624&12\\*&&ABE&2099&18246&29&31&2&0.89&62&0.3910&13\\*[0.1cm]
&FM&FBE&2099&18246&29&32&3&0.22&1\,151\,046&0.0006&109\\*&&BCE&2099&18246&29&32&3&0.22&1\,250\,116&0.0006&118\\*&&ABE&2099&18246&29&32&3&0.51&1\,637\,301&0.0002&127\\[0.1cm]
BPIC20-rf&IMf&FBE&6886&36796&19&29&13&0.37&55&-0.0000&26\\*&&BCE&6886&36796&19&29&13&0.37&80&-0.0000&22\\*&&ABE&6886&36796&19&29&13&0.89&39&0.6262&32\\*[0.1cm]
&DFM&FBE&6886&36796&19&12&1&0.83&37&0.5483&24\\*&&BCE&6886&36796&19&12&1&0.83&51&0.5483&21\\*&&ABE&6886&36796&19&12&1&0.92&44&0.8118&23\\*[0.1cm]
&FM&FBE&6886&36796&19&22&3&0.30&16\,568&0.0024&44\\*&&BCE&6886&36796&19&22&3&0.30&15\,904&0.0024&47\\*&&ABE&6886&36796&19&22&3&0.55&116\,739&0.0018&51\\[0.1cm]
mimic-serv&IMf&FBE&58926&73343&20&55&35&0.07&71&0.0038&5\,433\\*&&BCE&58926&73343&20&55&35&0.07&94&0.0038&5\,784\\*&&ABE&58926&73343&20&55&35&0.74&1\,697\,889&0.5318&9\,265\\*[0.1cm]
&DFM&FBE&58926&73343&20&19&9&0.83&65&0.7241&63\\*&&BCE&58926&73343&20&19&9&0.83&64&0.7241&57\\*&&ABE&58926&73343&20&19&9&0.86&66&0.8150&79\\*[0.1cm]
&FM&FBE&58926&73343&20&23&3&0.79&49\,588&0.5790&204\\*&&BCE&58926&73343&20&23&3&0.79&50\,026&0.5790&193\\*&&ABE&58926&73343&20&23&3&0.92&114&0.8051&185\\[0.1cm]
mimic-trans&IMf&FBE&3019&13411&18&37&19&0.21&10\,231\,861&0.0003&53\\*&&BCE&3019&13411&18&37&19&0.22&3\,675\,148&0.0003&60\\*&&ABE&3019&13411&18&37&19&0.66&74\,478&0.1681&66\\*[0.1cm]
&DFM&FBE&3019&13411&18&37&4&0.87&219\,124&0.5112&49\\*&&BCE&3019&13411&18&37&4&0.87&106\,450&0.5112&46\\*&&ABE&3019&13411&18&37&4&0.89&184&0.5877&52\\*[0.1cm]
&FM&FBE&3019&13411&18&21&3&0.39&620\,518&0.0100&145\\*&&BCE&3019&13411&18&21&3&0.39&593\,890&0.0100&145\\*&&ABE&3019&13411&18&21&3&0.59&1\,309\,639&0.0098&145\\[0.1cm]
Roadfines&IMf&FBE&150370&561470&11&24&13&0.54&449&0.0139&394\\*&&BCE&150370&561470&11&24&13&0.54&443&0.0104&361\\*&&ABE&150370&561470&11&24&13&0.77&422&0.2940&392\\*[0.1cm]
&DFM&FBE&150370&561470&11&9&3&0.88&427&0.6840&408\\*&&BCE&150370&561470&11&9&3&0.88&407&0.6840&388\\*&&ABE&150370&561470&11&9&3&0.95&419&0.8196&417\\*[0.1cm]
&FM&FBE&150370&561470&11&14&3&0.42&13\,637&0.0236&408\\*&&BCE&150370&561470&11&14&3&0.42&27\,969&0.0236&426\\*&&ABE&150370&561470&11&14&3&0.52&146\,820&0.0193&502\\[0.1cm]
Sepsis&IMf&FBE&1050&15214&16&29&15&0.51&8\,246\,699&-0.0000&406\\*&&BCE&1050&15214&16&29&15&0.60&4\,252\,316&0.0000&408\\*&&ABE&1050&15214&16&29&15&0.70&1\,926\,387&0.0003&420\\*[0.1cm]
&DFM&FBE&1050&15214&16&77&10&0.63&6\,517\,108&0.0000&188\\*&&BCE&1050&15214&16&77&10&0.64&7\,041\,945&0.0000&188\\*&&ABE&1050&15214&16&77&10&0.78&1\,253\,417&0.0451&191\\*[0.1cm]
&FM&FBE&1050&15214&16&19&3&0.21&18\,589\,789&0.0000&351\\*&&BCE&1050&15214&16&19&3&0.21&20\,377\,324&0.0000&366\\*&&ABE&1050&15214&16&19&3&0.50&8\,652\,211&0.0000&364\\[0.1cm]
\end{longtable}

        }
        
        \begin{figure}
            \includegraphics[width=\linewidth]{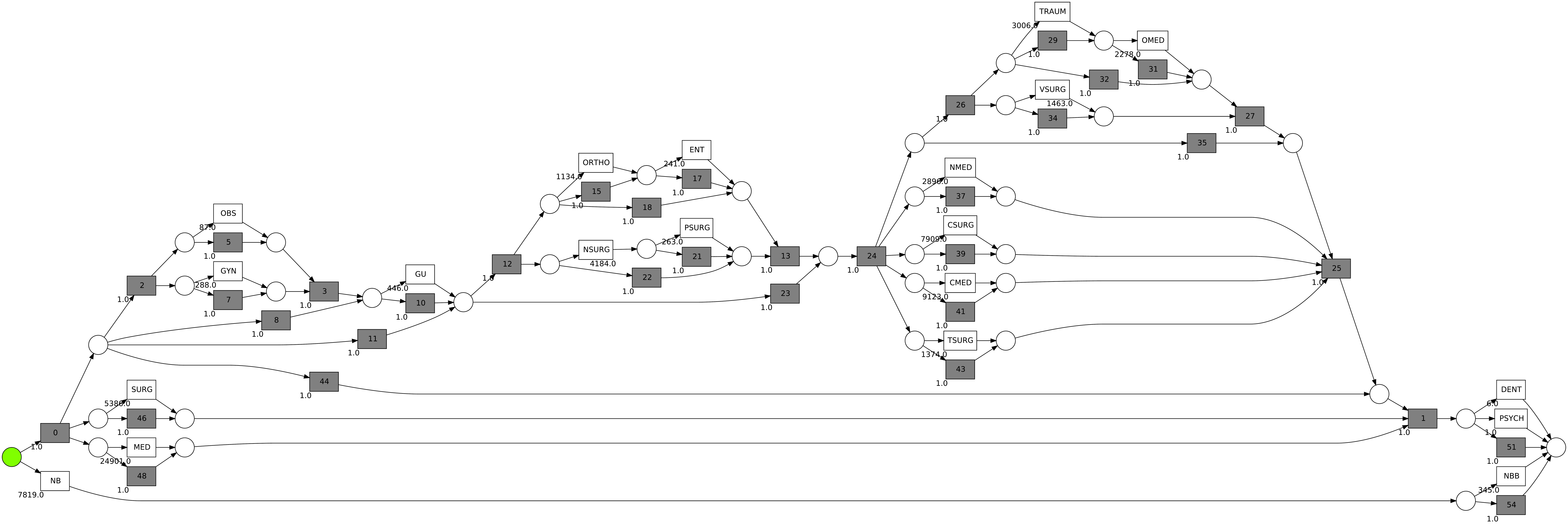}
            \caption{mimic-serv IMf BFE model.}
            \label{fig:eva:mimic-serv}
        \end{figure}

\end{document}